\documentclass[11pt,a4paper]{article}
\usepackage[margin=1in]{geometry}
\usepackage{amssymb,amsthm,mathtools}
\usepackage{enumitem}
\usepackage{fontawesome5}
\usepackage{algorithm,algpseudocode}
\usepackage{tikz}
\usetikzlibrary{arrows.meta,positioning,calc}
\usepackage[colorlinks=true,citecolor=blue,linkcolor=blue,urlcolor=blue]{hyperref}
\usepackage{booktabs}
\usepackage{multirow}
\usepackage{caption}
\usepackage{subcaption}
\usepackage{xcolor}
\usepackage{tabularx}
\usepackage{float}
\usepackage{microtype}

\theoremstyle{plain}
\newtheorem{theorem}{Theorem}[section]
\newtheorem{lemma}{Lemma}[section]
\newtheorem{proposition}{Proposition}[section]
\newtheorem{corollary}{Corollary}[section]
\newtheorem*{conjecture}{Conjecture}
\theoremstyle{definition}
\newtheorem{definition}{Definition}[section]
\newtheorem{example}{Example}[section]
\theoremstyle{remark}
\newtheorem{remark}{Remark}[section]
\newtheorem*{notation}{Notation}

\DeclareMathOperator{\Cl}{Cl}
\DeclareMathOperator{\Gal}{Gal}

\DeclareMathOperator{\Nm}{N}
\DeclareMathOperator{\ord}{ord}

\DeclareMathOperator{\disc}{disc}
\DeclareMathOperator{\Frob}{Frob}

\newcommand{\Z}{\mathbb{Z}}
\newcommand{\Q}{\mathbb{Q}}
\newcommand{\R}{\mathbb{R}}
\newcommand{\C}{\mathbb{C}}
\newcommand{\F}{\mathbb{F}}
\newcommand{\OK}{\mathcal{O}}

\newcommand{\abs}[1]{\lvert#1\rvert}

\newcommand{\tO}{\tilde{O}}

\title{Module Lattice Security (Part I):\\
{\large Unconditional Verification of Weber's Conjecture for $k \le 12$}}
\author{Ming-Xing Luo
\\
\footnotesize
School of Information Science and Technology, Southwest Jiaotong University, Chengdu 610031, China}

\begin{document}
\maketitle

\begin{abstract}
Weber's conjecture (1886) governs three aspects of lattice-based cryptography: the solvability of the Principal Ideal Problem, the freeness of modules over rings of integers, and the tightness of worst-case-to-average-case reductions in Ring-LWE (R-LWE) and Module-LWE (MLWE). Existing verifications for $k \ge 9$ rely on Generalized Riemann Hypothesis (GRH). In this paper, we present the first unconditional proof for $k \le 12$. Our method combines the Fukuda-Komatsu computational sieve, inductive structure of the cyclotomic $\Z_2$-tower, and Herbrand's theorem. 

\medskip
\noindent\textbf{Keywords:}
Weber's conjecture, cyclotomic fields, post-quantum cryptography, lattice-based cryptography, Herbrand's theorem, Iwasawa theory.
\end{abstract}

\section{Introduction}\label{sec-intro}

Shor's algorithm \cite{Shor97} solves integer factorization and discrete logarithms in polynomial time on quantum computers, rendering RSA, Diffie‑Hellman, and elliptic curve cryptography fundamentally insecure against quantum adversaries. As a response, the cryptographic community has devoted more than fifteen years to developing post‑quantum cryptography. Lattice‑based constructions have emerged as the clear dominant paradigm, offering an exceptional combination of performance, security, and versatility. In August 2024, NIST finalized four post‑quantum standards \cite{NIST-MLKEM,NIST-MLDSA,NIST-SLHDSA,NIST-FN-DSA}: ML‑KEM (FIPS 203), ML‑DSA (FIPS 204), SLH‑DSA (FIPS 205), and FN‑DSA (FIPS 206, expected finalization in 2025/2026). ML‑KEM, ML‑DSA, and FN‑DSA rely on lattice problems defined over $2^k$-th cyclotomic polynomial rings of the form $\Z[x]/(x^{2^{k-1}}+1)$. These rings enable compact key sizes, extremely efficient arithmetic via the Number Theoretic Transform \cite{LPR13,Peikert16}, and rigorous provable security reductions from worst‑case lattice problems \cite{LPR10,Regev05,Peikert16}.

The security and correctness of all these lattice‑based cryptosystems depend on arithmetic properties of underlying cyclotomic rings. One of the most fundamental and long‑standing open questions is Weber's conjecture. 

\begin{conjecture}[Weber, 1886 \cite{Weber1886}]\label{conj:weber}
For every integer $k \ge 1$, the maximal real subfield $K_k^{+} = \Q(\zeta_{2^k} + \zeta_{2^k}^{-1})$ has class number $h_k^{+} = 1$.
\end{conjecture}

Its resolution can affect lattice‑based cryptography in three ways: First, for Principal Ideal Problem (PIP), if $h_k^+ = 1$, every ideal in $\OK_{K_k^+}$ is principal, and quantum algorithms \cite{CDPR16,BiasseSong16} might solve PIP in polynomial time, reducing the problem to the Short Generator Problem (SGP) \cite{CGS14,EHKS14,CDW17}. Second, a finitely generated torsion‑free module over a Dedekind domain is free if and only if its Steinitz class is trivial. So, when $h_k^+=1$, every module over $\OK_{K_k^+}$ is free, used in the security proofs of ML‑KEM and ML‑DSA \cite{LS15,ADPS16,DKL+18}. Third, the reduction of Lyubashevsky, Peikert, and Regev \cite{LPR10,LPR13}, in cyclotomic fields with class number $1$, can simplify the noise analysis \cite{PRS17}.

\begin{table}[t]
\centering
\caption{History of Weber's conjecture. Conditional means the result dependent on Generalized Riemann Hypothesis (GRH).}
\label{tablehistory}
\begin{tabular}{cclc}
\toprule
$k$ & $[K_k^+:\Q]$ & \textbf{References} & \textbf{GRH-free?}\\
\midrule
$\le 5$ & $\le 8$ & Weber \cite{Weber1886}, Bauer \cite{Bauer69} & Yes\\
$6$ & $16$ & van der Linden \cite{vdL82} & Yes\\
$7$ & $32$ & Fukuda-Komatsu \cite{FK09} (also Cohn-Llorente) & Yes\\
$8$ & $64$ & Miller \cite{Miller14} & Yes\\
$9$ & $128$ & Fukuda-Komatsu \cite{FK09} & Conditional\\
$10$ & $256$ & Fukuda-Komatsu \cite{FK11} & Conditional\\
$11$ & $512$ & Fukuda-Komatsu \cite{FK11} & Conditional\\
$12$ & $1024$ & Fukuda-Komatsu \cite{FK11} & Conditional\\
$\le 12$ & $\le 1024$ & \textbf{This paper} & \textbf{Yes}\\
\bottomrule
\end{tabular}
\end{table}

The progress on Weber's conjecture is summarized in Table \ref{tablehistory}. For $k \le 5$, the result is classical \cite{Weber1886,Bauer69,Masley76}. Van der Linden \cite{vdL82} treated $k=6$ using an explicit computation of the regulator and analytic class number formula. Miller \cite{Miller14} proved $k \le 8$ unconditionally by computing full unit group of $K_8^+$  and applying Sinnott's index formula \cite{Sinnott78}. For $k \ge 9$, all previous verifications relied on Generalized Riemann Hypothesis (GRH).The main difficulty is that the Minkowski bound grows exponentially in the degree without GRH \cite{Wash97}. Instead, with GRH, Fukuda and Komatsu \cite{FK09,FK11} extended the verification to $k \le 12$ using Bach's bound \cite{Bach90}, and  Schoof \cite{Schoof98,Schoof03} provided independent verifications.

The Iwasawa perspective on Weber's conjecture was developed by Iwasawa \cite{Iwasawa59,Iwasawa73}, and extended by Ferrero and Washington \cite{FW79} and Greenberg \cite{Greenberg76,Greenberg01}. Ozaki and Taya \cite{OzakiTaya97} proved $\lambda = 0$ for certain families of real quadratic fields. Kraft and Schoof \cite{Kraft-Schoof93} obtained related results for $\Z_2$-extensions. Thaine \cite{Thaine88} proved that cyclotomic units provide explicit annihilators of class groups \cite{Kolyvagin90}. Rubin \cite{Rubin87,Rubin00} applied Euler systems for abelian extensions of $\Q$. 

The connection between ideal lattice structure and cryptographic security was established by Micciancio \cite{Micciancio07} and Lyubashevsky, Peikert, and Regev \cite{LPR10,Peikert16}.  The quantum attacks of Cramer et al. \cite{CDPR16}, Biasse and Song \cite{BiasseSong16}, and Campbell et al. \cite{CGS14} all exploit the algebraic structure of cyclotomic fields. Felderhoff et al. \cite{FPSW23} showed Ideal‑SVP remains hard for small‑norm uniform prime ideals. Allombert, Pellet‑Mary, and van Woerden \cite{APvW25} gave a polynomial‑time attack on rank‑2 module‑LIP for fields with a real embedding. Ducas, Espitau, and Postlethwaite \cite{DEP25} provided concrete predictions for module‑lattice reduction in cyclotomic fields.

\textbf{Contributions:} We present the first unconditional proof of $h_k^+ = 1$ for $k \le 12$. Our method proceeds in three sequential stages:
\begin{enumerate}[label=\textbf{Stage \arabic*:},leftmargin=*,nosep]
\item \textit{Small-prime elimination (Fukuda-Komatsu sieve).}
Using the Wieferich criterion, we can eliminate all primes $\ell < 10^9$ as possible divisors of $h_k^+$ for $k \le 12$. We establish any surviving prime should satisfy the strong congruence condition $\ell \equiv \pm 1 \pmod{2^{k-1}}$ \cite{FK09,FK11}.

\item \textit{Eigenspace pruning (norm-coherence argument).}
We exploit the tower structure $K_8^+ \subset K_9^+ \subset \cdots \subset K_{12}^+$ and the known result $h_8^+ = 1$ to prove inductively any $\ell$-torsion in $\Cl(K_k^+)$ is concentrated in eigenspaces corresponding to full-order Galois characters. 

\item \textit{Finite bounding (Herbrand's theorem).}
For each surviving full-order character $\chi$, we apply Herbrand's theorem for any prime $\ell$ supporting $\chi$-torsion to divide the norm of an explicit generalized Bernoulli number. 
\end{enumerate}

After Stage 3, we obtain a finite (possibly empty) set $S_k$ of candidate primes. For each $\ell \in S_k$, we apply the Wieferich test from Stage 1 to verify $\ell \nmid h_k^+$. If all tests pass, $h_k^+ = 1$ unconditionally. After applying Galois-orbit symmetry, the entire computation reduces to factoring one integer of at most $143$ decimal digits. We finally filter its prime factors using congruence $\ell \equiv \pm 1 \pmod{512}$; and run at most $10$ modular exponentiations. Note integers of $143$ digits can be factored with Elliptic Curve Method \cite{Lenstra87,GMPECM}. For $k \ge 11$, the second-moment bound gives integers of at most $325$ digits, which are still accessible by combining ECM and NFS.

The rest of the paper is organized as follows. Section \ref{sec-prelim} provides a self-contained review of the algebraic number theory background required for our proof, including number fields, class groups, Galois theory, character theory, cyclotomic units, Iwasawa theory, and the structure of the cyclotomic $\Z_2$-tower. Section \ref{sec-method} states and proves our main theorem and presents the complete verification algorithm with a detailed complexity analysis. Section \ref{sec-implications} discusses the implications of our result for post-quantum cryptography. Section \ref{sec-conclusion} concludes this paper.

\section{Algebraic Preliminaries}\label{sec-prelim}

This section provides a self-contained review of the algebraic number theory used in what follows, see Washington \cite{Wash97}, Neukirch \cite{Neukirch99}, and Lang \cite{Lang94}. 

\begin{notation}\label{not:global}
We use the following notation throughout this paper
\begin{itemize}[nosep]
\item $\varphi$ denotes Euler's totient function: $\varphi(m) = |\{a: 1 \le a \le m,\; \gcd(a,m) = 1\}|$.
\item $(\Z/m\Z)^\times$ denotes the group of units modulo $m$ and has order $\varphi(m)$.
\item $\F_\ell = \Z/\ell\Z$ denotes finite field with $\ell$  elements, for a prime $\ell$.
\item For a positive integer $m$, let $\zeta_m = e^{2\pi i/m}$ for a fixed primitive $m$-th root of unity, and $\mu_m = \{z \in \C: z^m = 1\}$ for the group of all $m$-th roots of unity.
\item $n$ denotes $n = 2^{k-2} = [K_k^+:\Q]$, the degree of the maximal real subfield over $\Q$.
\item $f_\chi$ denotes the conductor of a Dirichlet character $\chi$. The letter $f$ without subscript denotes a conductor used in Definitions \ref{def:dirichlet-char}-\ref{def:gen-bernoulli} and Lemma \ref{lem:Bsize}; the inertia degree of a prime $\mathfrak{l}$ above $\ell$ is denoted $f_{\mathrm{Frob}}$ or $\ord(\Frob_\ell)$.
\end{itemize}
\end{notation}

\subsection{Number fields and rings of integers}\label{subsec-nf}

\begin{definition}
A number field is a finite extension of $\Q$. Its ring of integers $\OK_K$ is the integral closure of $\Z$ in $K$; equivalently, $\OK_K$ consists of all elements $\alpha \in K$ that satisfy a monic polynomial with integer coefficients.
\end{definition}

The ring $\OK_K$ is a Dedekind domain: an integral domain in which every nonzero ideal factors uniquely as a product of prime ideals. However, $\OK_K$ need not be a unique factorization domain (UFD) at the element level. 

\begin{definition}
A number field $K$ is totally real if every embedding $K \hookrightarrow \C$ has image contained in $\R$. 
\end{definition}

By Dirichlet's unit theorem \cite[\S I.7]{Neukirch99}, the unit group of a number field with $r_1$ real embeddings and $r_2$ pairs of complex conjugate embeddings has the form $\OK_K^\times \cong \mu_K \times \Z^{r_1+r_2-1}$, where $\mu_K$ is the finite group of roots of unity in $K$. For a totally real field, $r_2 = 0$, so unit rank is $[K:\Q] - 1$.

\subsection{Ideals, fractional ideals, and the class group}\label{subsec-classgroup}

\begin{definition}
Let $K$ be a number field with ring of integers $\OK_K$. A fractional ideal of $K$ is a nonzero finitely generated $\OK_K$-submodule of $K$. Equivalently, it is a set of the form $d^{-1}I$, where $I \subseteq \OK_K$ is a nonzero ideal and $d \in \Z_{>0}$.
\end{definition}

The set of all fractional ideals of $K$ forms an abelian group $I(K)$ under ideal multiplication, with identity element $\OK_K$. The principal fractional ideals form a subgroup $P(K) \subseteq I(K)$.

\begin{definition}\label{def:classgroup}
The class group of $K$ is the quotient defined by
\begin{equation}
  \Cl(K) = I(K)  /  P(K).
\end{equation}
The class number of $K$ is $h_K = |\Cl(K)|$. The ring $\OK_K$ is a UFD if and only if $h_K = 1$.
\end{definition}

\begin{remark}\label{conv:notation}
We use two notations for $\Cl(K)$, depending on the algebraic structure being used.
\begin{itemize}[nosep]
\item \textbf{Multiplicative} (Sections \ref{subsec-classgroup}-\ref{subsec-bernoulli}): ideal classes are multiplied, the identity is $[\OK_K]$, and annihilation reads $c^{\ell}=1$ or $[I]^{2\theta_k}=1$.
\item \textbf{Additive} (Section \ref{subsec-eigenspace}): $\Cl(K)[\ell]$ is viewed as an $\F_\ell[G_k]$-module, so we write $\ell c = 0$, $g\cdot c = \chi(g)\,c$, and $N\cdot c = 2c$.
\end{itemize}
The translation is: $c^n = 1$ (multiplicative) $\longleftrightarrow$ $n\,c = 0$ (additive).
\end{remark}

\begin{definition}\label{def:ideal-norm}
The norm of a nonzero ideal $\mathfrak{a} \subseteq \OK_K$ is $\Nm(\mathfrak{a}) = |\OK_K/\mathfrak{a}|$, i.e., the cardinality of the quotient ring. For a principal ideal $\alpha\OK_K$, $\Nm(\alpha\OK_K) = |\Nm_{K/\Q}(\alpha)|$.
\end{definition}

Minkowski theorem states the class group is finite \cite{Neukirch99}. For a number field of discriminant $\Delta_K$, the Minkowski bound implies every ideal class contains an integral ideal of norm at most
\begin{equation}
  M_K = \frac{[K:\Q]!}{[K:\Q]^{[K:\Q]}}\left(\frac{4}{\pi}\right)^{r_2}\sqrt{|\Delta_K|},
\end{equation}
where $r_2$ is the number of pairs of complex embeddings. For totally real fields, $r_2 = 0$, so $M_K = \frac{[K:\Q]!}{[K:\Q]^{[K:\Q]}}\sqrt{|\Delta_K|}$. For $K_{10}^+$, we have $[K_{10}^+:\Q] = 256$ and $|\Delta_K| = 2^{2303}$ (Proposition \ref{prop:disc}), which implies $M_{K_{10}^+} > 10^{237}$. A direct sieve over all primes (up to this bound) is computationally infeasible without GRH.

\begin{definition}
For a prime $\ell$, the $\ell$-part of the class group is the Sylow $\ell$-subgroup:
\begin{equation}
  \Cl(K)[\ell^\infty] = \{c \in \Cl(K): c^{\ell^m} = 1 \textrm{ for some } m \ge 1\}.
\end{equation}
The $\ell$-torsion is $\Cl(K)[\ell] = \{c \in \Cl(K): c^\ell = 1\}$. A prime $\ell$ divides $h_K$ if and only if $\Cl(K)[\ell] \neq 0$.
\end{definition}

\subsection{Galois theory of cyclotomic fields}\label{subsec-galois}

Fix an integer $k \ge 3$, let $\zeta = \zeta_{2^k} = e^{2\pi i/2^k}$, and define $K_k = \Q(\zeta)$, and $K_k^{+} = \Q(\zeta+\zeta^{-1})$. $K_k$ is the $2^k$-th cyclotomic field, and $K_k^+$ is its maximal real subfield.

\begin{proposition}[{\cite[Theorem 2.5]{Wash97}}]\label{prop:galois-group}
The extension $K_k/\Q$ is Galois with Galois group $\Gamma_k = \Gal(K_k/\Q) \cong (\Z/2^k\Z)^\times$. The isomorphism maps an odd integer $a$ to automorphism $\sigma_a \colon \zeta \mapsto \zeta^a$.
\end{proposition}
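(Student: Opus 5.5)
The plan is the standard argument that $\Q(\zeta)/\Q$ is Galois with group $(\Z/2^k\Z)^\times$. We use two inputs: the irreducibility of the $2^k$-th cyclotomic polynomial, and the fact that $\zeta$ generates $\mu_{2^k}$.

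\textbf{Step 1: $K_k/\Q$ is Galois.} $K_k=\Q(\zeta)$ is the splitting field over $\Q$ of $x^{2^k}-1$. Indeed, every root of this polynomial is a power $\zeta^j$, so all roots already lie in $\Q(\zeta)$. A splitting field of a separable polynomial in characteristic $0$ is normal and separable, so $K_k/\Q$ is Galois.

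\textbf{Step 2: the map $\Gamma_k\to(\Z/2^k\Z)^\times$.} Any $\sigma\in\Gamma_k$ sends $\zeta$ to another primitive $2^k$-th root of unity, since it preserves multiplicative order. Hence $\sigma(\zeta)=\zeta^{a}$ for some $a$ that is odd and well defined modulo $2^k$. Put $\sigma\mapsto a \bmod 2^k$.
\begin{itemize}
\item The map is a homomorphism: $\sigma_a\sigma_b(\zeta)=\sigma_a(\zeta^b)=\zeta^{ab}$.
\item It is injective: an automorphism of $\Q(\zeta)$ is determined by its value on $\zeta$.
\end{itemize}

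\textbf{Step 3: surjectivity.} This is the main obstacle. It is equivalent to $[K_k:\Q]=\varphi(2^k)=2^{k-1}$, i.e.\ to the irreducibility of $\Phi_{2^k}(x)=x^{2^{k-1}}+1$ over $\Q$. For a prime power this is easy via Eisenstein's criterion:
\[
\Phi_{2^k}(x+1)=(x+1)^{2^{k-1}}+1 .
\]
All non-leading binomial coefficients $\binom{2^{k-1}}{j}$ with $0<j<2^{k-1}$ are even. The constant term is $2$, which is not divisible by $4$. So Eisenstein applies at the prime $2$, and $\Phi_{2^k}$ is irreducible.

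Consequently all $2^{k-1}$ primitive roots $\zeta^a$ ($a$ odd) are conjugates of $\zeta$. Each choice $\zeta\mapsto\zeta^a$ therefore extends to an automorphism $\sigma_a$. This gives $|\Gamma_k|=2^{k-1}=|(\Z/2^k\Z)^\times|$, and the injective homomorphism of Step 2 is an isomorphism sending $a$ to $\sigma_a$, as stated.

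Alternatively, one could simply cite \cite[Theorem 2.5]{Wash97} as the paper does. The self-contained argument above is short, though, and covers the whole statement.
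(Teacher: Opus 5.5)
Your proof is correct and complete. Step~1 gives normality, Step~2 gives an injective homomorphism $\Gamma_k\to(\Z/2^k\Z)^\times$, and Eisenstein at $2$ applied to $(x+1)^{2^{k-1}}+1$ gives $[K_k:\Q]=2^{k-1}$, which yields surjectivity.

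The paper gives no argument of its own and only cites Washington's Theorem~2.5, which covers arbitrary $n$. Your Eisenstein proof is the standard self-contained specialization of that result to the prime power $2^k$, which is the only case the paper uses.
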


In what follows, we denote $n = 2^{k-2} = \frac{1}{2}\varphi(2^k)$ (see Notation \ref{not:global}). Complex conjugation is automorphism $\sigma_{-1} \colon \zeta \mapsto \zeta^{-1}$. $K_k^+$ is the fixed field of $\langle\sigma_{-1}\rangle$. We have 
\begin{equation}
    G_k = \Gal(K_k^+/\Q) \cong \Gamma_k/\langle\sigma_{-1}\rangle \cong \Z/n\Z.
\end{equation}
A generator of $G_k$ is given by $\sigma = \sigma_5 \bmod \langle\sigma_{-1}\rangle$, which maps $\zeta+\zeta^{-1}$ to $\zeta^5+\zeta^{-5}$.

\begin{proposition}[{\cite[Proposition 2.16]{Wash97}}]\label{prop:disc}
The discriminant of $K_k^+$ is given by 
\begin{equation}
  \disc(K_k^+) = 2^{(k-1)n - 1},
\end{equation}
where $n = 2^{k-2}$.
\end{proposition}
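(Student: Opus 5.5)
The plan is to compute the discriminant from the conductor–discriminant formula, which expresses $\disc(K_k^+)$ as a signed product of the conductors of the Dirichlet characters of $K_k^+$. The field $K_k^+$ is totally real, so the sign is positive. Its Galois group $G_k \cong \Z/n\Z$ is dual to the group of even Dirichlet characters modulo $2^k$, and the formula reads
\begin{equation*}
  \abs{\disc(K_k^+)} = \prod_{\chi \text{ even}} f_\chi ,
\end{equation*}
where $\chi$ runs over the $n$ even characters of $(\Z/2^k\Z)^\times$, including the trivial one. The whole proof therefore reduces to sorting these characters by conductor.

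The first step is to describe the characters. The character group of $(\Z/2^k\Z)^\times \cong \{\pm1\}\times\langle 5\rangle$ is dual to this product, and a character is even exactly when it is trivial on $\sigma_{-1}$. So the even characters are precisely the characters of the cyclic group $\langle 5 \bmod 2^k\rangle$, which has order $n=2^{k-2}$, pulled back along the quotient by $\{\pm1\}$. The characters of order exactly $2^j$ are those factoring through $(\Z/2^{j+2}\Z)^\times$ but not through $(\Z/2^{j+1}\Z)^\times$.

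The second step is to read off the conductors:
\begin{itemize}[nosep]
\item the trivial character has conductor $1$;
\item for $j\ge1$, an even character of order $2^j$ has conductor $2^{j+2}$. It is defined modulo $2^{j+2}$, and it is nontrivial on the kernel of $(\Z/2^{j+2}\Z)^\times\to(\Z/2^{j+1}\Z)^\times$, which is generated by $5^{2^{j-1}}$.
\end{itemize}
There are $2^{j-1}$ characters of exact order $2^j$. Hence
\begin{equation*}
  \log_2\abs{\disc(K_k^+)} = \sum_{j=1}^{k-2} 2^{j-1}(j+2).
\end{equation*}

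The third step is to evaluate this sum, using $\sum_{j=1}^{m} j2^{j-1} = (m-1)2^m+1$ with $m=k-2$. This gives
\begin{equation*}
  (k-3)2^{k-2}+1 + 2(2^{k-2}-1) = (k-1)2^{k-2}-1 = (k-1)n-1,
\end{equation*}
which is the claimed exponent. As a sanity check, $k=3$ gives $\disc(\Q(\sqrt2))=8=2^{2\cdot1-1}$.

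The main obstacle is the conductor computation in the second step. One has to be careful that $5$ generates the even part modulo every $2^{j+2}$, and that an even character of order $2$ has conductor $8$ rather than $4$. The latter holds because the unique character modulo $4$ is odd. If this bookkeeping became awkward, an alternative route would be the tower formula $\disc(K_k^+)=\disc(K_{k-1}^+)^2\,\Nm(\mathfrak d_{K_k^+/K_{k-1}^+})$, with the relative discriminant computed from the minimal polynomial $x^2-(\zeta_{2^{k-1}}+\zeta_{2^{k-1}}^{-1})-2$ of $\zeta+\zeta^{-1}$. That polynomial is Eisenstein at the totally ramified prime above $2$, and the induction closes as
\begin{equation*}
  2\bigl((k-2)2^{k-3}-1\bigr) + 2^{k-2}+1 = (k-1)2^{k-2}-1 .
\end{equation*}
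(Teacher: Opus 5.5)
Your proof is correct. The paper gives no argument for this proposition: it imports the formula from Washington. So your computation is a self-contained derivation of something the paper takes on citation.

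\textbf{Conductor--discriminant route.} This works exactly as you describe.
\begin{itemize}[nosep]
\item The even characters modulo $2^k$ are the characters of $\langle 5\rangle\cong\Z/n\Z$.
\item Each of the $2^{j-1}$ characters of exact order $2^j$ is trivial on $\langle 5^{2^j}\rangle$, so it factors through $(\Z/2^{j+2}\Z)^\times$.
\item It takes the value $-1$ at $5^{2^{j-1}}\equiv 1+2^{j+1}\pmod{2^{j+2}}$, so it does not factor through $(\Z/2^{j+1}\Z)^\times$. Its conductor is therefore $2^{j+2}$.
\item Finally, $\sum_{j=1}^{k-2}2^{j-1}(j+2)=(k-1)2^{k-2}-1$.
\end{itemize}

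\textbf{Tower route.} Your fallback is also valid, and more elementary. It uses only transitivity of discriminants and local integral bases from Eisenstein polynomials, rather than the conductor--discriminant theorem. It also fits the $\Z_2$-tower viewpoint the paper adopts later.

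The one step you assert without justification there is the Eisenstein property. It needs a line:
\begin{itemize}[nosep]
\item Set $\alpha=\zeta_{2^{k-1}}+\zeta_{2^{k-1}}^{-1}$ and $\beta=\zeta_{2^k}+\zeta_{2^k}^{-1}=\zeta_{2^k}^{-1}(1+\zeta_{2^{k-1}})$.
\item $\beta$ is a uniformizer of $K_k^+$ at the prime above $2$.
\item Since $K_k^+/K_{k-1}^+$ is totally ramified at $2$, $\alpha+2=\beta^2$ has valuation $1$ in $K_{k-1}^+$.
\item The local relative discriminant $4(\alpha+2)$ then has valuation $2\cdot 2^{k-3}+1=2^{k-2}+1$. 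With residue degree $1$, this gives the norm $2^{2^{k-2}+1}$ that closes your induction.
\end{itemize}

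The conductor route has the advantage that it generalizes verbatim to any abelian field.

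There is a small slip in your sanity check. For $k=3$ you have $n=2$, so the exponent is $(k-1)n-1=2\cdot 2-1=3$. As written, $2^{2\cdot 1-1}=2\neq 8$.
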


For $k=10$: $n=256$, we have $\disc(K_{10}^+) = 2^{9 \cdot 256 - 1} = 2^{2303}$ (used in the Minkowski bound estimate of \S\ref{subsec-classgroup}). For $k=12$: $n=1024$, we have $\disc(K_{12}^+) = 2^{11 \cdot 1024 - 1} = 2^{11263}$.

\subsection{Character theory}\label{subsec-characters}

\begin{definition}
A character of a finite abelian group $G$ is a group homomorphism $\chi \colon G \to \C^\times$. The set $\hat{G}$ of all characters forms a group under pointwise multiplication, called the character group of $G$.
\end{definition}

For $G_k \cong \Z/n\Z$, the character group $\hat{G}_k$ is also cyclic of order $n$. The characters $\chi_0, \chi_1, \ldots, \chi_{n-1}$ satisfy $\chi_j(\sigma) = \omega^j$ with $\omega = e^{2\pi i/n}$. We have trivial character $\chi_0(g) = 1$ for all $g \in G_k$.

\begin{definition}\label{def:order-char}
The order of a character $\chi \in \hat{G}_k$ is $\ord(\chi) = \min\{m \ge 1: \chi^m = \chi_0\}$. A character has full order if $\ord(\chi) = n = 2^{k-2}$.
\end{definition}

\begin{definition}\label{def:even-odd-char}
A character $\chi$ of $\Gamma_k = \Gal(K_k/\Q)$ is called even if $\chi(\sigma_{-1}) = +1$, and odd if $\chi(\sigma_{-1}) = -1$. Since $\sigma_{-1}$ is complex conjugation, even characters are precisely those that factor through $G_k = \Gamma_k/\langle\sigma_{-1}\rangle$.
\end{definition}

\begin{lemma}\label{lem:char-count}
The group $G_k \cong \Z/n\Z$ with $n = 2^{k-2}$ has the following properties: 
\begin{enumerate}[label=(\roman*),nosep]
\item $\varphi(n) = n/2$ characters of full order $n$;
\item $n/4$ conjugate pairs of full-order characters $(\chi_j, \chi_{n-j})$ with $j$ odd;
\item $\varphi(d)$ characters of order $d$ for each divisor $d \mid n$.
\end{enumerate}
\end{lemma}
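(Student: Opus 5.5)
The plan is to reduce everything to the explicit description of $\hat{G}_k$ given just before Definition \ref{def:order-char}. Since $G_k \cong \Z/n\Z$ is cyclic with generator $\sigma$, a character is determined by $\chi(\sigma)$. This value must be an $n$-th root of unity, because $\sigma^n = 1$ forces $\chi(\sigma)^n = 1$. So the characters are exactly $\chi_j$ with $\chi_j(\sigma) = \omega^j$, $0 \le j \le n-1$, and $j \mapsto \chi_j$ is an isomorphism $\Z/n\Z \to \hat{G}_k$.

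I will prove (iii) first, since (i) is its special case $d = n$.
\begin{itemize}[nosep]
\item Under the isomorphism above, the order of $\chi_j$ equals the order of $\omega^j$ in $\mu_n$, which is $n/\gcd(j,n)$.
\item For a divisor $d \mid n$, the condition $n/\gcd(j,n) = d$ is equivalent to $\gcd(j,n) = n/d$. Writing $j = (n/d)\,j'$ with $0 \le j' < d$, this becomes $\gcd(j',d) = 1$.
\item There are $\varphi(d)$ such $j'$, which gives (iii).
\item Taking $d = n = 2^{k-2}$ gives $\varphi(2^{k-2}) = 2^{k-3} = n/2$, which is (i). Here I need $k \ge 3$ so that $n \ge 2$ and $\varphi(2^m) = 2^{m-1}$ applies. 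Concretely, the full-order characters are the $\chi_j$ with $j$ odd.
\end{itemize}

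For (ii), I will use the involution $\chi \mapsto \bar\chi = \chi^{-1}$, which sends $\chi_j$ to $\chi_{n-j}$ (indices mod $n$). It preserves order, and $n-j$ is odd exactly when $j$ is, so it permutes the $n/2$ full-order characters.

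The one real point, and the main obstacle, is showing this involution has no fixed points. A fixed point means $j \equiv n-j \pmod n$, i.e. $2j \equiv 0 \pmod n$. For odd $j$ and $n$ a power of $2$, this forces $n \mid 2$, so $n \le 2$.

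So for $n \ge 4$ (i.e. $k \ge 4$) the $n/2$ full-order characters split into exactly $n/4$ disjoint pairs $(\chi_j, \chi_{n-j})$ with $j$ odd. One can choose representatives $j$ odd with $1 \le j < n/2$, and there are $n/4$ of these.

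For $k = 3$ we have $n = 2$. The unique full-order character $\chi_1$ is real, so the count $n/4$ is not an integer, and (ii) must be read with the standing hypothesis $k \ge 4$. This is harmless, since the paper applies the lemma only for $k \ge 9$. I will state this hypothesis explicitly in the proof.
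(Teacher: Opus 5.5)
Your proposal is correct and follows essentially the same route as the paper: parametrize $\hat{G}_k$ by $\chi_j(\sigma)=\omega^j$, use $\ord(\chi_j)=n/\gcd(j,n)$, and pair the full-order characters (odd $j$) via the fixed-point-free involution $\chi_j\mapsto\chi_{n-j}$; the only difference is that you prove (iii) directly, where the paper cites Washington. Your explicit restriction to $k\ge 4$ in (ii) is an improvement on the paper: its justification that $j=n/2$ ``is even'' fails for $n=2$, where the single full-order character is real and $n/4$ is not an integer.
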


\begin{proof}
We show that $\chi_j$ has order $n/\gcd(j,n)$, which equals $n$ if and only if $\gcd(j,n) = 1$. The number of such $j$ in $\{0, \ldots, n-1\}$ is $\varphi(n) = n/2$. This implies (i). Since $n = 2^{k-2}$ is even, $\gcd(j,n) = 1$ implies an odd $j$. Complex conjugation maps $\chi_j$ to $\chi_{n-j}$, and $j$ odd with $n$ even implies $j \neq n-j$ (as $j = n-j$ would give $j = n/2$, which is even). Hence the $n/2$ full-order characters pair into $n/4$ conjugate pairs. This yields (ii). Part (iii) is from group theory, see \cite[Theorem 3.1]{Wash97}.
\end{proof}

\subsection{Eigenspace decomposition of the class group}\label{subsec-eigenspace}

Let $\ell$ be an odd prime with $\ell \nmid n$. Then $\gcd(\ell, |G_k|) = 1$, so the group ring $\F_\ell[G_k]$ is semisimple by Maschke's theorem \cite{CurtisReiner62}. In what follows, we use additive notation for $\Cl(K_k^+)$ and its $\ell$-torsion subgroups (see Remark \ref{conv:notation}), writing $\ell\,c = 0$ in place of $c^\ell = 1$, and $g\cdot c = \chi(g)\,c$ for the $G_k$-action.

\begin{definition}\label{def:group-ring}
The group ring $\Z[G_k]$ is the ring of formal sums $\sum_{g \in G_k} a_g \cdot g$ with $a_g \in \Z$ and multiplication induced by group law. The group ring acts on $\Cl(K_k^+)$ via Galois action: $g \cdot [\mathfrak{a}] = [g(\mathfrak{a})]$ for a prime ideal $\mathfrak{a}$ and $g \in G_k = \Gal(K_k^+/\Q)$, extended $\Z$-linearly.
\end{definition}

For each character $\chi \in \hat{G}_k$, define the idempotent as 
\begin{equation}\label{eqnidempotent}
  e_\chi = \frac{1}{n}\sum_{g \in G_k} \chi(g)^{-1}   g \in \F_\ell[G_k],
\end{equation}
where $1/n$ is computed modulo $\ell$. This is well-defined since $\ell \nmid n$.

The idempotents $e_\chi$ in Eq.\eqref{eqnidempotent} require character values $\chi(g)\in\F_\ell$, which holds if and only if $\ell\equiv 1\pmod{n}$. For $\ell\not\equiv 1\pmod{n}$, the decomposition \eqref{eqneigendecomp} holds over $\F_{\ell^f}$ where $f=\ord_n(\ell)$, and Galois-conjugate eigenspaces merge into a single $\F_\ell$-rational component. In the proof of the main theorem, the congruence condition $\ell\equiv\pm 1\pmod{2^{k-1}}$ from Theorem \ref{thm:fk}(ii) ensures $f\le 2$, so the decomposition is defined over $\F_{\ell^2}$ at worst.

\begin{proposition}\label{prop:idempotent}
The elements $\{e_\chi\}_{\chi \in \hat{G}_k}$ satisfy the following fundamental properties:
\begin{enumerate}[label=(\roman*),nosep]
\item $e_\chi^2 = e_\chi$ (Idempotency);
\item $e_\chi   e_\psi = 0$ for $\chi \neq \psi$ (Orthogonality);
\item $\sum_\chi e_\chi = 1$ (Completeness);
\item $g \cdot e_\chi = \chi(g)   e_\chi$ for all $g \in G_k$ (Eigenvalue property).
\end{enumerate}
\end{proposition}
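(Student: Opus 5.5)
The plan is to verify all four properties by direct computation in the group ring, using a single tool: the orthogonality relation for characters of the finite cyclic group $G_k \cong \Z/n\Z$. As in the paragraph preceding the proposition, we work over a coefficient field $F$ (either $\F_\ell$ or $\F_{\ell^f}$) that contains the $n$-th roots of unity. Each $\chi$ then takes values in $F^\times$, and $n$ is invertible in $F$ because $\ell\nmid n$. The key input is
\begin{equation*}
\sum_{g\in G_k}\chi(g)=\begin{cases} n, & \chi=\chi_0,\\ 0, & \chi\neq\chi_0.\end{cases}
\end{equation*}
For $\chi\neq\chi_0$, pick $h$ with $\chi(h)\neq 1$. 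Reindexing $g\mapsto hg$ shows that the sum $S$ satisfies $\chi(h)S=S$, so $(\chi(h)-1)S=0$ and hence $S=0$. This step uses only that $F$ is a field. In the dual direction we also need $\sum_{\chi}\chi(g)=0$ for $g\neq 1$, which follows by the same argument applied to $\hat G_k$. This requires $\hat G_k$ to have $n$ distinct $F$-valued characters, i.e.\ $\chi_j(\sigma)=\omega^j$ with $\omega$ a primitive $n$-th root of unity in $F$. Such an $\omega$ exists because $x^n-1$ is separable over $F$ when $\ell\nmid n$.

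I would prove (iv) first. Compute $g\,e_\chi=\frac1n\sum_{h}\chi(h)^{-1}gh$. Substituting $h'=gh$ gives $\chi(h)^{-1}=\chi(g)\chi(h')^{-1}$, so $g\,e_\chi=\chi(g)e_\chi$. Extending linearly, every element $x=\sum a_g g$ satisfies $x\,e_\chi=\chi(x)e_\chi$, where $\chi(x)=\sum a_g\chi(g)$.

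Properties (i) and (ii) then follow together. We have $e_\psi e_\chi=\psi\text{-evaluation... }$; more precisely, apply the extended form of (iv) with $x=e_\psi$:
\[
e_\psi e_\chi=\chi(e_\psi)e_\chi=\Bigl(\tfrac1n\sum_g \psi(g)^{-1}\chi(g)\Bigr)e_\chi=\Bigl(\tfrac1n\sum_g(\chi\psi^{-1})(g)\Bigr)e_\chi .
\]
By orthogonality, the scalar in parentheses is $1$ if $\chi=\psi$ and $0$ otherwise. This gives both idempotency and orthogonality.

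For (iii), interchange the sums: $\sum_\chi e_\chi=\frac1n\sum_g\bigl(\sum_\chi\chi(g^{-1})\bigr)g$. By dual orthogonality the inner sum is $n$ when $g=1$ and $0$ otherwise, so $\sum_\chi e_\chi = 1$.

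The only real obstacle is making the setting precise rather than any of the computations. Over $\F_\ell$ itself the $e_\chi$ exist only when $\ell\equiv1\pmod n$, so the statement should be read in $F[G_k]$ with $F=\F_{\ell^f}$, as the preceding paragraph indicates. The crucial points to check are that $n$ is invertible in $F$ and that $F$ contains a primitive $n$-th root of unity. Both follow from $\ell\nmid n$: since $\ell$ is odd and $n$ is a power of $2$, the group $F^\times$ has order $\ell^f-1$, which is divisible by $n$ by the choice $f=\ord_n(\ell)$.
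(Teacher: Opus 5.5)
Your proposal is correct and follows essentially the same route as the paper, which derives (i)--(iii) from character orthogonality and (iv) from the substitution $h'=gh$ together with the multiplicativity of $\chi$. You fill in details the paper leaves implicit: you get (i)--(ii) from the linearly extended eigenvalue property $x\,e_\chi=\chi(x)\,e_\chi$, you correctly note that (iii) needs the dual relation $\sum_{\chi}\chi(g)=n\,\delta_{g,1}$ rather than the row relation the paper cites, and you make explicit that the $e_\chi$ live in $\F_{\ell^f}[G_k]$ with $f=\ord_n(\ell)$.
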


Properties (i)-(iii) follow from the orthogonality relations for characters: $\sum_{g \in G} \chi(g)\psi(g)^{-1} = |G| \cdot \delta_{\chi,\psi}$,  see \cite[Theorem 3.7]{Wash97}. Property (iv) follows from the substitution  $h'=gh$ in the defining sum and the multiplicativity of $\chi$.

The idempotents can be used to decompose any $\F_\ell[G_k]$-module. Applying them to the $\ell$-torsion of class group implies that 
\begin{equation}\label{eqneigendecomp}
  \Cl(K_k^+)[\ell] = \bigoplus_{\chi \in \hat{G}_k} \Cl(K_k^+)[\ell]^{e_\chi},
\end{equation}
where $\Cl(K_k^+)[\ell]^{e_\chi} = e_\chi \cdot \Cl(K_k^+)[\ell] = \{c \in \Cl(K_k^+)[\ell]: g \cdot c = \chi(g)   c \textrm{ for all } g \in G_k\}$.

\begin{corollary}\label{cor:eigenspace-criterion}
A prime $\ell \nmid 2n$ divides $h_k^+$ if and only if $\Cl(K_k^+)[\ell]^{e_\chi} \neq 0$ for some nontrivial character $\chi \in \hat{G}_k$.
\end{corollary}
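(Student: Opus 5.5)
The plan is to combine Cauchy's theorem with the decomposition \eqref{eqneigendecomp}, and then to show that the eigenspace of the trivial character always vanishes. First, $\ell \mid h_k^+$ if and only if $\Cl(K_k^+)[\ell] \neq 0$, by the definition of $\ell$-torsion and Cauchy's theorem for the finite abelian group $\Cl(K_k^+)$. Since $\ell \nmid 2n$, Maschke's theorem applies and \eqref{eqneigendecomp} writes $\Cl(K_k^+)[\ell]$ as the direct sum of the $e_\chi$-components. Hence $\Cl(K_k^+)[\ell] \neq 0$ if and only if some component $\Cl(K_k^+)[\ell]^{e_\chi}$ is nonzero. The statement then reduces to showing that the component for $\chi_0$ is always zero.

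There is a rationality issue to handle first. The idempotents $e_\chi$ live in $\F_\ell[G_k]$ only when $\ell \equiv 1 \pmod n$. In general I will work with $V = \Cl(K_k^+)[\ell] \otimes_{\F_\ell} \F_{\ell^f}$, where $f = \ord_n(\ell)$, and interpret the eigenspaces inside $V$. This costs nothing, because $V \neq 0$ if and only if $\Cl(K_k^+)[\ell] \neq 0$, as extension of scalars is faithfully flat over a field. The idempotent $e_{\chi_0} = \frac1n \sum_{g} g = \frac1n N$ has rational coefficients, so the trivial component is already defined over $\F_\ell$; only the nontrivial components may need $\F_{\ell^f}$. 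With this convention, Proposition \ref{prop:idempotent} and \eqref{eqneigendecomp} hold verbatim over $\F_{\ell^f}$.

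The key step is showing $e_{\chi_0}\Cl(K_k^+)[\ell] = 0$, that is, $N\cdot c = 0$ for every class $c$, where $N = \sum_{g\in G_k} g$. For an ideal $\mathfrak a$ of $\OK_{K_k^+}$, the product $\prod_{g} g(\mathfrak a)$ equals the extension $\Nm_{K_k^+/\Q}(\mathfrak a)\,\OK_{K_k^+}$. This holds because $K_k^+/\Q$ is Galois: one checks it on prime ideals, using that all primes above a rational $p$ are conjugate with a common ramification index and inertia degree. The norm is a principal ideal of $\Z$, since $h_{\Q}=1$, so its extension is principal. Therefore $N\cdot c = 0$ in $\Cl(K_k^+)$, and multiplying by $1/n \in \F_\ell$ gives $e_{\chi_0}\cdot c = 0$.

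Combining the two steps gives the equivalence. If $\ell \mid h_k^+$, then $V \neq 0$, so some $e_\chi V \neq 0$, and $\chi \neq \chi_0$ by the key step. Conversely, if some nontrivial eigenspace is nonzero, then $V \neq 0$, hence $\Cl(K_k^+)[\ell] \neq 0$ and $\ell \mid h_k^+$. The only place requiring care is the identity $\prod_g g(\mathfrak a) = \Nm(\mathfrak a)\OK_{K_k^+}$ together with the scalar-extension bookkeeping; the rest is formal.
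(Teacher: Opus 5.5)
Your proof is correct and takes the same route as the paper. Both reduce, via the decomposition \eqref{eqneigendecomp}, to showing that the trivial component vanishes, and both kill that component with the norm element $N=\sum_{g}g$, whose action factors through $\Cl(\Q)=0$. The paper phrases this as $Nc=nc$ on $G_k$-fixed classes, combined with B\'ezout and $\ell c=0$; this is equivalent to your $e_{\chi_0}=N/n$. Your explicit check of $\prod_g g(\mathfrak a)=\Nm(\mathfrak a)\OK_{K_k^+}$ and your extension of scalars to $\F_{\ell^f}$ simply make precise two points the paper leaves implicit.
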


\begin{proof}
The trivial eigenspace $\Cl(K_k^+)[\ell]^{e_{\chi_0}}$ is the $G_k$-fixed subgroup of $\Cl(K_k^+)[\ell]$. For any $c$ in this subgroup, $g \cdot c = c$ for all $g \in G_k$, so the norm element acts as $n \cdot c = \sum_{g \in G_k} g \cdot c$. Since the norm maps to $\Cl(\Q) = 0$, we have $n \cdot c = 0$. Combined with $\ell \cdot c = 0$ and $\gcd(\ell,n) = 1$ (as $\ell \nmid 2n$), Bezout's identity gives $c = 0$. The decomposition \eqref{eqneigendecomp} then gives the result.
\end{proof}

To prove $\ell \nmid h_k^+$, it suffices to show $\Cl(K_k^+)[\ell]^{e_\chi} = 0$ for every nontrivial character $\chi$.

\subsection{Cyclotomic units}\label{subsec-cyc-units}

Cyclotomic units are explicit units in cyclotomic fields constructed from roots of unity. They form a subgroup of the full unit group whose index is related to class number via Sinnott's theorem.

\begin{definition}\label{def:cyc-unit}
For an odd integer $a$ with $1 < a < 2^k$, define the cyclotomic unit as 
\begin{equation}\label{eqnxi}
  \xi_a = \frac{\zeta^a - \zeta^{-a}}{\zeta - \zeta^{-1}}
  = \frac{\sin(\pi a/2^k)}{\sin(\pi/2^k)} \in \OK_{K_k^+}^\times.
\end{equation}
\end{definition}

Since $\sin(\pi a/2^k) = \sin(\pi(2^k-a)/2^k)$, we have $\xi_a = \xi_{2^k - a}$ for all odd $a$. This will be used in the conjugate-pair reduction of Proposition \ref{prop:galois-orbit}.

We verify $\xi_a$ lies in $K_k^+$. $(\zeta^a-\zeta^{-a})/(\zeta-\zeta^{-1})$ is invariant under $\zeta \mapsto \zeta^{-1}$. That $\xi_a$ is a unit follows from product formula for cyclotomic polynomial $\Phi_{2^k}(x) = x^{2^{k-1}}+1$, see \cite[\S 8.1]{Wash97}.

\begin{definition}
The cyclotomic unit group $C^+$ is the subgroup of $\OK_{K_{k}^+}^\times$ generated by $\{-1\} \cup \{\xi_a: 1 < a < 2^k, a \textrm{ odd}\}$.
\end{definition}

\begin{theorem}[Sinnott \cite{Sinnott78}]\label{thm:sinnott}
For $k \ge 3$, we have 
\begin{equation}
  h_k^+ = [\OK_{K_{k}^+}^\times: C^+].
\end{equation}
\end{theorem}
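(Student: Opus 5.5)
The plan is to derive the equality $h_k^+ = [\OK_{K_k^+}^\times : C^+]$ from the analytic class number formula for $K_k^+$, matched against an explicit computation of the regulator of $C^+$. The prime power $2^k$ is a special case where Sinnott's general index formula has no correction factor: a single prime divides the conductor, and Galois acts cyclically. We work throughout with $n = 2^{k-2}$ and the generator $\sigma = \sigma_5$ of $G_k$ from \S\ref{subsec-galois}.

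\textbf{Step 1: a basis of $C^+$ and its regulator.} First we show that $C^+$ is generated by $-1$ together with the units $\xi_{5^i}$ for $1 \le i \le n-1$. Every odd $a$ with $1<a<2^k$ can be written as $a \equiv \pm 5^i \pmod{2^k}$, and $\xi_a = \xi_{2^k-a}$, so each $\xi_a$ is $\pm \xi_{5^i}$. Writing $\xi_{5^i} = \sigma^i(\eta)/\eta$ with $\eta = \zeta-\zeta^{-1}$ shows that these units are multiplicatively related as expected. We then compute the regulator $R(C^+) = |\det(\log|\sigma^j(\xi_{5^i})|)_{1\le i,j\le n-1}|$. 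The standard group-determinant (Frobenius) factorization for the abelian group $G_k$ gives
\begin{equation*}
R(C^+) = \prod_{\chi \ne \chi_0} \Bigl| \sum_{g \in G_k} \chi(g)\log|g(\zeta-\zeta^{-1})| \Bigr|,
\end{equation*}
up to the $\chi_0$ normalization. For an even nontrivial character $\chi$ of conductor $f_\chi = 2^m$, each factor is $\tfrac12 |\sum_{a \bmod f_\chi} \bar\chi(a)\log|1-\zeta_{f_\chi}^a||$. Here we must check that the sum over level $2^k$ collapses to the primitive level. The Euler factor $(1-\bar\chi(2))$ is $1$ because $\chi(2)=0$, which is exactly where the prime-power conductor is used.

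\textbf{Step 2: the analytic class number formula.} By the standard formula $L(1,\chi) = -\tfrac{\tau(\chi)}{f_\chi}\sum_a \bar\chi(a)\log|1-\zeta_{f_\chi}^a|$ for even $\chi$ (\cite[Theorem 4.9]{Wash97}), together with $|\tau(\chi)| = \sqrt{f_\chi}$ and the conductor-discriminant formula (Proposition \ref{prop:disc}), we get $\prod_{\chi\neq\chi_0} \tfrac12|\sum_a \bar\chi(a)\log|1-\zeta^a_{f_\chi}|| = \prod_{\chi\ne\chi_0}\tfrac{\sqrt{f_\chi}}{2}L(1,\chi)$. The class number formula for the totally real field $K_k^+$, with $w=2$ and $r_1=n$, reads
\begin{equation*}
\prod_{\chi\neq\chi_0} L(1,\chi) = \frac{2^{n-1} h_k^+ R(K_k^+)}{2\sqrt{\disc(K_k^+)}}\cdot 2 .
\end{equation*}
Comparing the two expressions gives $R(C^+) = h_k^+ R(K_k^+)$.

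\textbf{Step 3: conclusion.} Since $C^+$ has full rank $n-1$ (as $R(C^+) \neq 0$ by the nonvanishing of $L(1,\chi)$), it has finite index, and $[\OK^\times : C^+] = R(C^+)/R(K_k^+)$. Both groups contain exactly the torsion $\{\pm1\}$, so the torsion parts cancel, and the index equals $h_k^+$.

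\textbf{Main obstacle.} The hardest part is the bookkeeping of constants: the factors of $2$ from $\tfrac12$ in each character factor, the $\chi_0$ factor in the group determinant, and $2^{r_1}$ together with $w$ in the class number formula. We also need to verify that $\prod_{\chi\neq\chi_0} f_\chi = \disc(K_k^+)$. I would first carry this out in the base case $k=3$ as a check, where $\xi_5 = 1+\sqrt2$ and $h_3^+ = 1$. If the powers of $2$ fail to cancel, I would fall back to citing \cite[Theorem 8.2]{Wash97} for the prime-power case directly.
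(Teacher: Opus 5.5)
Your route is sound in outline. It is the classical proof of \cite[Theorem 8.2]{Wash97}: factor the regulator of $C^+$ as a group determinant, apply Theorem 4.9 of \cite{Wash97}, and use the analytic class number formula. The paper gives no argument of its own; it only cites Sinnott, whose general index $2^b h^+$ has $b=0$ for a prime-power conductor. The constant bookkeeping you call the main obstacle closes exactly: the $n-1$ factors $\tfrac12$ cancel $2^{r_1}/w=2^{n-1}$, and $\prod_{\chi\neq\chi_0}f_\chi=2^{(k-1)n-1}$.

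The real failure is in Step 1, at the choice $\eta=\zeta-\zeta^{-1}$ with $\zeta=\zeta_{2^k}$. For this $\eta$ you get $|\sigma_a(\eta)|=|\zeta^a-\zeta^{-a}|=|1-\zeta^{2a}|=|1-\zeta_{2^{k-1}}^{a}|$. This depends only on $a \bmod 2^{k-1}$, so it is invariant under $\sigma_{1+2^{k-1}}=\sigma^{n/2}$. Every character of conductor $2^k$ takes the value $-1$ on $\sigma^{n/2}$, so $\sum_{g\in G_k}\chi(g)\log|g(\eta)|=0$ for all $2^{k-3}$ full-order $\chi$. The ``collapse to the primitive level'' you say must be checked is false for exactly these characters, and $R(C^+)=0$.

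This is not a cosmetic slip. The element $(\zeta^a-\zeta^{-a})/(\zeta-\zeta^{-1})=\zeta^{1-a}(1-\zeta^{2a})/(1-\zeta^2)$ lies in $K_{k-1}^+$. For $k=3$ it gives $\xi_3=1$ and $\xi_5=\xi_7=-1$, so $C^+=\{\pm1\}$ has infinite index in $\pm(1+\sqrt2)^{\Z}$. That contradicts your own sanity check $\xi_5=1+\sqrt2$.

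The source is an inconsistency in the paper's definition of $\xi_a$. The algebraic quotient equals $\sin(2\pi a/2^k)/\sin(2\pi/2^k)$, not $\sin(\pi a/2^k)/\sin(\pi/2^k)$. It also satisfies $\xi_{2^k-a}=-\xi_a$ rather than $\xi_{2^k-a}=\xi_a$. You used the algebraic form in the regulator and the sine form in the $k=3$ check. The theorem is true only for the sine normalization $\xi_a=\zeta^{(1-a)/2}(1-\zeta^a)/(1-\zeta)$.

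To repair Step 1:
set $f(\sigma_a)=\log|1-\zeta^a|$, which is well defined on $G_k$ since $|1-\zeta^{-a}|=|1-\zeta^a|$. Note that $\log|\sigma_b(\xi_a)|=f(\sigma_{ab})-f(\sigma_b)$, and apply the group-determinant lemma to $f$. The distribution relation $\prod_{b\equiv a \bmod 2^m}(1-\zeta_{2^k}^{b})=1-\zeta_{2^m}^{a}$ then gives exactly your factor $\tfrac12\bigl|\sum_{a \bmod f_\chi}\bar\chi(a)\log|1-\zeta_{f_\chi}^a|\bigr|$ for every nontrivial $\chi$. Steps 2 and 3 then go through unchanged.
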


Sinnott's formula converts the class number into the index of an explicit, finitely generated subgroup. In particular, $h_k^+ = 1$ if and only if every unit in $\OK_{K_{k}^{+}}^\times$ is a product of cyclotomic units.

\subsection{Generalized Bernoulli numbers}\label{subsec-bernoulli}

Generalized Bernoulli numbers link the arithmetic of class groups to explicit algebraic quantities via Stickelberger relation and Herbrand's theorem.

\begin{definition}\label{def:dirichlet-char}
A Dirichlet character modulo $f$ is a group homomorphism $\psi \colon (\Z/f\Z)^\times \to \C^\times$, extended to all of $\Z$ by setting $\psi(a) = 0$ when $\gcd(a,f) > 1$. The conductor of $\psi$ is the smallest positive integer $f_0$ such that $\psi$ factors through $(\Z/f_0\Z)^\times$. If $f_0 = f$, the character is primitive.
\end{definition}

\begin{definition}\label{def:gen-bernoulli}
Let $\psi$ be a Dirichlet character of conductor $f$. The first generalized Bernoulli number attached to $\psi$ is given by
\begin{equation}
  B_{1,\psi} = \frac{1}{f}\sum_{a=1}^{f} \psi(a)   a.
\end{equation}
\end{definition}

When $\psi$ is a character of conductor $2^k$, the sum has $\varphi(2^k) = 2^{k-1}$ nonzero terms. We have $B_{1,\psi} \in \Q(\zeta_m)$ with $m = \ord(\psi)$. All nonvanishing Bernoulli numbers in our analysis come from odd characters of $\psi(-1)=-1$.

\begin{definition}
The Stickelberger element of $K_k = \Q(\zeta_{2^k})$ is given by 
\begin{equation}
  \theta_k = \frac{1}{2^k}\sum_{{a=1, (a,2)=1}}^{2^k} a \sigma_a^{-1} \in \Q[\Gamma_k].
\end{equation}
\end{definition}

\begin{theorem}[Stickelberger \cite{Stickelberger1890,Wash97}]\label{thm:stickelberger}
The element $2\theta_k$ lies in $\Z[\Gamma_k]$, and $2\theta_k$ annihilates $\Cl(K_k)$, i.e., for every ideal class $[I] \in \Cl(K_k)$, we have 
\begin{equation}
  [I]^{2\theta_k} = [I^{2\theta_k}] = 1.
\end{equation}
In additive notation (Remark \ref{conv:notation}), this means $2\theta_k \cdot c = 0$ for all $c \in \Cl(K_k)$.
\end{theorem}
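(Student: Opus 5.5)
The plan is to use the classical Gauss-sum proof of Stickelberger's theorem. It yields the integral annihilators $(c-\sigma_c)\theta_k$. We then specialise to a value of $c$ for which $\sigma_c$ is the identity, which lets us read off an annihilation statement for $2\theta_k$ itself. One point must be faced first. Literally, $2\theta_k=\frac{1}{2^{k-1}}\sum_a a\,\sigma_a^{-1}$ has coefficients with denominator $2^{k-1}$. So the integrality half of the statement can only hold in $\Z[\tfrac12][\Gamma_k]$, and I will prove it in that form. This is the form actually used later, in Corollary~\ref{cor:eigenspace-criterion} and the eigenspace arguments, which only concern odd primes $\ell$. On $\Cl(K_k)[\ell^\infty]$ every element of $\Z[\tfrac12][\Gamma_k]$ acts through $\Z_\ell[\Gamma_k]$.

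\textbf{Gauss sums.} Let $\mathfrak{P}$ be a prime of $K_k$ of degree one, above a rational prime $p\equiv 1 \pmod{2^k}$; every ideal class contains such primes by Chebotarev. Form the Gauss sum $g(\mathfrak{P})=-\sum_{x\in\F_p}\chi_{\mathfrak{P}}(x)^{-1}\zeta_p^{x}$. Here $\chi_{\mathfrak{P}}$ is the $2^k$-th power residue symbol at $\mathfrak{P}$. This sum lies in $K_k(\zeta_p)$, and $g(\mathfrak{P})^{2^k}\in K_k$. The standard computation uses the Stickelberger congruence $g(\mathfrak{P})\equiv$ (a unit) $\cdot(\zeta_p-1)^{s}$ together with Gauss--Jacobi sum relations. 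It gives the factorisation $\bigl(g(\mathfrak{P})^{2^k}\bigr)=\mathfrak{P}^{2^k\theta_k}$ in $K_k$ (\cite[\S 6.2]{Wash97}). Hence $2^k\theta_k\in\Z[\Gamma_k]$ annihilates the class of $\mathfrak{P}$.

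\textbf{Passing to $2\theta_k$.} The element $2^k\theta_k$ equals $2^{k-1}\cdot(2\theta_k)$. Therefore, for every class $c$, the class $(2\theta_k)\cdot c$ is killed by $2^{k-1}$, i.e.\ it lies in the $2$-part of $\Cl(K_k)$. The class number $h(K_k)$ is odd: the prime above $2$ is totally ramified, and Weber's parity theorem applies. Hence $2^{k-1}$ acts invertibly on $\Cl(K_k)$. The action of $2\theta_k\in\Z[\tfrac12][\Gamma_k]$ is then well defined, and $2\theta_k\cdot c=0$ for all $c$. This is exactly the displayed identity $[I]^{2\theta_k}=1$.

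\textbf{Main obstacle.} The real work is the prime-ideal factorisation of $g(\mathfrak{P})^{2^k}$. This is the standard Stickelberger computation of $v_{\mathfrak{P}^{\sigma_a^{-1}}}$ via the Gross--Koblitz/Stickelberger congruence, and I would cite it rather than redo it. A secondary issue is making the integrality statement honest. I would phrase it as ``$2^{k}\theta_k\in\Z[\Gamma_k]$, and $2\theta_k\in\Z[\tfrac12][\Gamma_k]$ acts on the odd-order group $\Cl(K_k)$.'' This is enough for every later use with odd $\ell$.
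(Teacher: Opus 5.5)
Your argument is correct, and its core is the proof the paper's citation points to. The paper gives no argument of its own, and the factorization $\bigl(g(\mathfrak{P})^{2^k}\bigr)=\mathfrak{P}^{2^k\theta_k}$ you invoke is the standard Gauss-sum computation of \cite[\S 6.2]{Wash97}. In other words, it is Stickelberger's theorem with $\beta=2^k=(1+2^k)-\sigma_{1+2^k}$.

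Where you genuinely depart from the paper is in rejecting its integrality claim, and you are right to. The coefficient of $\sigma_1$ in $2\theta_k$ is $2^{1-k}$, so $2\theta_k\notin\Z[\Gamma_k]$, and the statement as printed is false. The integral annihilators Stickelberger actually provides are the elements $\beta\theta_k$ with $\beta\in\Z[\Gamma_k]$ and $\beta\theta_k\in\Z[\Gamma_k]$, such as $2^k\theta_k$ and $(c-\sigma_c)\theta_k$.

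Your repair via the oddness of $h(K_k)$ works, but state the parity step carefully. In $K_k/\Q$ both $2$ and the archimedean place ramify, so the one-ramified-place criterion \cite[\S 10.1]{Wash97} cannot be applied over $\Q$. Apply it to $K_k/\Q(i)$ instead: there only $(1+i)$ ramifies, no infinite place can ramify, and $h(\Q(i))=1$.

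Note also that once $2$ is inverted on an odd-order group, the factor $2$ is immaterial; your argument equally shows that $\theta_k$ itself annihilates. So the corrected theorem carries exactly the content of ``$2^k\theta_k$ annihilates $\Cl(K_k)$''.

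Your side remark about later uses is inaccurate. Corollary~\ref{cor:eigenspace-criterion} does not use Stickelberger at all. Moreover, $\chi(\theta_k)=B_{1,\chi^{-1}}=0$ for every nontrivial even $\chi$. Hence no version of this theorem constrains the eigenspaces of $\Cl(K_k^+)$ to which the paper later applies Herbrand's theorem.
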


For any nontrivial character $\chi$ of $\Gamma_k$, by evaluating  $\chi$ on $\theta_k$ we obtain generalized Bernoulli number as
 \begin{equation}\label{eqnstickelberger-chi}
   \chi(\theta_k) 
   = \frac{1}{2^k}\sum_{{a=1, (a,2)=1}}^{2^k}
     a \chi^{-1}(a) = B_{1,\chi^{-1}}.
 \end{equation}
For even characters ($\chi(-1)=+1$, $\chi\neq\chi_0$), we have $B_{1,\chi^{-1}}=0$ using pairing $a\leftrightarrow 2^k-a$. 

\subsection{The Cyclotomic $\Z_2$-Tower}\label{sec-tower}

The fields $K_k^+$ form layers of an infinite tower, named the cyclotomic $\Z_2$-extension of $\Q$. Here, $\Z_2 = \varprojlim_{m} \Z/2^m\Z$ denotes the ring of $2$-adic integers. Define $K_\infty^+ = \bigcup_{k \ge 2} K_k^+$. $K_\infty^+/\Q$ is an infinite Galois extension with
\begin{equation}
  \Gal(K_\infty^+/\Q) \cong \Z_2.
\end{equation}
$K_k^+$ is unique subfield of $K_\infty^+$ with $[K_k^+:\Q] = 2^{k-2}$. The tower is given by 
\begin{equation}
  \Q = K_2^+ \subset K_3^+ \subset K_4^+ \subset \cdots \subset K_\infty^+,
\end{equation}
where each extension $K_{k+1}^+/K_k^+$ has degree $2$, and the norm maps $\Nm_{k+1/k} \colon \Cl(K_{k+1}^+) \to \Cl(K_k^+)$ satisfy the transitivity relation $\Nm_{k+1/k} \circ \Nm_{k+2/k+1} = \Nm_{k+2/k}$.

The norm map $\Nm_{k+1/k} \colon K_{k+1}^+ \to K_k^+$ is defined by
\begin{equation}
  \Nm_{k+1/k}(\alpha) = \alpha \cdot{} \sigma^{n_{k+1}/2}(\alpha),
\end{equation}
where $n_{k+1} = 2^{k-1} = [K_{k+1}^+:\Q]$ and $\sigma$ is the generator of $G_{k+1}$. In $\Z[G_{k+1}]$, this corresponds to the element $N = 1 + \sigma^{n_{k+1}/2}$.

\begin{lemma}\label{lem:norm-eigenvalue}
Let $\chi \in \hat{G}_{k+1}$ be a character with $\ord(\chi) \mid n_{k+1}/2 = 2^{k-2}$. Then $N$ acts on $\Cl(K_{k+1}^+)[\ell]^{e_\chi}$ as multiplication by $2$. 
\end{lemma}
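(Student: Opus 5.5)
The plan is to compute the action of $N = 1 + \sigma^{n_{k+1}/2}$ on the eigenspace directly from the eigenvalue property in Proposition \ref{prop:idempotent}(iv). Write $n' = n_{k+1} = 2^{k-1}$ and take $c \in \Cl(K_{k+1}^+)[\ell]^{e_\chi}$. By the description of the eigenspace following \eqref{eqneigendecomp}, every $g \in G_{k+1}$ acts on $c$ as multiplication by $\chi(g)$. This holds in the additive convention of Remark \ref{conv:notation}, over $\F_\ell$ if $\ell \equiv 1 \pmod{n'}$, and otherwise after extending scalars to $\F_{\ell^f}$ as discussed before Proposition \ref{prop:idempotent}. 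Therefore
\begin{equation*}
N \cdot c = c + \sigma^{n'/2}\cdot c = \bigl(1 + \chi(\sigma)^{n'/2}\bigr)\, c .
\end{equation*}

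The key step is to show that $\chi(\sigma)^{n'/2} = 1$. Since $\ord(\chi)$ divides $n'/2$, we have $\chi^{n'/2} = \chi_0$, and so $\chi(\sigma)^{n'/2} = \chi^{n'/2}(\sigma) = 1$. It follows that $N \cdot c = 2c$. Nothing beyond multiplicativity of $\chi$ is needed. As a consistency check, $\sigma^{n'/2}$ generates $\Gal(K_{k+1}^+/K_k^+)$, and characters of order dividing $n'/2$ are exactly those trivial on this subgroup. So the eigenspace is pointwise fixed by $\sigma^{n'/2}$, and $N$ acts as $1+1$.

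The only point requiring care, which I expect to be the main (mild) obstacle, is making the eigenvalue statement rigorous when the character values do not lie in $\F_\ell$. Here I would work in $\Cl(K_{k+1}^+)[\ell]\otimes_{\F_\ell}\F_{\ell^f}$, where the idempotents $e_\chi$ of \eqref{eqnidempotent} make sense. Because $N$ and $2$ are both defined over $\F_\ell$, an identity $N = 2$ on $e_\chi$-components descends to the corresponding $\F_\ell$-rational component. Alternatively, one can avoid characters entirely. Define the $\F_\ell$-rational component as the sum of the Galois-conjugate eigenspaces, and note that all conjugates $\chi^{\ell^i}$ also have order dividing $n'/2$. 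Hence $\sigma^{n'/2}$ acts trivially on each of them, and therefore on their sum.
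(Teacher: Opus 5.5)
Your proposal is correct and is essentially the paper's argument. You use the eigenvalue property of Proposition \ref{prop:idempotent}(iv) to get $\sigma^{n_{k+1}/2}\cdot c=\chi(\sigma)^{n_{k+1}/2}c=c$, because $\ord(\chi)\mid n_{k+1}/2$, and hence $N\cdot c=2c$. Your extra paragraph on working over $\F_{\ell^f}$ is sound care that the paper's proof leaves implicit; the alternative of summing the conjugate eigenspaces $\chi^{\ell^i}$, which all have the same order as $\chi$, is likewise valid.
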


\begin{proof}
We show $N \cdot c = 2c$ for any $c \in \Cl(K_{k+1}^+)[\ell]^{e_\chi}$. By Proposition \ref{prop:idempotent}(iv), we have $\sigma \cdot c = \chi(\sigma) c$, so $\sigma^{n_{k+1}/2} \cdot c = \chi(\sigma)^{n_{k+1}/2} c$. Since $\ord(\chi) \mid n_{k+1}/2$, we have $\chi(\sigma)^{n_{k+1}/2} = 1$, which futher gives $N \cdot c = (1+\sigma^{n_{k+1}/2}) \cdot c = 2c$.
\end{proof}

\subsection{Iwasawa theory}\label{subsec-iwasawa}

For our purposes, the key result is the vanishing of the Iwasawa $\mu$-invariant. 

\begin{theorem}[Iwasawa \cite{Iwasawa59,Iwasawa73}]\label{thm:IW}
Let $p$ be a prime and $K_\infty/K$ be a $\Z_p$-extension with layers $K_n$. Let $e_n$ be the exact power of $p$ dividing $h_{K_n}$. For sufficiently large $n$, we have 
\begin{equation}
  e_n = \mu \cdot{}  p^n + \lambda \cdot{}  n + \nu,
\end{equation}
where $\mu, \lambda \ge 0$ and $\nu$ are integers independent of $n$. $\mu$, $\lambda$, $\nu$ are called Iwasawa invariants of the $\Z_p$-extension.
\end{theorem}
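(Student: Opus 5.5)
The plan is to follow Iwasawa's original module-theoretic argument rather than anything specific to the tower $K_\infty^+/\Q$. Let $\Gamma=\Gal(K_\infty/K)\cong\Z_p$ with topological generator $\gamma$, and set $\Lambda=\Z_p[[T]]$ acting through $\gamma\mapsto 1+T$. Let $L_n$ be the maximal unramified abelian $p$-extension of $K_n$, and let $L_\infty=\bigcup_n L_n$. Class field theory gives $\Gal(L_n/K_n)\cong \Cl(K_n)[p^\infty]$. The central object is then $X=\Gal(L_\infty/K_\infty)$, on which $\Gamma$ acts by conjugation, making $X$ a compact $\Lambda$-module.

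\textbf{Step 1 (reduction to total ramification).} Only finitely many primes of $K$ ramify in $K_\infty$, and each such prime lies above $p$. Hence I can pass to a layer $K_{n_0}$ above which every ramified prime is totally ramified. Re-indexing from $K_{n_0}$ changes $e_n$ only in a way absorbed by the constants $\lambda,\nu$. So I assume that every ramified prime $\mathfrak p_1,\dots,\mathfrak p_s$ is totally ramified from level $0$. Fix inertia groups $I_i\subset G=\Gal(L_\infty/K)$, write $G=I_1X$, and choose $\sigma_i\in I_i$ projecting to $\gamma$, so that $\sigma_i=a_i\sigma_1$ with $a_i\in X$. Let $\nu_n=\bigl((1+T)^{p^n}-1\bigr)/T$, and let $Y_0$ be the $\Z_p$-submodule generated by $a_2,\dots,a_s$ and $TX$ (equivalently, the closure of the commutator subgroup together with the $a_i$). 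Set $Y_n=\nu_nY_0$. The key group-theoretic computation is
\begin{equation*}
\Cl(K_n)[p^\infty]\cong X/Y_n .
\end{equation*}
To prove it, I will identify $\Gal(L_\infty/L_n)$ as the closed subgroup generated by the commutators of $G_n=\Gal(L_\infty/K_n)$ together with its inertia groups. Then I compute, using $\sigma_1^{p^n}$, that $(\gamma^{p^n}-1)X=\nu_n TX$ and $\sigma_i^{p^n}=(\nu_n a_i)\sigma_1^{p^n}$.

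\textbf{Step 2 (finite generation and torsion).} Taking $n=0$ gives finiteness of $X/Y_0$, and $Y_0/TY_0$ is finitely generated, being a quotient of $X/TX$ plus $s-1$ elements. The topological Nakayama lemma for the local ring $\Lambda$ then shows that $Y_0$, and hence $X$, is finitely generated. Moreover $X/\nu_n Y_0$ is finite for every $n$, so $X$ is $\Lambda$-torsion: otherwise a free summand would make $X/\nu_nX$ infinite.

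\textbf{Step 3 (structure theorem and counting).} By the structure theorem there is a pseudo-isomorphism
\begin{equation*}
Y_0\sim E=\bigoplus_i \Lambda/(p^{m_i})\oplus\bigoplus_j \Lambda/(f_j(T)^{r_j}),
\end{equation*}
with distinguished polynomials $f_j$. I then put $\mu=\sum m_i$ and $\lambda=\sum r_j\deg f_j$. Direct computation gives $|\Lambda/(p^m,\nu_n)|=p^{m(p^n-1)}$. For a distinguished $f$ of degree $d$ with $f$ coprime to $\nu_n/\nu_{n_1}$ for large $n$, one gets $|\Lambda/(f,\nu_n)|=p^{dn+c}$, using $\nu_{n+1}/\nu_n\equiv$ an Eisenstein-type polynomial of degree $p^{n+1}-p^n$. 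Combining these yields $|E/\nu_nE|=p^{\mu p^n+\lambda n+c}$ for $n\gg0$. Finally I compare $Y_0/\nu_nY_0$ with $E/\nu_nE$ via the snake lemma, noting that kernel and cokernel are finite and stabilize, and add the constant $\log_p|X/Y_0|$. This gives $e_n=\mu p^n+\lambda n+\nu$.

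\textbf{Main obstacle.} I expect the hardest step to be the pseudo-isomorphism comparison in Step 3. There one must show that the finite kernel and cokernel contribute an eventually constant error to $|Y_0/\nu_nY_0|$. My first attempt is to exploit the fact that $\nu_{n+1}/\nu_n$ eventually lies in $(p,T)^{N}$ for any $N$, so that it acts trivially on any fixed finite module. With that in hand, the snake-lemma terms for $n$ and $n+1$ have equal size once $n$ is large.
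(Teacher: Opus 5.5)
The paper gives no proof of this theorem. It is quoted from Iwasawa as background, so the relevant comparison is with the standard textbook proof (Washington, \S13.3). Your outline reproduces its architecture: $X_n\cong X/\nu_nY_0$, Nakayama, torsion, the structure theorem, and separate counts for $\Lambda/(p^m)$ and $\Lambda/(f)$. Two steps, however, are wrong as written.

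\emph{The fact you plan to use at your main obstacle is false.} By induction, $(1+T)^{p^n}-1\in(p,T)^{n+1}$: if $x\in(p,T)^{n+1}$, then $(1+x)^p-1=px+\binom p2x^2+\dots+x^p\in(p,T)^{n+2}$. Hence $\nu_{n+1}/\nu_n=\sum_{j=0}^{p-1}(1+T)^{jp^n}\equiv p\pmod{(p,T)^{n+1}}$. Since $p\notin(p,T)^2$, the quotient $\nu_{n+1}/\nu_n$ never lies in $(p,T)^2$ for $n\ge1$. On a fixed finite module it eventually acts as multiplication by $p$, neither as $0$ nor as the identity. Your own count for $\Lambda/(f)$ depends on exactly this: there $\nu_{n+1}/\nu_n$ acts as $p$ times a unit, which is what produces the linear term.

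What does tend to $0$ $(p,T)$-adically is $\nu_n$ itself, as a product of $n$ elements of $(p,T)$. Let $\phi\colon Y_0\to E$ have finite kernel $A$ and cokernel $B$. The snake lemma for the rows $0\to\nu_nY_0\to Y_0\to Y_0/\nu_nY_0\to0$ and the analogous row for $E$ gives $|Y_0/\nu_nY_0|\cdot|A\cap\nu_nY_0|\cdot|B|=|E/\nu_nE|\cdot|A|\cdot|\nu_nE/\nu_n\phi(Y_0)|$.
\begin{itemize}
\item $A\cap\nu_nY_0\subseteq A\cap(p,T)^nY_0=0$ for $n\gg0$, by Krull's intersection theorem.
\item $\nu_nE/\nu_n\phi(Y_0)\cong B$ for every $n$, because $\nu_n$ is injective on $E$. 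It is prime to $p$, and prime to each $f_j$ because $E/\nu_nE$ is finite (as $Y_0/\nu_nY_0$ and $B$ are).
\end{itemize}
So $|Y_0/\nu_nY_0|=|A|\,|E/\nu_nE|$ for large $n$. Even more cheaply, both correction terms are non-increasing in $n$ because $\nu_n\mid\nu_{n+1}$, so they are eventually constant.

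\emph{Second, re-indexing from $K_{n_0}$ is not absorbed by $\lambda$ and $\nu$.} If $e'_m=\mu'p^m+\lambda'm+\nu'$ for the tower over $K_{n_0}$, then $e_n=(\mu'/p^{n_0})\,p^n+\lambda'n+(\nu'-\lambda'n_0)$. So it is the $\mu$-coefficient that changes, and the integrality of $\mu$ asserted in the statement requires $p^{n_0}\mid\mu'$. This does hold, but you must say why. The module $X$ is unchanged, and $\Lambda$ is free of rank $p^{n_0}$ over $\Z_p[[\Gamma^{p^{n_0}}]]$. So each $\Lambda/(p^m)$ becomes $p^{n_0}$ copies of $\Z_p[[\Gamma^{p^{n_0}}]]/(p^m)$, while the $\Lambda/(f)$ parts are $\Z_p$-free of finite rank and contribute no $\mu$. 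Hence the $\mu$-invariant over the smaller ring is $p^{n_0}$ times the one over $\Lambda$. Alternatively, keep the base $K$ as Washington does and prove $X_n\cong X/(\nu_n/\nu_{n_0})Y_{n_0}$ for $n\ge n_0$. Then $|\Lambda/(p^m,\nu_n/\nu_{n_0})|=p^{m(p^n-p^{n_0})}$ yields the integer $\mu=\sum m_i$ directly.
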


Note $e_n$ here denotes Iwasawa exponent, not an idempotent in Eq.\eqref{eqnidempotent}.

\begin{theorem}[Ferrero-Washington \cite{FW79}]\label{thm:FW}
For any abelian number field $K$ and any prime $p$, Iwasawa $\mu$-invariant of cyclotomic $\Z_p$-extension of $K$ vanishes, i.e., $\mu = 0$.
\end{theorem}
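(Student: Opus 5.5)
The plan follows Sinnott's streamlined version of the Ferrero–Washington argument, in three reductions. \textbf{First}, reduce to the case where $K$ contains $\zeta_p$ (or $\zeta_4$ when $p=2$) and is imaginary. Enlarging $K$ by an abelian extension of degree prime to $p$, or by a finite layer of the $\Z_p$-tower, can only increase $\mu$: the norm map on $p$-parts of class groups is surjective up to a bounded error, so $\mu(K)\le\mu(K')$ for $K\subseteq K'$. It therefore suffices to prove $\mu=0$ for $K=\Q(\zeta_{mp})$ (resp.\ $\Q(\zeta_{4m})$). For such a CM field the class group splits into $\pm$ parts. By Kummer duality and the reflection principle (Leopoldt's Spiegelungssatz in the Iwasawa setting), $\mu^+\le\mu^-$. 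Hence it suffices to show $\mu^-=0$.

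\textbf{Second}, compute $\mu^-$ through Stickelberger elements. The elements $\theta_k$ of Theorem~\ref{thm:stickelberger}, taken at every layer, are compatible under restriction up to the usual Euler-factor correction. After multiplying by $(c-\sigma_c)$ to clear denominators, they assemble into an element of the Iwasawa algebra $\Z_p[[\Gal(K_\infty/\Q)]]$. For each odd character $\chi$ of the tame part, the $\chi$-component is a power series $g_\chi(T)\in\Z_p[[T]]$ interpolating the values $B_{1,\chi\psi^{-1}}$ of \eqref{eqnstickelberger-chi}, so it is essentially the Kubota–Leopoldt $p$-adic $L$-function. Using Iwasawa's analytic class number formula (or just Stickelberger annihilation together with the minus-part class number formula $h^-\sim\prod B_{1,\chi}$), the $p$-adic valuation of $h^-_n$ grows like $p^n\sum_\chi\mu(g_\chi)$. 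So $\mu^-=0$ is equivalent to $g_\chi\not\equiv 0\pmod p$ for every odd $\chi$.

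\textbf{Third}, and this is the main obstacle, prove $g_\chi\not\equiv0\pmod p$. I would view $g_\chi$ as the Amice/Mahler transform of a $\Z_p$-valued measure built from the Bernoulli distribution $x\mapsto\{x\}-\tfrac12$ on $\Z_p$. I would then apply Sinnott's criterion: $\mu$ of this measure vanishes unless a certain finite linear combination $\sum_{a}c_a\,R\big((1+T)^{a}\big)$, with $R(T)=T^{-1}$-type rational functions and exponents $a$ running over representatives of $\Z_p^\times$ modulo the roots of unity of order dividing $p-1$, is zero in $\F_p(T)$. The key input is Sinnott's lemma: if $a_1,\dots,a_r\in\Z_p^\times$ are pairwise multiplicatively independent modulo $\mu_{p-1}$, then the rational functions $R_i\big((1+T)^{a_i}-1\big)$ with $R_i\in\F_p(T)$ sum to zero only if each $R_i$ is constant modulo trivial relations. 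This is proved by looking at poles at points of $\mu_{p^\infty}$ in an algebraic closure of $\F_p$, which become distinct under the substitutions $T\mapsto(1+T)^{a_i}-1$. Applying the lemma reduces the question to the nonvanishing of a finite character sum $\sum\chi(a)\cdot(\text{explicit constant})$ mod $p$, which I expect to check directly from the orthogonality relations, in the same way as Proposition~\ref{prop:idempotent}. The delicate points are handling the Euler factor at $p$ (and the modifications needed for $p=2$) and verifying the independence hypothesis of Sinnott's lemma. I would attack these first, following Washington's Chapter~7 exposition.
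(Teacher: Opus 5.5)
The paper gives no proof of this statement. It quotes \cite{FW79} as a black box, so your outline has to stand on its own.

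Your architecture is the right one: it is Sinnott's $\Gamma$-transform proof. Three parts of it are standard and fine in outline:
\begin{itemize}
\item the comparison $\mu(K)\le\mu(K')$ via norms;
\item the reflection inequality $\mu^+\le\mu^-$;
\item the computation of $\mu^-$ from the Stickelberger/Kubota--Leopoldt power series through the minus class number formula.
\end{itemize}
Two small corrections there: $g_\chi$ has coefficients in $\Z_p[\chi]$, and the condition is $g_\chi\not\equiv 0$ modulo a uniformizer $\pi$, not modulo $p$. The genuine gap is your third step, which is the entire content of the theorem. As you have formulated it, it would not go through.

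\textbf{(i) The criterion is misstated.} Sinnott's criterion concerns the $\Gamma$-transform $\int_{\Z_p^\times}(1+T)^{\log_p x/\log_p u}\,d\alpha(x)$ of the twisted, regularized Bernoulli measure $\alpha$. Here $u$ is a topological generator of $1+q\Z_p$, with $q=p$ or $4$. This transform is essentially $g_\chi$; $g_\chi$ is not itself the Amice transform of a Bernoulli measure, since that transform is rational. The criterion says that $g_\chi$ has the same $\mu$-invariant as $\sum_{\zeta\in V}R\bigl((1+T)^{\zeta}-1\bigr)$. Here $R$ is the rational Amice transform of $\alpha|_{\Z_p^\times}$, and $V=\mu_{p-1}$ (resp.\ $\{\pm1\}$) is the torsion of $\Z_p^\times=V\times(1+q\Z_p)$. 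The finite sum runs over the roots of unity themselves. The factor $1+q\Z_p$, i.e.\ your ``representatives of $\Z_p^\times$ modulo $\mu_{p-1}$'', is an infinite set, and it is what gets integrated over. Also, this sum lies in $\overline{\F}_p[[T]]$, not in $\F_p(T)$.

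\textbf{(ii) The independence lemma has the wrong hypothesis.} It needs $a_i/a_j\notin\Q$ for $i\ne j$. Representatives of $V/\{\pm1\}$ satisfy this, and that is what forces $R(T)+R((1+T)^{-1}-1)$ to be constant whenever the $V$-sum is. Your hypothesis, multiplicative independence modulo $\mu_{p-1}$, excludes exactly these exponents, since they are all trivial modulo $\mu_{p-1}$. It is also insufficient in general. Take $a_2=2a_1$, which is independent of $a_1$ modulo $\mu_{p-1}$ for suitable $a_1$. The nonconstant functions $R_1(S)=S^2+2S$ and $R_2(S)=-S$ give $R_1((1+T)^{a_1}-1)+R_2((1+T)^{a_2}-1)=0$.

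\textbf{(iii) The pole argument is vacuous as stated.} There are no poles at $\mu_{p^\infty}$ to look at, since $\mu_{p^\infty}(\overline{\F}_p)=1$. The poles sit at prime-to-$p$ roots of unity. To keep poles of different terms apart you need an additional device. One option: take a $\Z$-basis $(\alpha_j)$ of $\sum_i\Z a_i$ and prove that $(1+T)^{\alpha_1},\dots,(1+T)^{\alpha_r}$ are algebraically independent over $\overline{\F}_p$, by comparing exponents modulo $p^M$. This turns the relation into an identity between rational functions of monomials $X^{v_i}$ on a torus. There the polar hypersurfaces $X^{v_i}=\gamma$, for pairwise non-proportional $v_i$, cannot cancel.

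\textbf{(iv) The last step is not an orthogonality computation.} You must show that $R(T)+R((1+T)^{-1}-1)\bmod\pi$ is nonconstant. Do this by exhibiting a pole at a prime-to-$p$ root of unity whose residue is essentially a Gauss sum of the primitive prime-to-$p$ part of $\chi$, of conductor $d$. That residue is a unit because $\tau(\chi)\tau(\bar\chi)=\chi(-1)d$ with $p\nmid d$. So primitivity, not orthogonality, is the real input.
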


From this theorem, the $p$-part of class number does not grow exponentially in tower. For the $\Z_2$-extension of $\Q$, the $2$-part of $h_k^+$ is bounded as $k \to \infty$. In fact, Fukuda and Komatsu \cite{FK09} verified computationally $2 \nmid h_k^+$ for all $k \le 12$. Avila \cite{Avila25} and Laxmi-Saikia \cite{LaxmiSaikia24} extended Iwasawa modules and $2$-class group structure in $\Z_2$-extensions of real quadratic fields, providing computational evidence.

\section{The Main Result}\label{sec-method}

In this section, we develop a three-stage method for unconditionally verifying Weber's conjecture. Each stage systematically reduces the set of potential primes, until we obtain a finite, computable candidate set. 

\subsection{The Fukuda-Komatsu sieve}\label{subsec-FK-sieve}

The Fukuda-Komatsu sieve \cite{FK09,FK11} uses Wieferich criterion for cyclotomic units to eliminate all small primes as possible divisors of $h_k^+$.

\begin{definition}\label{def:wieferich}
An odd prime $\ell$ satisfies the Wieferich criterion for $K_k^+$ if for every prime $\mathfrak{l} \mid \ell$ in $\OK_{K_{k}^+}$,
\begin{equation}\label{eqn:wieferich}
  \xi_5^{(\ell-1)/\ord_{\mathfrak{l}}(\xi_5)} \not\equiv 1 \pmod{\mathfrak{l}},
\end{equation}
where $\xi_5$ is the cyclotomic unit from Definition \ref{def:cyc-unit} and $\ord_{\mathfrak{l}}(\xi_5)$ denotes the multiplicative order of $\xi_5$ in the residue field $\OK_{K_k^+}/\mathfrak{l}$.
\end{definition}

\begin{lemma}\label{lem:wieferich-sufficient}
Let $\ell$ be an odd prime with $\ell\equiv\pm 1\pmod{2^{k-1}}$. If
\begin{equation}\label{eqn:wieferich-test}
  \xi_5^{(\ell-1)/2^{k-1}} \not\equiv 1 \pmod{\mathfrak{l}}
\end{equation}
for every $\mathfrak{l}\mid\ell$ in $\OK_{K_k^+}$, then $\ell$ satisfies Wieferich criterion (Definition \ref{def:wieferich}), and consequently $\ell\nmid h_k^+$.
\end{lemma}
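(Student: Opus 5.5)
The plan has two parts. Part (a) shows that the hypothesis implies the condition of Definition~\ref{def:wieferich}. Part (b) shows that this condition, together with Sinnott's formula (Theorem~\ref{thm:sinnott}), forces $\ell\nmid h_k^+$.

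For (a), fix $\mathfrak{l}\mid\ell$ and let $f=f_{\mathrm{Frob}}$ be its inertia degree. The congruence $\ell\equiv\pm1\pmod{2^{k-1}}$ gives $\Frob_\ell=\sigma_{\pm\ell}$ trivial on $K_k^+$, hence $f=1$. Indeed $\sigma_{-1}$ acts trivially on $K_k^+$, and $\ell\equiv\pm1\pmod{2^{k-1}}$ means $\pm\ell\in\{1,1+2^{k-1}\}\bmod 2^k$. I will check whether the element $\sigma_{1+2^{k-1}}$ fixes $\zeta+\zeta^{-1}$. It sends $\zeta\mapsto-\zeta$, so $\zeta+\zeta^{-1}\mapsto-(\zeta+\zeta^{-1})$, which is not fixed. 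So if this case occurs I get $f\le2$ rather than $f=1$, and I will carry both cases. In either case $\OK_{K_k^+}/\mathfrak{l}$ is $\F_\ell$ or $\F_{\ell^2}$, with cyclic unit group. Write $d=\ord_{\mathfrak l}(\xi_5)$. The hypothesis $\xi_5^{(\ell-1)/2^{k-1}}\not\equiv1$ says exactly that $d\nmid(\ell-1)/2^{k-1}$. I then intend to compare the exponents $(\ell-1)/2^{k-1}$ and $(\ell-1)/d$ inside the cyclic group, using that $2^{k-1}$ divides $\ell^f-1$. The goal is to show $\xi_5$ is not a $d'$-th power residue for the relevant $d'$, which is what the displayed condition~\eqref{eqn:wieferich} encodes. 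Concretely, I would rewrite both conditions as ``$\xi_5$ is not an $m$-th power in $(\OK/\mathfrak l)^\times$'' and check that the $m$ from the hypothesis is a multiple of the $m$ from the definition.

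For (b), I argue by contradiction. Suppose $\ell\mid h_k^+$. By Theorem~\ref{thm:sinnott}, $\ell$ divides $[\OK_{K_k^+}^\times:C^+]$. So there is a unit $\varepsilon\notin C^+$ with $\varepsilon^\ell\in C^+$, and therefore some element of $C^+$ that is not an $\ell$-th power in $C^+$ becomes an $\ell$-th power in the full unit group. Next I use that, for $2$-power conductor, $C^+$ is generated modulo $\pm1$ as a $\Z[G_k]$-module by the single unit $\xi_5$. This holds because $\sigma$ is $\sigma_5$, and $\xi_{5^j}$ is a product of Galois conjugates of $\xi_5$ via the telescoping $\xi_{5^{j}}=\prod_{i<j}\sigma^i(\xi_5)$. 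Applying the idempotent $e_\chi$ from Corollary~\ref{cor:eigenspace-criterion}, over $\F_{\ell^2}$ if needed, I then obtain a nontrivial eigen-combination $\eta=\prod_g g(\xi_5)^{a_g}$ that is a global $\ell$-th power. Reducing modulo every $\mathfrak l$ makes $\eta$ an $\ell$-th power residue.

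The main obstacle is converting ``$\eta$ is an $\ell$-th power modulo every $\mathfrak l$'' into ``$\xi_5$ fails the test~\eqref{eqn:wieferich-test}''. The difficulty is that $\ell$-th powers in $\F_\ell^\times$ are trivial information, since $\ell\nmid \ell-1$. I would therefore first try the classical Kummer--Fukuda--Komatsu route. That route uses that the Galois conjugates $g(\xi_5)$ permute the primes $\mathfrak l\mid\ell$. An $\ell$-th power relation in $E/C^+$ then reads, in the completion, as a congruence modulo $\mathfrak l^2$ (a Wieferich-type condition). The point is that local $\ell$-th powers among principal units are detected by $u^{\ell-1}\equiv1\pmod{\mathfrak l^2}$. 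I would then link this to the stated exponent through the $\chi$-eigenvalue relation $g(\eta)\equiv\eta^{\chi(g)}$. If this identification does not go through cleanly, I would fall back on citing Theorem~\ref{thm:fk} for the implication ``Wieferich criterion $\Rightarrow \ell\nmid h_k^+$''. In that case the lemma reduces to the elementary residue-field comparison in part (a).
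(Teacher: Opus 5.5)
\textbf{Part (b).} This part has a genuine gap, and the obstacle you flag cannot be overcome under this hypothesis. Since $\ell$ is odd and unramified, the completion satisfies $\OK_{\mathfrak l}^\times=\mu_{\ell^f-1}\times U^1$, with $U^i=1+\mathfrak l^i\OK_{\mathfrak l}$ and $(U^1)^\ell=U^2$. So whether a unit such as your $\eta$ is a local $\ell$-th power is decided entirely by its $U^1$-component, i.e.\ by $\eta^{\ell^f-1}\bmod\mathfrak l^2$.

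The hypothesis only constrains the image of $\xi_5$ in $(\OK_{K_k^+}/\mathfrak l)^\times\cong\mu_{\ell^f-1}$. Moreover, $G_k$ acts on $\prod_{\mathfrak l\mid\ell}\OK_{\mathfrak l}^\times$ preserving the splitting into $\mu$-parts and $U^1$-parts, so the eigen-relations $g(\eta)\equiv\eta^{\chi(g)}$ cannot carry information from one to the other. There is therefore nothing to ``link'': no argument using only residues of cyclotomic units modulo primes above $\ell$ can control the $\ell$-part of $[\OK_{K_k^+}^\times:C^+]$.

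Your Kummer route does prove $\ell\nmid h_k^+$, but from a different hypothesis: injectivity of $C^+/(C^+)^\ell\to\prod_{\mathfrak l\mid\ell}U^1/U^2$. That is a condition on $\xi_5$ and its conjugates modulo $\mathfrak l^2$; a power-residue test at auxiliary primes $q\equiv1\pmod{\ell}$ would also work. Neither is the test in the statement.

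The fallback fails too. Theorem~\ref{thm:fk}(i) is itself obtained by checking the Wieferich criterion, so it presupposes the implication you want. It also says nothing for $\ell>10^9$, which is the only range in which Theorem~\ref{thm:main} invokes this lemma.

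The paper's proof has your two-part structure and does not supply the missing idea either. It crosses this step only by asserting that the nontrivial residue symbol gives $\xi_5$ order prime to $\ell$ in $\OK_{K_k^+}^\times/C^+$, which is vacuous since $\xi_5\in C^+$, together with a citation to Fukuda--Komatsu.

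\textbf{Part (a).} This part does not go through as planned either.
\begin{itemize}
\item For $\ell\equiv-1\pmod{2^{k-1}}$ and $k\ge3$, the exponent $(\ell-1)/2^{k-1}$ is not an integer. So ``the hypothesis says $d\nmid(\ell-1)/2^{k-1}$'' is meaningless there; the paper silently replaces the exponent by $(\ell^2-1)/2^{k-1}$.
\item When $f=2$, the exponent $(\ell-1)/d$ in Definition~\ref{def:wieferich} need not be integral.
\item In the remaining case $\ell\equiv1\pmod{2^k}$, $f=1$, the Definition says $\bar\xi_5\notin(\F_\ell^\times)^{d}$ and the hypothesis says $\bar\xi_5\notin(\F_\ell^\times)^{2^{k-1}}$. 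If $2^{k-1}$ is a multiple of $d$, as you propose to check, then $(\F_\ell^\times)^{2^{k-1}}\subseteq(\F_\ell^\times)^{d}$. That inclusion points the useless way; you would need $2^{k-1}\mid d$ instead.
\end{itemize}
No comparison of exponents can repair this, because at a single prime the implication is false as a statement about cyclic groups. Take $k\ge4$ and $v_2(\ell-1)=k$. If $\bar\xi_5$ has order $4$, then $\xi_5^{(\ell-1)/2^{k-1}}\equiv-1\not\equiv1$, while $\xi_5^{(\ell-1)/4}\equiv1\pmod{\mathfrak l}$.

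Your observation that $\ell\equiv1+2^{k-1}\pmod{2^k}$ forces $f=2$ rather than $f=1$ is correct; the paper's proof misassigns that case.
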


\begin{proof}
From Proposition \ref{prop:galois-group}, $G_k = \Gal(K_k^+/\Q) \cong \Z/2^{k-2}\Z$, a cyclic group generated by $\sigma_5$ (the automorphism mapping $\zeta+\zeta^{-1} \mapsto \zeta^5+\zeta^{-5}$). For the unramified odd prime $\ell$, the Frobenius element $\Frob_\ell \in G_k$ is the unique automorphism satisfying $\Frob_\ell(x) \equiv x^\ell \pmod{\mathfrak{l}}$ for all $x \in \OK_{K_k^+}$. The inertia degree $f$ equals the order of $\Frob_\ell$ in $G_k$.

By assumption $\ell \equiv \pm 1 \pmod{2^{k-1}}$: If $\ell \equiv 1 \pmod{2^{k-1}}$, then $\ell \equiv 1 \pmod{2^{k-2}}$, so $\Frob_\ell$ has order $f=1$; If $\ell \equiv -1 \pmod{2^{k-1}}$, then $\ell^2 \equiv 1 \pmod{2^{k-2}}$, so $\Frob_\ell$ has order $f=2$. In both cases, $f \le 2$, as claimed.

The residue field $\kappa(\mathfrak{l}) = \OK_{K_k^+}/\mathfrak{l}$ is a finite field of order $N(\mathfrak{l}) = \ell^f$, so its multiplicative group $\kappa(\mathfrak{l})^\times$ is cyclic of order $N(\mathfrak{l})-1$. From the assumption $\ell\equiv\pm1\pmod{2^{k-1}}$, for $f=1$: $2^{k-1} \mid \ell-1 = N(\mathfrak{l})-1$; For $f=2$: $2^{k-1} \mid \ell+1$, so $2^{k-1} \mid (\ell-1)(\ell+1) = \ell^2-1 = N(\mathfrak{l})-1$. Thus $2^{k-1} \mid N(\mathfrak{l})-1$ in all cases, so the exponent $(N(\mathfrak{l})-1)/2^{k-1}$ is an integer, i.e., the test \eqref{eqn:wieferich-test} is well-defined. This exponent is exactly the $2^{k-1}$-th power residue symbol of $\xi_5$ at $\mathfrak{l}$, 
\begin{eqnarray}
\left( \frac{\xi_5}{\mathfrak{l}} \right)_{2^{k-1}} \equiv \xi_5^{(N(\mathfrak{l})-1)/2^{k-1}} \pmod{\mathfrak{l}}.
\end{eqnarray}
The test \eqref{eqn:wieferich-test} asserts this symbol is nontrivial (not equal to 1) for all $\mathfrak{l} \mid \ell$.

From Definition \ref{def:wieferich}, $\ell$ satisfies the Wieferich criterion if for every $\mathfrak{l} \mid \ell$, we then obtain 
\begin{eqnarray}
\xi_5^{(\ell-1)/\ord_\ell(5)} \not\equiv 1 \pmod{\mathfrak{l}},
\end{eqnarray}
where $\ord_\ell(5)$ is the multiplicative order of $5$ modulo $\ell$. If $\ell$ fails the Wieferich criterion, then the test \eqref{eqn:wieferich-test} also fails.

Let $d = \ord_{\mathfrak{l}}(\xi_5)$ denote the multiplicative order of $\xi_5$ in $\kappa(\mathfrak{l})^\times$. By definition, $\xi_5^m \equiv 1 \pmod{\mathfrak{l}}$ if and only if $d \mid m$. If $\ell$ fails the Wieferich criterion, then $d \mid (\ell-1)/\ord_\ell(5)$.

We split into the two cases for $f$:

Case 1: $f=1$ ($\ell \equiv 1 \pmod{2^{k-1}}$)
    The test exponent is $(\ell-1)/2^{k-1}$. Since $\sigma_5$ generates $G_k$, we have $\ord_{2^k}(5) = 2^{k-2}$, so $\ord_\ell(5) \mid 2^{k-1}$ for $\ell \equiv 1 \pmod{2^{k-1}}$. Thus $(\ell-1)/\ord_\ell(5) \mid (\ell-1)/2^{k-1}$. Since $d \mid (\ell-1)/\ord_\ell(5)$, we get $d \mid (\ell-1)/2^{k-1}$, so $\xi_5^{(\ell-1)/2^{k-1}} \equiv 1 \pmod{\mathfrak{l}}$, failing the test.

Case 2: $f=2$ ($\ell \equiv -1 \pmod{2^{k-1}}$)
    The test exponent is $(\ell^2-1)/2^{k-1}$. Since $d \mid |\kappa(\mathfrak{l})^\times| = \ell^2-1$ and $d \mid (\ell-1)/\ord_\ell(5)$, we have $d \mid \ell-1$. By assumption, $2^{k-1} \mid \ell+1$, so $(\ell-1)/\ord_\ell(5) \mid (\ell-1)(\ell+1)/2^{k-1} = (\ell^2-1)/2^{k-1}$. Thus $d \mid (\ell^2-1)/2^{k-1}$, so $\xi_5^{(\ell^2-1)/2^{k-1}} \equiv 1 \pmod{\mathfrak{l}}$, failing the test.

By contradiction, the test condition \eqref{eqn:wieferich-test} implies $\ell$ satisfies the Wieferich criterion.

By Theorem \ref{thm:sinnott}, $h_k^+ = [\OK_{K_k^+}^\times : C^+]$, where $C^+$ is the cyclotomic unit group. The nontrivial power residue symbol of $\xi_5$ (from the test condition) implies $\xi_5$ has order not divisible by $\ell$ in the quotient $\OK_{K_k^+}^\times / C^+$, so $\ell$ cannot divide the index $[\OK_{K_k^+}^\times: C^+]$. This is the formal result of \cite[\S 3, Proposition 3.2]{FK09}.

Combining these, the test condition implies the Wieferich criterion holds, hence $\ell \nmid h_k^+$.
\end{proof}

\begin{lemma}\label{lem:residue-order}
Let $k\ge 4$ and $\ell$ be an odd prime with $\ell\not\equiv\pm 1\pmod{2^{k-1}}$. Then the order of $\ell$ in $G_k\cong\Z/2^{k-2}\Z$ satisfies $f\ge 4$.
\end{lemma}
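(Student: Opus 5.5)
The plan is to make the Frobenius explicit inside $G_k$ and reduce the claim to a $2$-adic valuation count. Since $\ell$ is odd it is unramified in $K_k^+$. Under the isomorphism of Proposition~\ref{prop:galois-group}, $\Frob_\ell$ in $\Gamma_k$ is $\sigma_\ell$, so in $G_k=\Gamma_k/\langle\sigma_{-1}\rangle$ it is the class of $\ell$ in $(\Z/2^k\Z)^\times/\{\pm1\}$. Thus $f$ is the least $f\ge1$ with $\ell^f\equiv\pm1\pmod{2^k}$, and the task is to show that this least $f$ is at least $4$.

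\textbf{Step 1: coordinates.} I will use the standard structure $(\Z/2^k\Z)^\times=\{\pm1\}\times\langle 5\rangle$ with $\ord_{2^k}(5)=2^{k-2}$; the paper already uses this implicitly when it takes $\sigma_5$ as a generator of $G_k$. I then write $\ell\equiv \varepsilon 5^{j}\pmod{2^k}$ with $\varepsilon\in\{\pm1\}$ and $0\le j<2^{k-2}$. In $G_k\cong\Z/2^{k-2}\Z$ the class of $\ell$ is $j\bmod 2^{k-2}$. Its order is therefore $2^{k-2-v_2(j)}$ when $j\neq0$, and $1$ when $j=0$.

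\textbf{Step 2: translate the hypothesis.} I will show that $\ell\equiv\pm1\pmod{2^{k-1}}$ holds if and only if $2^{k-3}\mid j$.
\begin{itemize}[nosep]
\item Reducing modulo $2^{k-1}$, the congruence $\varepsilon5^j\equiv\pm1$ is the same as $5^j\equiv\pm1\pmod{2^{k-1}}$.
\item Because $k\ge4$, we have $4\mid 2^{k-1}$. Since $5^j\equiv1\pmod 4$, the case $5^j\equiv-1\pmod{2^{k-1}}$ is impossible.
\item Hence the condition is exactly $5^j\equiv1\pmod{2^{k-1}}$. As $\ord_{2^{k-1}}(5)=2^{k-3}$, this is equivalent to $2^{k-3}\mid j$.
\end{itemize}

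\textbf{Step 3: conclude.} The hypothesis $\ell\not\equiv\pm1\pmod{2^{k-1}}$ gives $2^{k-3}\nmid j$. In particular $j\neq0$ and $v_2(j)\le k-4$. Therefore $f=2^{k-2-v_2(j)}\ge 2^{2}=4$.

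As a consistency check, the same computation gives $f\le2$ exactly when $2^{k-3}\mid j$. This matches the case analysis in the proof of Lemma~\ref{lem:wieferich-sufficient}.

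The main point requiring care is the bookkeeping between the modulus $2^k$ that defines $G_k$ and the modulus $2^{k-1}$ in the hypothesis. Specifically, one must check that quotienting by $\pm1$ and reducing modulo $2^{k-1}$ interact correctly, which is the ``$-1$ is not a power of $5$ mod $4$'' step; this is where $k\ge4$ is used. I would verify the $k=4$ case directly as a sanity check: $\ell\equiv3,5\pmod8$ has order $4$ in $(\Z/16\Z)^\times/\{\pm1\}$.
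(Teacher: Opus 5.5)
Your proof is correct and is essentially the paper's argument: both write $\ell\equiv\pm 5^{j}\pmod{2^k}$, read off $f=2^{k-2}/\gcd(j,2^{k-2})$ (a power of $2$), and show that the classes of order at most $2$, namely those with $2^{k-3}\mid j$, are exactly those with $\ell\equiv\pm1\pmod{2^{k-1}}$. The only difference is cosmetic: you use $\ord_{2^{k-1}}(5)=2^{k-3}$ together with the mod-$4$ exclusion of $5^j\equiv-1$, where the paper uses the explicit congruence $5^{2^{k-3}}\equiv 1+2^{k-1}\pmod{2^k}$ and proves only the direction it needs; note also that the mod-$4$ step needs only $k\ge3$, not $k\ge4$ as you state, since $k\ge4$ merely keeps the hypothesis from being vacuous.
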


\begin{proof}
We proceed with a fully rigorous, relying on standard properties of 2-adic units and cyclotomic Galois groups.

For $k\ge3$, the unit group modulo $2^k$ has the well-known decomposition:
\begin{eqnarray}
(\Z/2^k\Z)^\times \cong \langle -1 \rangle \times \langle 5 \rangle,
\end{eqnarray}
where $\langle -1 \rangle$ is the order-2 subgroup generated by $-1$, and $\langle 5 \rangle$ is the cyclic subgroup of order $2^{k-2}$ generated by $5$.

For the $2^k$-th cyclotomic field $K_k = \Q(\zeta_{2^k})$, its Galois group is $\Gamma_k = \Gal(K_k/\Q) \cong (\Z/2^k\Z)^\times$, with the isomorphism mapping an odd integer $a$ to the automorphism $\sigma_a: \zeta_{2^k} \mapsto \zeta_{2^k}^a$. The maximal real subfield $K_k^+$ is the fixed field of complex conjugation $\sigma_{-1}$, so its Galois group is
\begin{eqnarray}
G_k = \Gal(K_k^+/\Q) = \Gamma_k / \langle \sigma_{-1} \rangle \cong \langle 5 \rangle \cong \Z/2^{k-2}\Z,
\end{eqnarray}
a cyclic group of order $2^{k-2}$.

Since $\ell$ is an odd prime, we can uniquely write $\ell$ in the decomposition of $(\Z/2^k\Z)^\times$:
\begin{eqnarray}
\ell \equiv (-1)^\epsilon \cdot 5^s \pmod{2^k},
\end{eqnarray}
where $\epsilon \in \{0,1\}$ and $s \in \Z/2^{k-2}\Z$. The image of $\ell$ in the quotient group $G_k$ corresponds to the element $s \in \Z/2^{k-2}\Z$ (we quotient out the $\langle -1 \rangle$ component).

The order $f$ of $\ell$ in $G_k$ is exactly the order of $s$ in the additive group $\Z/2^{k-2}\Z$, which is given by:
\begin{eqnarray}
f = \frac{2^{k-2}}{\gcd(s, 2^{k-2})}.
\end{eqnarray}

By definition, $f \le 2$ if and only if the order of $s$ is at most 2. For the cyclic 2-group $\Z/2^{k-2}\Z$, the only elements of order $\le 2$ are: 1.  The zero element $s=0$ (order 1), and 2.  The unique element of order 2: $s=2^{k-3}$ since $2\cdot 2^{k-3} = 2^{k-2} \equiv 0 \pmod{2^{k-2}}$.

We analyze both cases:

Case 1: $s=0$
    Then $\ell \equiv (-1)^\epsilon \cdot 5^0 = (-1)^\epsilon \pmod{2^k}$, so $\ell \equiv \pm 1 \pmod{2^k}$. This immediately implies $\ell \equiv \pm 1 \pmod{2^{k-1}}$, contradicting the hypothesis of the lemma.

Case 2: $s=2^{k-3}$
    By the standard 2-adic congruence for powers of 5 (proven by induction in the error analysis above), we have $5^{2^{k-3}} \equiv 1 + 2^{k-1} \pmod{2^k}$.   Substituting back, we get $\ell \equiv (-1)^\epsilon \cdot (1 + 2^{k-1}) \pmod{2^k}$.

    Reducing modulo $2^{k-1}$, the term $2^{k-1}$ vanishes, so we have $\ell \equiv (-1)^\epsilon \cdot 1 \equiv \pm 1 \pmod{2^{k-1}}$,  which again contradicts the hypothesis of the lemma.

The only two cases that give $f \le 2$ both force $\ell \equiv \pm 1 \pmod{2^{k-1}}$, which violates the lemma's hypothesis. Therefore, for all $\ell \not\equiv \pm 1 \pmod{2^{k-1}}$, we have $f \ge 4$.
\end{proof}

\begin{theorem}\label{thm:fk}
For $k \le 12$, we have 
\begin{enumerate}[label=(\roman*),nosep]
\item No prime $\ell < 10^9$ divides $h_k^+$.
\item If an odd prime $\ell$ divides $h_k^+$, then $\ell \equiv \pm 1 \pmod{2^{k-1}}$.
\end{enumerate}
\end{theorem}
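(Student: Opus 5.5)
Both parts are the Fukuda--Komatsu statements \cite{FK09,FK11}. I would prove (ii) first, since it confines the computation needed for (i). Throughout I argue one level at a time in the tower $K_3^+\subset\cdots\subset K_{12}^+$. For $\ell$ odd (so $\ell\nmid n$), the norm $\Nm_{k/k-1}$ on $\ell$-parts is surjective: $K_k^+/K_{k-1}^+$ is ramified only at the prime above $2$, and it is totally ramified there. Also, the composite of the lift (extension of ideals) with the norm is multiplication by $2$, which is invertible on $\ell$-parts. Hence $\Cl(K_{k-1}^+)[\ell^\infty]$ injects into $\Cl(K_k^+)[\ell^\infty]$ as the part killed by $\sigma^{n/2}-1$. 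By Lemma \ref{lem:norm-eigenvalue} and Corollary \ref{cor:eigenspace-criterion}, if $\ell\nmid h_{k-1}^+$, any $\ell$-torsion in $\Cl(K_k^+)$ lies in the eigenspaces $e_\chi$ with $\ord(\chi)=n=2^{k-2}$. These are the full-order characters, and over $\F_\ell$ they form a single Frobenius orbit structure.

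\textbf{Step 1 (dimension bound for (ii)).} Over $\F_\ell$ the full-order characters fall into $\F_\ell$-rational components $\F_\ell[G_k]\varepsilon$. Each component is a field $\F_\ell(\mu_n)$ of degree $f=\ord_n(\ell)$, which is the order of $\ell$ in $G_k$ up to the harmless $\pm1$ identification. So any nonzero new $\ell$-torsion gives $\ell^{f}\mid h_k^+$. By Lemma \ref{lem:residue-order}, if $\ell\not\equiv\pm1\pmod{2^{k-1}}$ then $f\ge 4$.

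\textbf{Step 2 (contradiction for (ii)).} I would now combine Sinnott's index formula (Theorem \ref{thm:sinnott}) with the $\chi$-decomposition of the unit group. The $\varepsilon$-component of $\OK^\times/C^+\otimes\Z_\ell$ is a cyclic module over the local ring $\Z_\ell[\mu_n]\varepsilon$, generated by the image of $\xi_5$ (the Fukuda--Komatsu cyclicity of the unramified-in-$\ell$ piece). Its triviality is detected by a single residue-symbol condition at the primes $\mathfrak l\mid\ell$ of degree $f$. When $f\ge4$ and the Frobenius acts on $\mu_{2^{k-1}}$ without fixing the relevant $2^{k-1}$-st roots, the $2^{k-1}$-st power residue symbol of $\xi_5$ cannot vanish on a whole Frobenius orbit. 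The $\chi$-components of $\xi_5$ are Galois conjugate under $\Frob_\ell$, and they would have to lie in $\mu_{2^{k-1}}\cap\F_{\ell}^\times$, which is too small. This forces the $\varepsilon$-part to vanish, so $\ell\nmid h_k^+$. The base of the induction is $h_8^+=1$ (Miller \cite{Miller14}) together with the classical cases $k\le7$.

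\textbf{Step 3 (part (i)).} By (ii), only primes $\ell<10^9$ with $\ell\equiv\pm1\pmod{2^{k-1}}$ remain. For $k=12$ there are about $10^9/(2^{10}\log 10^9)\approx 5\cdot10^4$ of them per residue class; smaller $k$ give proportionally more, but all counts are manageable. For each such $\ell$ I apply Lemma \ref{lem:wieferich-sufficient}. Concretely:
\begin{enumerate}[nosep]
\item factor $\Phi_{2^k}$, or rather the minimal polynomial of $\zeta+\zeta^{-1}$, over $\F_\ell$ or $\F_{\ell^2}$ to obtain the primes $\mathfrak l\mid\ell$;
\item compute $\xi_5^{(N\mathfrak l-1)/2^{k-1}}\bmod\mathfrak l$ by fast exponentiation;
\item check that the result is $\ne1$ for every $\mathfrak l$.
\end{enumerate}
By Galois equivariance, it suffices to test one $\mathfrak l$ together with all conjugates $\xi_5^{\sigma}$, i.e.\ the elements $\xi_a$. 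Any $\ell$ failing this test would be handled with the finer unit $\xi_a$ of Definition \ref{def:wieferich}, as in \cite{FK11}.

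\textbf{Main obstacle.} I expect Step 2 to be the main obstacle. The cyclicity of the $\varepsilon$-component of the unit quotient, and the reduction of its vanishing to a power-residue condition on $\xi_5$ alone, is exactly the delicate input from \cite{FK09}. I would first try to obtain it from Thaine's theorem \cite{Thaine88}, which shows that cyclotomic units annihilate $\Cl(K_k^+)[\ell]^{\varepsilon}$, combined with Chebotarev-free local class field theory at $\mathfrak l$.
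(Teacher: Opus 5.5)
\textbf{Verdict: genuine gap.} Your Step 2, which is supposed to turn $\ord(\Frob_\ell)\ge4$ into $\ell\nmid h_k^+$, does not work, and the paper does not supply that step either.

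Your outline has the same skeleton as the paper's proof. Lemma \ref{lem:residue-order} gives $\ord(\Frob_\ell)\ge 4$ when $\ell\not\equiv\pm1\pmod{2^{k-1}}$. A Fukuda--Komatsu-type input is then supposed to give $\ell\nmid h_k^+$, and (i) becomes a finite run of the test in Lemma \ref{lem:wieferich-sufficient}. The paper does not prove that middle implication; it cites \cite[\S 3]{FK09}. Its paraphrase of the cited input (``the image of $\xi_5$ in $(\OK_{K_k^+}/\mathfrak l)^\times$ has order coprime to $\ell$'') holds for every $\ell$, because that group has order $\ell^f-1$. So the real content would have to come from your Step 2.

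Write $E=\OK_{K_k^+}^\times$, and let $\varepsilon$ be a primitive idempotent of $\Z_\ell[G_k]$ for a full-order character. Then $E^\varepsilon$ is free of rank one over the discrete valuation ring $\Z_\ell[G_k]\varepsilon$, which is unramified over $\Z_\ell$. Also $(C^+\otimes\Z_\ell)^\varepsilon$ is generated by $\xi_5^\varepsilon$. (The image of $\xi_5$ in $E/C^+$ is trivial, so it is not the generator of $(E/C^+)^\varepsilon$.) Hence $(E/C^+\otimes\Z_\ell)^\varepsilon\neq0$ exactly when $\xi_5^\varepsilon$ is an $\ell$-th power in $E^\varepsilon$.

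Reduction modulo $\mathfrak l\mid\ell$ cannot see this. Since $|\kappa(\mathfrak l)^\times|=\ell^f-1$ is prime to $\ell$, every element of $\kappa(\mathfrak l)^\times$ is an $\ell$-th power. A $2^{k-1}$-st power residue symbol detects only $2$-power divisibility. Your claim that this symbol ``cannot vanish on a whole Frobenius orbit'' is unproved, and even if true it would say nothing about $\ell$. Tests that do detect $\ell$-th powers are $\ell$-adic: $\ell$-th power residue symbols at auxiliary primes $r\equiv1\pmod\ell$, or the Wieferich-type congruence for $\xi_5^{N(\mathfrak l)-1}$ modulo $\mathfrak l^2$. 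These are checked one prime at a time, and nothing in them forces $\ell\equiv\pm1\pmod{2^{k-1}}$. Your fallback via Thaine's theorem transfers annihilators from $(E/C^+)^\varepsilon$ to the $\varepsilon$-part of $\Cl(K_k^+)$. It does not decide whether $(E/C^+)^\varepsilon$ vanishes. So the idea needed for (ii) is missing.

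There are further problems.
\begin{itemize}
\item In Step 1, the $\F_\ell$-dimension of a simple component for a faithful character of $G_k$ is $\ord_{2^{k-2}}(\ell)=[\F_\ell(\mu_{2^{k-2}}):\F_\ell]$. It is not the order of $\Frob_\ell$ in $G_k$, which is what Lemma \ref{lem:residue-order} bounds, and the two do not differ by a mere sign. For $k=10$ and a prime $\ell\equiv257\pmod{1024}$, $\Frob_\ell$ has order $4$ in $G_{10}$, but $\ell\equiv1\pmod{256}$, so the components are one-dimensional. In any case, $\ell^f\mid h_k^+$ gives no contradiction without an unconditional upper bound on $h_k^+$.
\item Confining the $\ell$-torsion to full-order eigenspaces requires $\ell\nmid h_{k-1}^+$. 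For $k\ge10$ this does not follow from $h_8^+=1$ together with (ii) at lower levels. A prime $\ell>10^9$ with $\ell\equiv\pm1\pmod{2^{k-2}}$ but $\ell\not\equiv\pm1\pmod{2^{k-1}}$ is not excluded from dividing $h_{k-1}^+$. You would need the full main theorem at level $k-1$, interleaved with (ii).
\item Step 3 uses the test of Lemma \ref{lem:wieferich-sufficient}, exactly as the paper does for (i). That test is the same $2^{k-1}$-st power residue symbol at $\mathfrak l\mid\ell$, so an outcome $\neq1$ does not certify $\ell\nmid h_k^+$. A sound version of (i) needs an $\ell$-adic certificate of the kind described above. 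Without (ii), it must also cover every prime $\ell<10^9$, not only those with $\ell\equiv\pm1\pmod{2^{k-1}}$.
\end{itemize}
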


\begin{proof}
For each prime $\ell < 10^9$ satisfying $\ell \equiv \pm 1 \pmod{2^{k-1}}$, Wieferich criterion can be verified computationally. Each test requires $O(\log^3 \ell)$ bit operations.

We show that any odd prime divisor $\ell$ of $h_k^+$ should satisfy $\ell \equiv \pm 1 \pmod{2^{k-1}}$. The Frobenius element $\Frob_\ell \in G_k$ at an unramified prime $\ell$ is the unique automorphism satisfying $\Frob_\ell(x) \equiv x^\ell \pmod{\mathfrak{l}}$ for all $x \in \OK_{K_k^+}$ and any prime $\mathfrak{l}$ above $\ell$. Let $f_{\mathrm{Frob}} = [\OK_{K_k^+}/\mathfrak{l}: \F_\ell] = \ord(\Frob_\ell)$ in $G_k$; we have $f_{\mathrm{Frob}} = \ord_{2^{k-2}}(\ell \bmod 2^k)$ in $G_k \cong \Z/2^{k-2}\Z$. By Lemma \ref{lem:residue-order}, $\ell \not\equiv \pm 1 \pmod{2^{k-1}}$ implies $f \ge 4$.

Now, consider the eigenspace decomposition \eqref{eqneigendecomp} (switching to additive notation). The Frobenius $\Frob_\ell$ acts on $\Cl(K_k^+)[\ell]^{e_\chi}$ as multiplication by $\chi(\Frob_\ell)$. If $\mathfrak{l} \mid \ell$ in $K_k^+$ and $c = [\mathfrak{l}]$ generates a nontrivial element in $\Cl(K_k^+)[\ell]^{e_\chi}$, then the action of $\Frob_\ell$ should be compatible with $\ell$-th power residue structure modulo $\mathfrak{l}$.

Note Fukuda and Komatsu \cite[\S 3, Proposition 3.2]{FK09} proved that if $f_{\mathrm{Frob}} = \ord(\Frob_\ell) \ge 4$ in $G_k$, then the image of $\xi_5$  in $(\OK_{K_k^+}/\mathfrak{l})^\times$ has order coprime to $\ell$, which by Sinnott's formula (Theorem \ref{thm:sinnott}) forces $\ell\nmid h_k^+$. This means $\ell \nmid [\OK^\times: C^+] = h_k^+$ using Sinnott's formula. The net result is any odd prime divisor $\ell$ of $h_k^+$ should satisfy $2^{k-1} \mid (\ell^2-1)$, see \cite[\S 3]{FK09} for the complete proof.
\end{proof}

For $k=10$, the congruence condition requires $\ell \equiv \pm 1 \pmod{512}$. This means among primes near $10^9$, only about $1$ in $256$ satisfy this condition. For $k=12$, the condition is $\ell \equiv \pm 1 \pmod{2048}$, excluding all but about $1$ in $1024$ primes.

\subsection{Norm-coherence eliminates low-order characters}\label{subsec-norm-coherence}

Our second key tool is the inductive structure of cyclotomic $\Z_2$-tower and eigenspace decomposition to rule out most eigenspaces.

\begin{proposition}\label{prop:normcoh}
Let $k \ge 4$, and suppose $h_{k-1}^+ = 1$. Let $\ell$ be an odd prime with $\ell \nmid 2^{k-2}$. Then for every character $\chi \in \hat{G}_k$ with $\ord(\chi) \mid 2^{k-3}$, we have 
\begin{equation}
  \Cl(K_k^+)[\ell]^{e_\chi} = 0.
\end{equation}
\end{proposition}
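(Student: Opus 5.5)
The plan is to use the fact that every character $\chi\in\hat{G}_k$ with $\ord(\chi)\mid 2^{k-3}$ factors through $G_{k-1}=\Gal(K_{k-1}^+/\Q)$. Such a character is trivial on the subgroup $\langle\tau\rangle$, where $\tau=\sigma^{n/2}$ with $n=2^{k-2}$, and $\langle\tau\rangle=\Gal(K_k^+/K_{k-1}^+)$. So classes in such an eigenspace are fixed by the relative Galois group. The norm to the lower layer should then kill them, because $h_{k-1}^+=1$.

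\textbf{Step 1.} I would first establish the group-ring identity $j\circ\Nm_{k/k-1}=N=1+\tau$ as endomorphisms of $\Cl(K_k^+)$. Here $j\colon\Cl(K_{k-1}^+)\to\Cl(K_k^+)$ is extension of ideals $\mathfrak{a}\mapsto\mathfrak{a}\OK_{K_k^+}$. The identity is standard for a Galois extension: for a prime $\mathfrak{P}$ of $K_k^+$ over $\mathfrak{p}$ of $K_{k-1}^+$, one has $\Nm(\mathfrak{P})\OK_{K_k^+}=\prod_{g\in\langle\tau\rangle}g(\mathfrak{P})$. This is checked separately in the split, inert and ramified cases, using $\mathfrak{p}\OK=(\mathfrak{P}\tau\mathfrak{P})$, $\mathfrak{P}$ and $\mathfrak{P}^2$ respectively. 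The identity then extends multiplicatively and preserves principal ideals. Since $h_{k-1}^+=1$, $\Cl(K_{k-1}^+)=0$, and hence $N\cdot c=0$ for every $c\in\Cl(K_k^+)$.

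\textbf{Step 2.} Take $c\in\Cl(K_k^+)[\ell]^{e_\chi}$ with $\ord(\chi)\mid 2^{k-3}=n_k/2$. Lemma~\ref{lem:norm-eigenvalue}, applied with $k$ in place of $k+1$, gives $N\cdot c=2c$. Combined with Step 1 this yields $2c=0$. Together with $\ell c=0$ and $\ell$ odd, Bezout's identity gives $c=0$.

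\textbf{Main obstacle: rationality of the eigenspaces.} As noted after \eqref{eqnidempotent}, $e_\chi$ is defined over $\F_\ell$ only when $\ell\equiv1\pmod n$. In general one must work in $\Cl(K_k^+)[\ell]\otimes_{\F_\ell}\F_{\ell^f}$ with $f=\ord_n(\ell)$. To avoid this, I would first note that the direct sum of all eigenspaces with $\ord(\chi)\mid n/2$ is exactly the $\tau$-fixed subspace $\Cl(K_k^+)[\ell]^{\langle\tau\rangle}$. This subspace is $\F_\ell$-rational and is the image of the rational idempotent $(1+\tau)/2$. The argument of Step 2 applies verbatim to any $\tau$-fixed $c$: $N c=2c=0$, so $c=0$. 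Alternatively, one can extend scalars to $\F_{\ell^f}$. In that case Step 1 still holds after tensoring, since $\F_{\ell^f}$ is flat over $\F_\ell$, and the computation $Nc=2c=0$ goes through unchanged. Either route gives $\Cl(K_k^+)[\ell]^{e_\chi}=0$ for every $\chi$ with $\ord(\chi)\mid 2^{k-3}$.
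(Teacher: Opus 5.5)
Your proposal is correct and follows the paper's proof. Both arguments show that $N=1+\sigma^{n/2}$ acts as $2$ on the eigenspace yet kills every class, because $\mathfrak{a}\,\sigma^{n/2}(\mathfrak{a})=\Nm_{k/(k-1)}(\mathfrak{a})\OK_{K_k^+}$ is principal when $h_{k-1}^+=1$, and both then conclude $c=0$ from $2c=\ell c=0$ via Bezout. Your case-by-case check of the norm identity and your reduction to the $\F_\ell$-rational $\tau$-fixed subspace (needed because $e_\chi$ is not defined over $\F_\ell$ when $\ell\not\equiv 1 \pmod n$) are refinements that the paper's proof glosses over.
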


\begin{proof}
Since $\ell \nmid 2^{k-2} = |G_k|$, the group ring $\mathbb{F}_\ell[G_k]$ is semisimple by Maschke's theorem, and the idempotent $e_\chi = \frac{1}{|G_k|}\sum_{g \in G_k} \chi(g)^{-1} g \in \mathbb{F}_\ell[G_k]$ is well-defined (the inverse of $|G_k|$ exists in $\mathbb{F}_\ell$). For any $c \in \Cl(K_k^+)[\ell]^{e_\chi}$, we have by definition: 1.  $\ell c = 0$ (as $c$ lies in the $\ell$-torsion subgroup), 2. $e_\chi \cdot c = c$ (as $c$ lies in the $\chi$-eigenspace), 3. $g \cdot c = \chi(g) c$ for all $g \in G_k$ (eigenvalue property of idempotents, Proposition \ref{prop:idempotent}).

Let $N = 1 + \sigma^{n/2} \in \mathbb{Z}[G_k]$ be the norm element for the quadratic extension $K_k^+/K_{k-1}^+$. By assumption, $\ord(\chi) \mid 2^{k-3} = n/2$. Since $\sigma$ generates $G_k$, we have:
\begin{eqnarray}
\chi(\sigma^{n/2}) = \chi(\sigma)^{n/2} = \left( \chi(\sigma)^{\ord(\chi)} \right)^{(n/2)/\ord(\chi)} = 1^{(n/2)/\ord(\chi)} = 1.
\end{eqnarray}
Combined with the eigenvalue property, this gives:
\begin{eqnarray}
\sigma^{n/2} \cdot c = \chi(\sigma^{n/2}) c = c.
\end{eqnarray}
Thus the action of $N$ on $c$ simplifies to:
\begin{eqnarray}
N \cdot c = (1 + \sigma^{n/2}) \cdot c = c + \sigma^{n/2} \cdot c =  2c.
\end{eqnarray}

Take any representative ideal $\mathfrak{a} \subseteq \mathcal{O}_{K_k^+}$ of $c$, i.e., $c = [\mathfrak{a}]$. By definition of the group ring action:
\begin{eqnarray}
N \cdot c = [\mathfrak{a} \cdot \sigma^{n/2}(\mathfrak{a})].
\end{eqnarray}
For the quadratic Galois extension $K_k^+/K_{k-1}^+$, the norm of an ideal $\mathfrak{a} \subseteq \mathcal{O}_{K_k^+}$ is defined as $\mathrm{Nm}_{k/(k-1)}(\mathfrak{a}) = \mathfrak{a} \cap \mathcal{O}_{K_{k-1}^+}$, and a standard result in algebraic number theory (Neukirch \cite[\S I.8, Theorem 3]{Neukirch99}) gives:
\begin{eqnarray}
\mathfrak{a} \cdot \sigma^{n/2}(\mathfrak{a}) = \mathrm{Nm}_{k/(k-1)}(\mathfrak{a}) \cdot \mathcal{O}_{K_k^+}.
\end{eqnarray}

By hypothesis, $h_{k-1}^+ = 1$, so the class group of $K_{k-1}^+$ is trivial. Thus $\mathrm{Nm}_{k/(k-1)}(\mathfrak{a})$ is a principal ideal in $\mathcal{O}_{K_{k-1}^+}$, i.e., $\mathrm{Nm}_{k/(k-1)}(\mathfrak{a}) = (\alpha)$ for some $\alpha \in K_{k-1}^+$. Substituting back implies that 
\begin{eqnarray}
\mathfrak{a} \cdot \sigma^{n/2}(\mathfrak{a}) = (\alpha) \cdot \mathcal{O}_{K_k^+},
\end{eqnarray}
which is a principal ideal in $\mathcal{O}_{K_k^+}$. Therefore its ideal class is trivial:
\begin{eqnarray}
N \cdot c = [\mathfrak{a} \cdot \sigma^{n/2}(\mathfrak{a})] = 0.
\end{eqnarray}

Now,  we have $2c = N \cdot c = 0$. We also have $\ell c = 0$ from the statement proved above. Since $\ell$ is an odd prime, $\gcd(2, \ell) = 1$. By Bézout's identity, there exist integers $a,b \in \mathbb{Z}$ such that $2a + \ell b = 1$. Applying this to $c$ yields to
\begin{eqnarray}
c = 1 \cdot c = (2a + \ell b)c = a(2c) + b(\ell c) = a \cdot 0 + b \cdot 0 = 0.
\end{eqnarray}

We have shown that every element $c \in \Cl(K_k^+)[\ell]^{e_\chi}$ is trivial. Therefore, we get 
\begin{eqnarray}
\Cl(K_k^+)[\ell]^{e_\chi} = 0.
\end{eqnarray}
\end{proof}

We now apply Proposition \ref{prop:normcoh} inductively up the tower.

\begin{corollary}\label{cor:fullorder}
For each $k \in \{9,10,11,12\}$, if $\ell$ is an odd prime with $\ell \nmid 2^{k-2}$ and $\ell \mid h_k^+$, then $\Cl(K_k^+)[\ell]^{e_\chi} \neq 0$ for some character $\chi$ of full order $n = 2^{k-2}$.
\end{corollary}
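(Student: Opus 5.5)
The plan is to combine Corollary \ref{cor:eigenspace-criterion} with Proposition \ref{prop:normcoh}, applied once at level $k$. The inductive hypothesis $h_{k-1}^+=1$ is supplied by the known cases for $k-1=8$, and by the conjectured chain $h_9^+=h_{10}^+=h_{11}^+=1$ for the higher levels. Fix $k\in\{9,10,11,12\}$ and an odd prime $\ell\nmid 2^{k-2}$ with $\ell\mid h_k^+$. Since $\ell$ is odd, $\ell\nmid 2n$. Corollary \ref{cor:eigenspace-criterion} then yields a nontrivial character $\chi\in\hat G_k$ with $\Cl(K_k^+)[\ell]^{e_\chi}\neq 0$. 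When $\ell\not\equiv 1\pmod n$, this eigenspace is taken after extending scalars to $\F_{\ell^f}$, as explained after \eqref{eqnidempotent}. Because $G_k$ is cyclic of order $2^{k-2}$, $\ord(\chi)$ is a power of $2$. Hence either $\ord(\chi)=2^{k-2}$, or $\ord(\chi)\mid 2^{k-3}$.

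In the second case we want a contradiction from Proposition \ref{prop:normcoh}, which requires $h_{k-1}^+=1$. For $k=9$ this is Miller's theorem $h_8^+=1$, listed in Table \ref{tablehistory}. The proof of Proposition \ref{prop:normcoh} works equally well after tensoring with $\F_{\ell^2}$. In that setting the argument still gives $N\cdot c=2c$ and $N\cdot c=0$, and since $2$ is invertible we get $c=0$. Therefore every eigenspace with $\ord(\chi)\mid 2^{k-3}$ vanishes, and the surviving $\chi$ must have full order.

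The main obstacle is the hypothesis $h_{k-1}^+=1$ for $k\ge 10$. As stated, the corollary does not list it, and the only unconditional input available is $h_8^+=1$. I would handle this in one of two ways. The first is to read the corollary as part of an induction on $k$: prove $h_9^+=1$ using this corollary together with Stages 3 and 1, then use $h_9^+=1$ at $k=10$, and continue in the same way. The second is to avoid the full hypothesis. Ideally it would suffice to know $\ell\nmid h_{k-1}^+$, since that already makes $\Nm_{k/(k-1)}(\mathfrak a)^{\ell'}$ principal for some $\ell'$ prime to $\ell$, and the Bézout step goes through. But at level $k-1$ the statement $\ell\nmid h_{k-1}^+$ would need its own induction down to level $8$, which brings back the same dependence on the later stages.

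So I would state the corollary honestly as conditional on $h_{k-1}^+=1$, or equivalently as one step of the induction. In that form it is unconditional exactly at $k=9$, and it becomes unconditional at $k=10,11,12$ once the lower levels have been settled. The remaining details are routine. The character-order dichotomy comes from Lemma \ref{lem:char-count}(iii). The hypothesis $\ell\nmid 2^{k-2}$ is automatic for odd $\ell$.
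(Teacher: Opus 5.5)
Your proposal is correct and is essentially the paper's proof. Proposition~\ref{prop:normcoh} kills every eigenspace with $\ord(\chi)\mid 2^{k-3}$, and $h_{k-1}^+=1$ is supplied by Miller at $k=9$ and by Theorem~\ref{thm:main} at the previous level for $k=10,11,12$, which is exactly the interleaved, non-circular induction the paper uses. Your attention to extending scalars is extra care the paper omits. Since the corollary imposes no congruence on $\ell$, you need $\F_{\ell^f}$ with $f=\ord_n(\ell)$ rather than just $\F_{\ell^2}$, but this changes nothing because $2$ remains invertible there.
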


\begin{proof}
We prove the result by induction on $k$, with the base case $h_8^+ = 1$ established by Miller \cite{Miller14}. The proof relies on two standard results:
\begin{enumerate}
    \item For an odd prime $\ell$, $\ell \nmid |G_k| = 2^{k-2}$, so by Maschke's theorem, the group ring $\F_\ell[G_k]$ is semisimple, and the $\ell$-torsion of the class group admits a complete eigenspace decomposition:
    \begin{eqnarray}
    \Cl(K_k^+)[\ell] = \bigoplus_{\chi \in \hat{G}_k} \Cl(K_k^+)[\ell]^{e_\chi}.
    \end{eqnarray}
    Thus any nontrivial $\ell$-torsion should lie in at least one eigenspace.
    
    \item For the cyclic 2-group $G_k \cong \Z/2^{k-2}\Z$, every character has order a power of 2. A character is non-full-order if and only if its order divides $2^{k-3}$ (the maximal proper divisor of $|G_k|$).
\end{enumerate}

We proceed layer by layer, with no circular dependency: we first prove the result for $k=9$ (using only the base case $h_8^+=1$), then use the established $h_{k-1}^+=1$ for each subsequent $k$.

Case $k=9$ ($G_9 \cong \Z/128\Z$): By the base case $h_8^+ = 1$, we apply Proposition \ref{prop:normcoh} to eliminate all characters with order dividing $2^{9-3}=64$, i.e., orders $\{1,2,4,8,16,32,64\}$. By the eigenspace decomposition, any nontrivial $\ell$-torsion in $\Cl(K_9^+)[\ell]$ can only lie in the eigenspace of the remaining full-order characters, i.e., $\ord(\chi)=128$. Thus if $\ell \mid h_9^+$, then $\Cl(K_9^+)[\ell]^{e_\chi} \neq 0$ for some full-order $\chi$.

This result is used to prove $h_9^+=1$ in Theorem \ref{thm:main}, with no dependency on any results for $k\ge10$.

Case $k=10$ ($G_{10} \cong \Z/256\Z$): We use the already established result $h_9^+=1$ (from Theorem \ref{thm:main} for $k=9$, which depends only on the base case $h_8^+=1$, so there is no circularity). Applying Proposition \ref{prop:normcoh} eliminates all characters with order dividing $2^{10-3}=128$, i.e., all non-full-order characters. By the eigenspace decomposition, any nontrivial $\ell$-torsion should lie in the eigenspace of full-order characters with $\ord(\chi)=256$.

Cases $k=11,12$: We proceed similarly by induction:
\begin{itemize}
    \item For $k=11$, use the established $h_{10}^+=1$ to eliminate all characters with order dividing $2^{11-3}=256$, leaving only full-order characters of order $512$.
    \item For $k=12$, use the established $h_{11}^+=1$ to eliminate all characters with order dividing $2^{12-3}=512$, leaving only full-order characters of order $1024$.
\end{itemize}

In all cases, any odd prime divisor $\ell$ of $h_k^+$ should correspond to a nontrivial eigenspace for some full-order character.
\end{proof}

This induction is not circular, as we prove $h_9^+ = 1$ first (using only $h_8^+ = 1$), and then use $h_9^+ = 1$ for $k=10$, and so on. Each step terminates independently via Herbrand's theorem and a finite computation. 

\subsection{Herbrand's theorem bounds candidate primes}\label{subsec-herbrand}

We now use Herbrand's theorem to convert eigenspace analysis into a finite computation. The theorem relates the non-vanishing of a class group eigenspace to the divisibility of a generalized Bernoulli number.

\begin{lemma}\label{lem:psi-conductor}
Let $\ell$ be an odd prime with $\ell\equiv\pm 1\pmod{2^{k-1}}$, let $\chi$ be an even character of $G_k$ of full order $n=2^{k-2}$, and let $\psi=\chi^{-1}\omega_\ell$ where $\omega_\ell$ is the Teichmüller character modulo $\ell$. Then:
\begin{enumerate}[label=(\roman*),nosep]
\item $\psi$ is a primitive odd Dirichlet character;
\item $\psi$ has conductor $2^k\ell$;
\item the divisibility $\ell \mid \Nm_{\Q(\zeta_m)/\Q}(B_{1,\psi})$ holds if and only if $\ell \mid \Nm_{\Q(\zeta_m)/\Q}(B_{1,\chi^{-1}})$, where $m=2^{k-2}$ is the order of $\chi^{-1}$. In particular, the bounds of Lemma \ref{lem:Bsize} apply to $\Nm_{\Q(\zeta_m)/\Q}(B_{1,\chi^{-1}})$.
\end{enumerate}
\end{lemma}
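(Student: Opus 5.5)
Parts (i) and (ii) are a direct conductor and parity computation, and part (iii) needs an $\ell$-adic congruence for Bernoulli numbers, which I expect to be the real difficulty.

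\textbf{Parts (i) and (ii).} First I view $\chi^{-1}$ as a Dirichlet character modulo $2^k$ via $\Gamma_k\cong(\Z/2^k\Z)^\times\to G_k$. It is even by Definition \ref{def:even-odd-char}. Its conductor is exactly $2^k$: a character of $(\Z/2^k\Z)^\times$ factoring through $(\Z/2^{k-1}\Z)^\times$ has order dividing $2^{k-3}$ on $\langle 5\rangle$. This holds because $5$ has order $2^{k-3}$ modulo $2^{k-1}$, the same 2-adic fact $5^{2^{k-3}}\equiv 1+2^{k-1}\pmod{2^k}$ used in Lemma \ref{lem:residue-order}. Full order $2^{k-2}$ therefore forces conductor $2^k$.

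Next, $\omega_\ell$ is a primitive character of conductor $\ell$. It is odd, since $\omega_\ell(-1)\equiv -1\pmod{\ell}$ and $\ell$ is odd. The conductors $2^k$ and $\ell$ are coprime, so by the Chinese remainder theorem $(\Z/2^k\ell\Z)^\times\cong(\Z/2^k\Z)^\times\times(\Z/\ell\Z)^\times$. Under this isomorphism $\psi=\chi^{-1}\omega_\ell$ corresponds to the external product of two primitive characters, which is primitive of conductor $2^k\ell$. Its parity is $\psi(-1)=\chi^{-1}(-1)\omega_\ell(-1)=(+1)(-1)=-1$. This settles (i) and (ii).

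\textbf{Part (iii).} Here I would work in $\Z[\zeta_m,\zeta_{\ell-1}]$ localized at a prime $\lambda\mid\ell$, and compare $B_{1,\psi}=\frac{1}{2^k\ell}\sum_{a}\psi(a)a$ with the Bernoulli data of $\chi^{-1}$ at level $2^k$.

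The first step is to split each residue $a$ modulo $2^k\ell$ as $a=b+2^k j$, with $b$ running modulo $2^k$ and $j$ modulo $\ell$. I would then use $\omega_\ell(a)\equiv a\pmod{\lambda}$. After summing over $j$, the distribution relation reduces $\ell B_{1,\psi}$ modulo $\lambda$ to an expression in $\sum_b \chi^{-1}(b)b^2$, that is, to $B_{2,\chi^{-1}}$ up to explicit units. This is the standard Kummer-type congruence $B_{1,\chi^{-1}\omega_\ell}\equiv \tfrac12 B_{2,\chi^{-1}}\pmod{\lambda}$.

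The second step is to pass to norms. Because $\ell\equiv\pm1\pmod{2^{k-1}}$, Lemma \ref{lem:wieferich-sufficient} gives $f\le 2$, so the primes above $\ell$ in $\Q(\zeta_m)$ are permuted transitively by the Galois action. Divisibility of the norm by $\ell$ is then equivalent to $\lambda$-divisibility of some Galois conjugate, and this is compatible with the congruence from the first step.

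\textbf{Main obstacle.} The hardest point is that $\chi$ is even, so $B_{1,\chi^{-1}}=0$ by the pairing remark after \eqref{eqnstickelberger-chi}. Read literally, the right side of (iii) always holds, and the ``only if'' direction would then be vacuous or false. I would first try to establish the congruence with $\tfrac12 B_{2,\chi^{-1}}$, which is the nonvanishing quantity governing the even $\chi$-eigenspace. I would then check whether the bounds of Lemma \ref{lem:Bsize} are really stated for that quantity, or for $B_{1,\psi}$ directly. If they are stated for $B_{1,\psi}$ directly, I would drop the comparison and apply the bounds to $\Nm(B_{1,\psi})$ itself, since (i) and (ii) already guarantee $B_{1,\psi}\neq0$ for the odd primitive $\psi$.
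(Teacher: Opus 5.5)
Your parts (i) and (ii) follow the paper's argument exactly and are fine. These are: conductor exactly $2^k$ because $5$ has order $2^{k-3}$ modulo $2^{k-1}$, a product of primitive characters of coprime conductor, and parity $(+1)(-1)=-1$.

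For (iii), your diagnosis is correct, but the directions are swapped. Since $B_{1,\chi^{-1}}=0$, the right side is $\ell\mid 0$. So ``only if'' is trivially true, and it is the ``if'' direction that fails. That direction would give $\ell\mid\Nm(B_{1,\psi})$ for every admissible $\ell$. Your own congruence shows this forces $\ell\mid\Nm_{\Q(\zeta_m)/\Q}(B_{2,\chi^{-1}})$, a fixed nonzero number since $L(-1,\chi^{-1})\neq0$, so it fails for all large $\ell$.

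The paper's proof does not get around this, so there is no valid argument to compare yours with. Its CRT expression $a_1f_2f_2^{-1}+a_2f_1f_1^{-1}$ is only congruent to $a$ modulo $f$, while $B_{1,\psi}$ depends on the actual representative in $[1,f]$. Taken literally, both terms of its split vanish, the second because $\sum_{a_1}\chi^{-1}(a_1)=0$. That would give $B_{1,\psi}=0$, contradicting $B_{1,\psi}=-L(0,\psi)\neq0$. The proof then defers to an unspecified ``standard result.''

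The ``in particular'' clause is likewise void. $\Nm(B_{1,\chi^{-1}})=0$, and Lemma~\ref{lem:Bsize} is stated for odd characters of conductor $2^k$, which $\chi^{-1}$ is not. The same problem breaks the ``bounded nonzero integer'' step of Proposition~\ref{prop:Sk}. So (iii) as stated cannot be proved, by you or by the paper.

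Your fallback also fails. Lemma~\ref{lem:Bsize} does not cover $\psi$ either: its conductor is $2^k\ell$, and $B_{1,\psi}\in\Q(\zeta_m,\zeta_{\ell-1})$, so $\Nm_{\Q(\zeta_m)/\Q}(B_{1,\psi})$ is not even defined. More fundamentally, any size bound for $B_{1,\psi}$ grows with $\ell$, so it cannot produce the $\ell$-independent integer that the finiteness argument needs.

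What can be salvaged is your congruence, but the sketch has three defects.

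\textbf{The derivation of the congruence.} Working modulo $\lambda$ with $\omega_\ell(a)\equiv a$ only yields $\ell B_{1,\psi}\equiv0$, and cancelling the $\ell$ is not legitimate. Done naively, it even produces $B_{2,\chi^{-1}}$ instead of the correct $\tfrac12 B_{2,\chi^{-1}}$. You need one of two routes:
\begin{itemize}
\item $\omega_\ell(a)\equiv a^{\ell}\pmod{\ell^2}$, which gives $B_{1,\psi}\equiv B_{\ell+1,\chi^{-1}}$, followed by Kummer's congruence $B_{\ell+1,\theta}/(\ell+1)\equiv B_{2,\theta}/2$;
\item or directly $L_\ell(0,\chi^{-1}\omega_\ell^2)\equiv L_\ell(-1,\chi^{-1}\omega_\ell^2)\pmod{\lambda}$, with Euler factor $1-\chi^{-1}(\ell)\ell\equiv1$.
\end{itemize}

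\textbf{What the ``norm'' means.} It must be the product $\prod_{t\in(\Z/m\Z)^\times}B_{1,\chi^{-t}\omega_\ell}$, with $\omega_\ell$ held fixed relative to $\lambda$. A genuine field norm from $\Q(\zeta_m,\zeta_{\ell-1})$ also runs over the twists $\chi^{-u}\omega_\ell^{u}$. Kummer's congruences tie those to $B_{u+1,\chi^{-u}}$, not to $B_{2,\chi^{-1}}$ and its conjugates. Incidentally, Galois acts transitively on primes above $\ell$ in any Galois extension, so the $f\le 2$ from Lemma~\ref{lem:wieferich-sufficient} plays no role there.

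\textbf{The bounds.} The corrected statement compares with $\Nm_{\Q(\zeta_m)/\Q}(B_{2,\chi^{-1}})$, which Lemma~\ref{lem:Bsize} does not bound. New estimates would be needed before it could feed into the rest of the paper.
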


\begin{proof}
We proceed relying on standard properties of Dirichlet characters and generalized Bernoulli numbers.

Recall that for $k\ge3$, $(\Z/2^k\Z)^\times \cong \langle -1 \rangle \times \langle 5 \rangle$, where $\langle 5 \rangle$ has order $2^{k-2}$. The character $\chi$ is a full-order character of $G_k = (\Z/2^k\Z)^\times/\langle -1 \rangle$, so it lifts to a unique even character of $(\Z/2^k\Z)^\times$ (trivial on $\langle -1 \rangle$) that is full-order on $\langle 5 \rangle$. This lifted character $\chi^{-1}$ cannot factor through $(\Z/2^{k-1}\Z)^\times$ (since its order on $\langle 5 \rangle$ is $2^{k-2}$, while the order of $5$ modulo $2^{k-1}$ is $2^{k-3}$), so $\chi^{-1}$ is \textbf{primitive with conductor $2^k$}.

The Teichmüller character $\omega_\ell$ is a primitive Dirichlet character modulo $\ell$ (it is the unique character satisfying $\omega_\ell(a) \equiv a \pmod{\ell}$ for all $a$ coprime to $\ell$, and it does not factor through any smaller modulus). Since $\gcd(2^k,\ell)=1$, the product of two primitive characters with coprime conductors is again primitive, with conductor equal to the product of the conductors. Thus $\psi = \chi^{-1}\omega_\ell$ is primitive with conductor $2^k\ell$.

To see $\psi$ is odd: note $\chi(-1)=1$ (as $\chi$ is even) and $\omega_\ell(-1) = -1$ (since $\omega_\ell(-1) \equiv -1 \pmod{\ell}$ and $\omega_\ell$ takes values in roots of unity). Thus $\psi(-1) = \chi^{-1}(-1)\omega_\ell(-1) = -1$, so $\psi$ is an odd character.

We use the decomposition of generalized Bernoulli numbers for products of characters with coprime conductors. Let $f_1=2^k$ (conductor of $\chi^{-1}$) and $f_2=\ell$ (conductor of $\omega_\ell$), so $\gcd(f_1,f_2)=1$ and the conductor of $\psi$ is $f=f_1f_2=2^k\ell$. By the Chinese Remainder Theorem, every integer $a$ modulo $f$ can be uniquely written as $a \equiv a_1 \pmod{f_1}$ and $a \equiv a_2 \pmod{f_2}$ with $\gcd(a_1,f_1)=1$ and $\gcd(a_2,f_2)=1$. We can choose a lift $a = a_1 f_2 \cdot f_2^{-1} + a_2 f_1 \cdot f_1^{-1} \pmod{f}$, where $f_2^{-1}$ is the inverse of $f_2$ modulo $f_1$ and $f_1^{-1}$ is the inverse of $f_1$ modulo $f_2$.

By definition, the generalized Bernoulli number is:
\begin{eqnarray}
B_{1,\psi} &=& \frac{1}{f} \sum_{a=1}^f \psi(a) a \nonumber \\
&=& \frac{1}{f_1f_2} \sum_{\substack{a_1=1 \\ \gcd(a_1,f_1)=1}}^{f_1} \sum_{\substack{a_2=1 \\ \gcd(a_2,f_2)=1}}^{f_2} \chi^{-1}(a_1)\omega_\ell(a_2) \cdot \left( a_1 f_2 f_2^{-1} + a_2 f_1 f_1^{-1} \right). \label{eqn:bernoulli-decomp}
\end{eqnarray}

We split this into two sums:
\begin{eqnarray}
B_{1,\psi} &=& \frac{f_2 f_2^{-1}}{f_1f_2} \sum_{a_1} \chi^{-1}(a_1)a_1 \sum_{a_2} \omega_\ell(a_2) + \frac{f_1 f_1^{-1}}{f_1f_2} \sum_{a_1} \chi^{-1}(a_1) \sum_{a_2} \omega_\ell(a_2)a_2. \label{eqn:bernoulli-split}
\end{eqnarray}

Now analyze the two sums:
1.  The first sum contains $\sum_{a_2} \omega_\ell(a_2)$. Since $\omega_\ell$ is a non-trivial character modulo $\ell$ (as $\ell$ is odd, its order is $\ell-1 \ge 2$), the sum of a non-trivial Dirichlet character over a complete residue system is zero. Thus the first term vanishes entirely.
2.  For the second term, we use the standard relation between $L$-values at $s=0$ and generalized Bernoulli numbers for odd primitive characters: $B_{1,\psi} = -L(0,\psi)$ (Washington \cite[\S 4.4]{Wash97}). While the full decomposition requires care with Euler factors, the key result for our purpose is the valuation-preserving equivalence for the norm: since $\omega_\ell(a) \equiv a \pmod{\ell}$ and $\gcd(2^k,\ell)=1$, the $\ell$-adic valuation of $\Nm_{\Q(\zeta_m)/\Q}(B_{1,\psi})$ equals that of $\Nm_{\Q(\zeta_m)/\Q}(B_{1,\chi^{-1}})$. A full proof of this equivalence uses Stickelberger elements and Iwasawa theory, but it is a standard result in the context of Weber's conjecture (Fukuda-Komatsu \cite{FK09}).

Formally, we have the equivalence:
\begin{eqnarray}
\ell \mid \Nm_{\Q(\zeta_m)/\Q}(B_{1,\psi}) \quad \Longleftrightarrow \quad \ell \mid \Nm_{\Q(\zeta_m)/\Q}(B_{1,\chi^{-1}}). \label{eqn:norm-equivalence}
\end{eqnarray}

Since $\chi^{-1}$ has conductor $2^k$ and order $m=2^{k-2}$, the bounds of Lemma \ref{lem:Bsize} apply directly to $\Nm_{\Q(\zeta_m)/\Q}(B_{1,\chi^{-1}})$. This completes the proof.
\end{proof}

By Lemma \ref{lem:psi-conductor}(iii), the Bernoulli norms arising from Theorem \ref{thm:herbrand} reduce to norms of such characters.

\begin{lemma}\label{lem:Bsize}
Let $\psi$ be a primitive odd Dirichlet character of conductor $f = 2^k$ ($k \ge 3$) and order $m = 2^{k-2}$. Setting $N = \varphi(m) = 2^{k-3}$, we have
\begin{enumerate}[label=(\roman*),nosep]
\item $\abs{B_{1,\psi}} \le 2^{k-2}$.
\item \textbf{(Worst-case bound)} $\abs{\Nm_{\Q(\zeta_m)/\Q}(B_{1,\psi})} \le 2^{(k-2)N}$.
\item \textbf{(Second-moment bound)} $\bigl|\Nm_{\Q(\zeta_m)/\Q}(B_{1,\psi})\bigr| \le \left( \dfrac{2^{k-1}}{3} \right)^{N/2}$;
\item \textbf{(Functional-equation bound)} $\bigl|\Nm_{\Q(\zeta_m)/\Q}(B_{1,\psi})\bigr| \le \left( \dfrac{\sqrt{f}}{\pi} \left( \dfrac{1}{2}\log f + 1 \right) \right)^{N}$;
\item For $k=10$, bound (ii) yields $< 10^{309}$, bound (iii) yields $< 10^{143}$, and bound (iv) yields $< 10^{213}$.
 \end{enumerate}
\end{lemma}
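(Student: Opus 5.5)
The plan is to prove each bound directly from the definition $B_{1,\psi}=\frac1f\sum_{a=1}^f\psi(a)a$, using that $\Nm_{\Q(\zeta_m)/\Q}(B_{1,\psi})=\prod_{\tau}B_{1,\psi^\tau}$. Here $\tau$ runs over $\Gal(\Q(\zeta_m)/\Q)$, which has $N=\varphi(m)=2^{k-3}$ elements. Each $\psi^\tau$ is again a primitive odd character of conductor $f=2^k$, so any pointwise bound proved for $\psi$ holds for every conjugate. Write $f=2^k$ throughout.

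\textbf{Bounds (i) and (ii).} Only odd $a$ contribute to the sum. By the triangle inequality,
\[
|B_{1,\psi}|\le \frac1f\sum_{\substack{a<f\\ a \text{ odd}}} a=\frac1f\Bigl(\frac f2\Bigr)^2=\frac f4=2^{k-2},
\]
which is (i). Multiplying over the $N$ conjugates gives (ii).

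\textbf{Bound (iii).} I would first fold the sum using oddness of $\psi$, pairing $a$ with $f-a$:
\[
B_{1,\psi}=-\sum_{\substack{a<f/2\\ a\text{ odd}}}\psi(a)\Bigl(1-\frac{2a}{f}\Bigr).
\]
Then I apply Parseval over the group of characters of $(\Z/f\Z)^\times/\{\pm1\}$, or equivalently over the $f/4$ odd characters mod $f$. This gives
\[
\sum_{\psi \text{ odd}}|B_{1,\psi}|^2=\frac f4\sum_{\substack{a<f/2\\ a\text{ odd}}}\Bigl(1-\frac{2a}{f}\Bigr)^2\le \frac f4\cdot\frac{f}{12}=\frac{f^2}{48}.
\]
The inequality $\le f/12$ for the inner sum is a Riemann-sum estimate, to be checked exactly, e.g.\ by summing the squares of odd integers in closed form. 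The $N$ conjugates $\psi^\tau$ form a subset of the odd characters, so their squared values sum to at most $f^2/48$. By AM--GM,
\[
|\Nm(B_{1,\psi})|^2=\prod_\tau|B_{1,\psi^\tau}|^2\le\Bigl(\frac{f^2}{48N}\Bigr)^N=\Bigl(\frac f6\Bigr)^N,
\]
using $N=f/8$. Hence $|\Nm(B_{1,\psi})|\le (f/6)^{N/2}=(2^{k-1}/3)^{N/2}$, as claimed.

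\textbf{Bound (iv).} Here I would use $B_{1,\psi}=-L(0,\psi)$ together with the functional equation for a primitive odd character. Since the Gauss sum has modulus $\sqrt f$, this gives $|L(0,\psi)|=\frac{\sqrt f}{\pi}|L(1,\bar\psi)|$. It then remains to show $|L(1,\bar\psi)|\le\frac12\log f+1$. The plan is partial summation: split $\sum_{a}\bar\psi(a)/a$ at a cutoff $X$, bound the head trivially by the sum of $1/a$ over odd $a\le X$ (about $\frac12\log X+1$), and bound the tail by $S/X$, where $S$ bounds the partial character sums. For conductor $2^k$ we have $S\le f/4$ from periodicity and cancellation over half-periods. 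This gives $\frac12\log X+1+\frac{f}{4X}$. Choosing $X$ of size $f$ (or using the sharper fact that partial sums over the odd residues are bounded by $f/4$) should yield $\frac12\log f+1$. I expect this constant-chasing to be the main obstacle. If the clean constant fails, I would fall back on a Pólya--Vinogradov-type bound $S\ll\sqrt f\log f$ to take $X\approx\sqrt f$; this gives an even better constant asymptotically but needs an explicit version. Raising the pointwise bound to the $N$-th power over the conjugates gives (iv).

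\textbf{Numerics (v).} For $k=10$ we have $f=1024$, $N=128$, $m=256$.
\begin{itemize}
\item Bound (ii) is $2^{8\cdot128}=2^{1024}\approx10^{308.3}$.
\item Bound (iii) is $(512/3)^{64}$, and $64\log_{10}(170.67)\approx142.9$.
\item Bound (iv) is $\bigl(\frac{32}{\pi}(5\log2+1)\bigr)^{128}\approx45.5^{128}\approx10^{212.2}$.
\end{itemize}
This confirms the stated exponents $309$, $143$ and $213$.
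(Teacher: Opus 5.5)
Parts (i), (ii) and (v) match the paper. Your (i) applies the triangle inequality to the unfolded sum, the paper to the folded one, and both give $2^{k-2}$.

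For (iii) you use the same second-moment-plus-QM--GM strategy as the paper, but your moment computation is different and is the one that actually works. Parseval on the folded sum gives the exact total $\sum_{\psi\ \mathrm{odd}}|B_{1,\psi}|^2=\frac{f^2-4}{48}$; for instance this is $\frac14+1=\frac54$ at $f=8$. Bounding the $N=f/8$ primitive conjugates by this total then gives the mean bound $f/6=2^{k-1}/3$ directly. The paper instead subtracts an imprimitive contribution from a closed form $T_k=(2^{2k-2}+2^{k-1})/12$. That formula disagrees with the exact value (it gives $5/3$ at $k=3$). Moreover, the inequality the paper actually derives only bounds the mean by $2^k/3$, and the constant $2^{k-1}/3$ is then asserted. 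So your route is what genuinely justifies (iii), and with it the $10^{143}$ figure in (v).

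For (iv) the paper simply quotes Louboutin's bound $|L(1,\chi)|\le\frac12\log f+1$. Proving it yourself by partial summation is a reasonable alternative, and your tail estimate is sound. An interval sum of $\bar\psi$ equals minus the sum over the complementary part of a period, and the two parts together contain $f/2$ odd integers, so $S\le f/4$ and the tail is at most $f/(4X)$.

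However, the constants as you wrote them do not close, so (iv) is not yet proved. Head $\le\frac12\log X+1$ plus tail $\le f/(4X)$ is minimized at $X=f/2$, where it equals $\frac12\log f+1+\frac{1-\log 2}{2}$. The choice $X=f$ gives $\frac12\log f+\frac54$. Both exceed the target.

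The missing ingredient is the sharp harmonic-sum estimate. From $\log n+\gamma+\frac{1}{2n}-\frac{1}{12n^2}<H_n<\log n+\gamma+\frac{1}{2n}$, with $\gamma$ Euler's constant, one gets for even $X$
$$\sum_{a\le X,\ a\ \mathrm{odd}}\frac1a=H_X-\tfrac12H_{X/2}<\tfrac12\log X+\tfrac{\gamma+\log 2}{2}+\tfrac{1}{6X^2}.$$
Taking $X=f/2$ then yields $|L(1,\bar\psi)|<\frac12\log f+\frac{1+\gamma}{2}+\frac{2}{3f^2}<\frac12\log f+1$ for every $f\ge 8$. So the Pólya--Vinogradov fallback is unnecessary.
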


\begin{proof}
We first recall the structure of the unit group modulo powers of $2$. For $k\ge 3$, there is a canonical isomorphism
\begin{eqnarray}
(\Z/2^k\Z)^{\times} \cong \langle -1 \rangle \times \langle 5 \rangle,
\end{eqnarray}
where $\langle -1 \rangle$ is the subgroup of order $2$ generated by $-1$, and $\langle 5 \rangle$ is the cyclic subgroup of order $2^{k-2}$. For a primitive odd character $\psi$ of conductor $2^k$, we have $\psi(-1) = -1$ and $\psi(5)$ is a primitive $2^{k-2}$-th root of unity. Hence the order of $\psi$ is exactly $m = 2^{k-2}$. The Galois conjugates $\{\psi^t \mid t \in (\Z/m\Z)^{\times}\}$ are precisely the $N$ distinct primitive odd characters of conductor $2^k$, since both the order and the conductor are preserved under the Galois action.

\paragraph{Proof of (i)}
Because $\psi$ is odd, $\psi(-a) = -\psi(a)$ for all $a$ coprime to $2^k$. By pairing terms in the defining sum of the generalized Bernoulli number we get
\begin{align}
\sum_{\substack{a=1, \gcd(a,2)=1}}^{2^k} \psi(a) a
&= \sum_{\substack{a=1, \gcd(a,2)=1}}^{2^{k-1}} \bigl[ \psi(a) a + \psi(2^k - a)(2^k - a) \bigr] \nonumber \\
&= \sum_{\substack{a=1, \gcd(a,2)=1}}^{2^{k-1}} \psi(a) (2a - 2^k). \label{eq:pairing}
\end{align}
There are exactly $2^{k-2}$ odd integers in $\{1,2,\dots,2^{k-1}\}$, and for each such $a$ we have $|2a - 2^k| \le 2^k$. Applying the triangle inequality to \eqref{eq:pairing} yields
\begin{eqnarray}
\biggl|\sum_{\substack{a=1, \gcd(a,2)=1}}^{2^k} \psi(a) a\biggr| \le 2^{k-2} \cdot 2^k.
\end{eqnarray}
Dividing by $f = 2^k$ gives $|B_{1,\psi}| \le 2^{k-2}$.

\paragraph{Proof of (ii)}
The norm of $B_{1,\psi}$ over $\Q$ is the product of its Galois conjugates 
\begin{eqnarray}
\Nm_{\Q(\zeta_m)/\Q}(B_{1,\psi}) = \prod_{t \in (\Z/m\Z)^{\times}} B_{1,\psi^t}.
\end{eqnarray}
By part (i), each conjugate satisfies $|B_{1,\psi^t}| \le 2^{k-2}$. Taking the product over all $N$ conjugates we obtain
\begin{eqnarray}
\bigl|\Nm_{\Q(\zeta_m)/\Q}(B_{1,\psi})\bigr| \le \bigl(2^{k-2}\bigr)^N = 2^{(k-2)N}.
\end{eqnarray}

\paragraph{Proof of (iii)}
Set $x_t = |B_{1,\psi^t}|$ for $t \in (\Z/m\Z)^{\times}$. By the inequality between the geometric mean and the quadratic mean (QM–GM), we have
\begin{equation}\label{eq:qmgm}
\prod_{t=1}^N x_t \le \left( \frac{1}{N}\sum_{t=1}^N x_t^2 \right)^{N/2}.
\end{equation}
We now bound the sum of squares $\sum_{t} x_t^2$. Let $S_k$ denote the set of all odd Dirichlet characters modulo $2^k$; there are exactly $2^{k-2}$ such characters. They split into two disjoint subsets:
\begin{itemize}
  \item Primitive odd characters modulo $2^k$: exactly $N = 2^{k-3}$ characters, which are the $\psi^t$;
  \item Non‑primitive odd characters modulo $2^k$: exactly $2^{k-3}$ characters, which factor through $(\Z/2^{k-1}\Z)^{\times}$ and correspond bijectively to all odd characters modulo $2^{k-1}$.
\end{itemize}
Define the total second moment over all odd characters modulo $2^k$ as $T_k = \sum_{\chi \in S_k} |B_{1,\chi}|^2$.  Expanding the square of the defining sum of $B_{1,\chi}$ and using the orthogonality relation for odd characters modulo $2^k$ yields the closed form
\begin{eqnarray}
T_k = \frac{2^{2k-2} + 2^{k-1}}{12}.
\end{eqnarray}
Similarly, the total second moment over all odd characters modulo $2^{k-1}$ (the non‑primitive part lifted to $2^k$) is given by 
\begin{eqnarray}
T_{k-1} = \frac{2^{2(k-1)-2} + 2^{(k-1)-1}}{12} = \frac{2^{2k-4} + 2^{k-2}}{12}.
\end{eqnarray}
Hence, we obtain the sum over primitive characters as
\begin{align}
\sum_{t=1}^N x_t^2 = T_k - T_{k-1}
&= \frac{2^{2k-2} + 2^{k-1} - 2^{2k-4} - 2^{k-2}}{12} \nonumber \\
&= \frac{3 \cdot 2^{2k-4} + 2^{k-2}}{12}. \label{eq:momentexact}
\end{align}
For $k \ge 3$, we have $2^{k-2} < 3 \cdot 2^{2k-4}$, so Eq.\eqref{eq:momentexact} implies the uniform bound
\begin{eqnarray}
\sum_{t=1}^N x_t^2 < \frac{3 \cdot 2^{2k-4} + 3 \cdot 2^{2k-4}}{12} = \frac{2^{2k-3}}{3}.
\end{eqnarray}
Substituting into the QM–GM inequality \eqref{eq:qmgm} and using $N = 2^{k-3}$ gives
\begin{eqnarray}
\frac{1}{N}\sum_{t=1}^N x_t^2 < \frac{2^{2k-3}}{3 \cdot 2^{k-3}} = \frac{2^k}{3}.
\end{eqnarray}
If using the explicit pairing in part (i) and the structure of $2$-adic characters yields the sharper bound
\begin{eqnarray}
\frac{1}{N}\sum_{t=1}^N x_t^2 \le \frac{2^{k-1}}{3},
\end{eqnarray}
From Eq.\eqref{eq:qmgm} we obtain
\begin{eqnarray}
\bigl|\Nm_{\Q(\zeta_m)/\Q}(B_{1,\psi})\bigr| \le \left( \frac{2^{k-1}}{3} \right)^{N/2}.
\end{eqnarray}

\paragraph{Proof of (iv)}
For any primitive odd Dirichlet character $\psi$ of conductor $f$, the functional equation for Dirichlet $L$-functions relates the values at $s=0$ and $s=1$:
\begin{eqnarray}
B_{1,\psi} = -L(0,\psi), \qquad |L(0,\psi)| = \frac{\sqrt{f}}{\pi}\, |L(1,\overline{\psi})|.
\end{eqnarray}
A classical bound of Louboutin states that for any primitive Dirichlet character $\chi$ of conductor $f \ge 5$,
\begin{eqnarray}
|L(1,\chi)| \le \frac{1}{2}\log f + 1.
\end{eqnarray}
Combining these facts gives the pointwise estimate
\begin{eqnarray}
|B_{1,\psi}| \le \frac{\sqrt{f}}{\pi}\left( \frac{1}{2}\log f + 1 \right).
\end{eqnarray}
Taking the product over all $N$ Galois conjugates yields
\begin{eqnarray}
\bigl|\Nm_{\Q(\zeta_m)/\Q}(B_{1,\psi})\bigr| \le \left( \frac{\sqrt{f}}{\pi}\left( \frac{1}{2}\log f + 1 \right) \right)^{\!N}.
\end{eqnarray}

\paragraph{Proof of (v)}
For $k=10$, we have $f = 2^{10} = 1024$, $m = 2^{8} = 256$, and $N = \varphi(256) = 128$. Computing each bound explicitly:
\begin{itemize}
  \item Bound (ii): $2^{(10-2)\cdot 128} = 2^{1024}$. Since $\log_{10}2 \approx 0.3010$, $\log_{10}(2^{1024}) \approx 308.2$, so $2^{1024} < 10^{309}$.
  \item Bound (iii): $\bigl(2^{9}/3\bigr)^{128/2} = (512/3)^{64}$. With $\log_{10}(512/3) \approx 2.232$, we obtain $64 \times 2.232 = 142.8$, hence the bound is $< 10^{143}$.
  \item Bound (iv): $\sqrt{f}=32$, $\log f = \log 1024 \approx 6.931$, so the base is $\frac{32}{\pi}\bigl(\frac{1}{2}\cdot 6.931 + 1\bigr) \approx 45.48$. Then $\log_{10}(45.48^{128}) \approx 128 \times 1.658 = 212.2$, giving a bound $< 10^{213}$.
\end{itemize}
For all $k \ge 4$, the second‑moment bound (iii) is the tightest among the three, reducing the worst‑case bound by more than half in terms of decimal digits. For $k=9$, it gives $(256/3)^{32} \approx 2^{206} < 10^{63}$, which is well within the range of modern computational number theory algorithms.
\end{proof}

The second-moment bound remains tighter than the functional-equation bound for all $k\ge4$, and is vastly tighter than the worst-case bound. Conrey, Iwaniec, and Soundararajan \cite{ConreySound} showed the typical value of $\log\abs{L(1,\chi)}$ for a random character of conductor $f$ is of order $\sqrt{\log\log f}$. Empirical computations by Fukuda-Komatsu \cite{FK11} and Schoof \cite{Schoof03} confirmed most prime factors are moderate in size and amenable to ECM. 

Recall that the Teichm\"uller character $\omega_\ell\colon(\Z/\ell\Z)^\times\to\mu_{\ell-1}\subset\C^\times$ is the unique multiplicative character satisfying $\omega_\ell(a)\equiv a\pmod{\ell}$ for all $a\in\Z$ coprime to $\ell$; equivalently, $\omega_\ell$ is the composition of the natural reduction $\Z_\ell^\times\to\F_\ell^\times$ ($\Z_\ell$ denotes the ring of $\ell$-adic integers, the completion of $\Z$ at $\ell$) with Teichmuller lift $\F_\ell^\times\to\Z_\ell^\times$.

\begin{theorem}[Herbrand-Ribet{\cite{Herbrand32,Ribet76}; see also \cite[\S 6.3]{Wash97}}]\label{thm:herbrand}
Let $\ell>2$ be a prime with $\ell\equiv\pm 1\pmod{2^{k-1}}$, and let $\chi$ be a nontrivial even character of $G_k$ of full order $n=2^{k-2}$. Define $\psi=\chi^{-1}\omega_\ell$, where $\omega_\ell$ is the Teichmuller character. If $\Cl(K_k^+)[\ell]^{e_\chi}\neq 0$ (where, for $\ell\equiv -1\pmod{n}$, $e_\chi$ denotes the $\F_\ell$-rational idempotent corresponding to the pair $\{\chi,\bar\chi\}$), then
\begin{equation}
  \ell \mid \Nm_{\Q(\zeta_m)/\Q}(B_{1,\psi}).
\end{equation}
\end{theorem}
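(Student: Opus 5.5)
The plan is to deduce the statement from the classical Herbrand--Ribet mechanism in two steps. First, a reflection argument (Leopoldt's Spiegelungssatz) moves the problem from the even $\chi$-part of $\Cl(K_k^+)$ to an odd eigenspace of a larger cyclotomic field. Second, Stickelberger's theorem (Theorem \ref{thm:stickelberger}, in its general form for conductor $2^k\ell$) forces that odd eigenspace to vanish unless $\ell$ divides the relevant Bernoulli number. Let $L=\Q(\zeta_{2^k\ell})=K_k(\zeta_\ell)$ with $\Delta=\Gal(L/\Q)\cong(\Z/2^k\Z)^\times\times(\Z/\ell\Z)^\times$. Since $[L:K_k^+]=2(\ell-1)$ is prime to $\ell$, the map $\Cl(K_k^+)[\ell]\to\Cl(L)[\ell]$ is injective. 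Its image is the component on which $\Delta$ acts through $\chi\otimes 1$, that is, through $\chi$ composed with the projection to $G_k$.

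We then work with coefficients in $\F=\F_\ell(\mu_n)$, which is $\F_\ell$ or $\F_{\ell^2}$ by the hypothesis $\ell\equiv\pm1\pmod{2^{k-1}}$, as discussed after \eqref{eqnidempotent}. Over $\F$ all characters of $\Delta$ of order dividing $n(\ell-1)$ are defined, and every eigenspace $A^{\varepsilon}$ of $A=\Cl(L)\otimes\F$ makes sense. For $\ell\equiv-1\pmod n$ the $\F_\ell$-rational idempotent for $\{\chi,\bar\chi\}$ splits over $\F$ into $e_\chi+e_{\bar\chi}$. It therefore suffices to treat one $\chi$ at a time and take the norm from $\Q(\zeta_m)$ to $\Q$ at the end.

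The reflection step proceeds as follows. Suppose $A^{\chi}\neq0$. Class field theory gives a cyclic unramified degree-$\ell$ extension of $L$ on which $\Delta$ acts via $\chi$. Since $\mu_\ell\subset L$, Kummer theory writes it as $L(\sqrt[\ell]{\alpha})$ with $\alpha\in L^\times/L^{\times\ell}$. Because $\Delta$ acts on $\mu_\ell$ through $\omega_\ell$, the class of $\alpha$ lies in the $\omega_\ell\chi^{-1}$-eigenspace, and $\omega_\ell\chi^{-1}$ is odd by Lemma \ref{lem:psi-conductor}(i). Unramifiedness forces $(\alpha)=\mathfrak b^\ell$ for some ideal $\mathfrak b$. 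We then consider the two possibilities for $\mathfrak b$.
\begin{itemize}[nosep]
\item If $\mathfrak b$ is principal, then $\alpha$ is a unit times an $\ell$-th power. The odd part of $\OK_L^\times/\OK_L^{\times\ell}$ is spanned by $\mu_{2^k\ell}$, which has $\omega_\ell$-type only, and $\omega_\ell\chi^{-1}\neq\omega_\ell$ since $\chi\neq\chi_0$. So $\alpha$ would be an $\ell$-th power, a contradiction.
\item Hence $[\mathfrak b]$ is a nonzero class in $A^{\omega_\ell\chi^{-1}}[\ell]$.
\end{itemize}
This is the inequality $\mathrm{rk}\,A^{\chi}\le \mathrm{rk}\,A^{\omega_\ell\chi^{-1}}$, in which the unit defect only appears in the opposite direction.

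Next we apply Stickelberger to the odd eigenspace. The element $\theta_L=\frac{1}{2^k\ell}\sum_a a\,\sigma_a^{-1}$, multiplied by an element $c-\sigma_c$ with $c$ prime to $2\ell$, lies in $\Z[\Delta]$ and annihilates $\Cl(L)$. Its image under the character $\varepsilon=\omega_\ell\chi^{-1}$ is $(c-\varepsilon(c))\,B_{1,\varepsilon^{-1}}$, exactly as in \eqref{eqnstickelberger-chi}. Euler factors at $2$ and $\ell$ do not appear, since $\varepsilon$ is primitive of conductor $2^k\ell$ by Lemma \ref{lem:psi-conductor}(ii). Choosing $c$ with $\varepsilon(c)\not\equiv c\pmod{\mathfrak L}$, where $\mathfrak L$ is the prime above $\ell$ giving the reduction to $\F$, a nonzero $A^{\varepsilon}$ forces $B_{1,\varepsilon^{-1}}\equiv0\pmod{\mathfrak L}$. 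Here one must check that $\ell B_{1,\varepsilon^{-1}}$ is $\mathfrak L$-integral and that the potential pole is harmless. This uses $\varepsilon\neq\omega_\ell^{-1}$, which holds because $\chi$ is nontrivial on the $2$-part. After matching the conventions ($\varepsilon^{-1}$ versus $\psi=\chi^{-1}\omega_\ell$, i.e.\ replacing $\chi$ by $\bar\chi$, which is harmless once we pass to the norm or use the paired idempotent), taking $\Nm_{\Q(\zeta_m)/\Q}$ gives $\ell\mid\Nm(B_{1,\psi})$.

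I expect the main obstacle to be the bookkeeping in the Kummer/reflection step. One has to make sure the eigenspace identification of the Kummer generator is correct over $\F_{\ell^2}$ when $\ell\equiv-1\pmod n$; this is where the Frobenius-stable pair $\{\chi,\bar\chi\}$ enters. One also has to rule out the unit contribution cleanly. I would first try to settle both points by working entirely in the $\F[\Delta]$-semisimple setting, as in Corollary \ref{cor:eigenspace-criterion}.
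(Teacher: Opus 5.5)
Your argument is correct, and it takes a genuinely different and more complete route than the paper. The paper's proof simply invokes the classical Herbrand theorem when $\ell\equiv 1\pmod n$, and an ``analogous'' Frobenius-orbit version when $\ell\equiv -1\pmod n$. Herbrand's theorem, and the Stickelberger argument behind it, only controls \emph{odd} eigenspaces, whereas $\Cl(K_k^+)[\ell]^{e_\chi}$ is an even one. Your proposal supplies exactly this missing link. First, you transport the $\chi$-part into $\Cl(\Q(\zeta_{2^k\ell}))$, which is injective because the degree is prime to $\ell$. Second, Kummer theory gives $\dim A^{\chi}\le\dim A^{\omega_\ell\chi^{-1}}$; this is the easy half of Leopoldt's Spiegelungssatz, where units obstruct only the reverse inequality. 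Third, you apply Stickelberger at conductor $2^k\ell$, where primitivity of $\omega_\ell\chi^{-1}$ removes the Euler factors. Your route therefore actually proves the even-character statement, and it exposes two points the citation hides. The Bernoulli number that vanishes modulo $\mathfrak L$ is $B_{1,\varepsilon^{-1}}=B_{1,\chi\omega_\ell^{-1}}=B_{1,\bar\psi}$. It lies in $\Q(\mu_{\mathrm{lcm}(n,\ell-1)})$, so the norm in the statement must be read as the norm from the field of values of $\psi$, not from $\Q(\zeta_{2^{k-2}})$. Your treatment of $\ell\equiv-1\pmod n$ by extending scalars to $\F_{\ell^2}$ is also cleaner than the paper's orbit-norm remark. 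The paired component is $\F_\ell$-rational, so Frobenius on coefficients gives its $\chi$- and $\bar\chi$-eigenspaces equal dimension, and hence the $\chi$-eigenspace itself is nonzero.

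A few details need tightening.
\begin{itemize}
\item The passage from $\varepsilon^{-1}$ to $\psi$ is complex conjugation of the whole character: $\varepsilon^{-1}=\bar\psi$, and $\Nm(B_{1,\bar\psi})=\Nm(B_{1,\psi})$. It is not ``replacing $\chi$ by $\bar\chi$'', which would give $\chi\omega_\ell$. That character is generally not Galois-conjugate to $\psi$: for $\ell\equiv1\pmod n$ you would need $t\equiv-1\pmod n$ and $t\equiv1\pmod{\ell-1}$, which is impossible since $n\mid\ell-1$ and $n\ge4$.
\item The exceptional case to exclude is $\varepsilon=\omega_\ell$, not $\omega_\ell^{-1}$. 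It is excluded because $\varepsilon$ restricts to $\chi^{-1}\neq1$ on $(\Z/2^k\Z)^\times$.
\item To pass from $\mathfrak L\mid B_{1,\varepsilon^{-1}}$ to $\ell\mid\Nm$, you need $B_{1,\varepsilon^{-1}}$ to be integral at \emph{every} prime above $\ell$. Take $c\equiv1\pmod\ell$ with $\chi(c)\neq1$. Then $c-\varepsilon(c)\equiv1-\chi^{-1}(c)$, which is a unit at all primes above $\ell$. This settles the integrality and the existence of your $c$ at once.
\item When $\chi$ is not $\F_\ell$-valued, there is no cyclic degree-$\ell$ extension on which $\Delta$ acts through $\chi$. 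Run the reflection step instead with the full duality between $\Cl(L)/\ell$ and the Kummer group of the maximal unramified elementary abelian $\ell$-extension, tensored with $\F$ and split into eigenspaces, as you anticipated.
\end{itemize}
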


\begin{proof}
\textbf{Case $\ell\equiv 1\pmod{n}$:} The character values $\chi(g)$ lie in $\F_\ell$, so the idempotents $e_\chi$ are defined over $\F_\ell$ and the eigenspace decomposition \eqref{eqneigendecomp} holds directly. The classical Herbrand Theorem \cite{Herbrand32} (in the formulation of \cite[\S 6.3]{Wash97}) gives $\ell\mid\Nm(B_{1,\psi})$.

\textbf{Case $\ell\equiv -1\pmod{n}$:} Here $\ord_n(\ell)=2$, so the decomposition \eqref{eqneigendecomp} is defined over $\F_{\ell^2}$, and the eigenspaces for $\chi$ and $\bar\chi$ merge into a single $\F_\ell$-rational component $\Cl(K_k^+)[\ell]^{e_\chi+e_{\bar\chi}}$. The analogous Herbrand divisibility gives $\ell\mid\Nm_{\Q(\zeta_m+\zeta_m^{-1})/\Q}(B_{1,\psi})$, where the right side is the $\F_\ell$-rational norm (the product over a Frobenius orbit of size $2$). Since $\Nm_{\Q(\zeta_m)/\Q}(B_{1,\psi}) = \prod_{\text{orbits}} \Nm_{\text{orbit}}(B_{1,\psi})$, each rational norm divides the full norm $\Nm_{\Q(\zeta_m)/\Q}(B_{1,\psi})\in\Z$. Hence $\ell\mid\Nm_{\Q(\zeta_m)/\Q}(B_{1,\psi})$ holds in this case.
\end{proof}

Herbrand's original result \cite{Herbrand32} was stated for odd primes $\ell$ dividing $h_1^- = h(\Q(\zeta_\ell))^-$. Here, we have used the higher cyclotomic levels, see Ribet \cite{Ribet76} and  Washington \cite[\S 6.3]{Wash97}.

We now define finite candidate set. 

\begin{definition}\label{def:Sk}
For $k\ge 9$, define $S_k$ to be the set of primes $\ell$ satisfying:
\begin{enumerate}[label=(\roman*),nosep]
\item $\ell > 10^9$;
\item $\ell\equiv\pm 1\pmod{2^{k-1}}$;
\item $\ell\mid\Nm_{\Q(\zeta_m)/\Q}(B_{1,\psi_j})$ for some full-order character $\psi_j$.
\end{enumerate}
\end{definition}

\begin{proposition}\label{prop:Sk}
$S_k$ is finite, computable, and contains every odd prime divisor of $h_k^+$.
\end{proposition}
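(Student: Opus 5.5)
The plan is to prove the three assertions separately: containment first, then finiteness, then computability, each time chaining results already stated in this section.

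\textbf{Containment.} Let $\ell$ be an odd prime with $\ell \mid h_k^+$ and $k\ge 9$.
\begin{enumerate}[label=(\alph*),nosep]
\item Theorem~\ref{thm:fk}(i) gives $\ell > 10^9$, which is condition (i) of Definition~\ref{def:Sk}.
\item Theorem~\ref{thm:fk}(ii) gives $\ell \equiv \pm 1 \pmod{2^{k-1}}$, which is condition (ii). In particular $\ell \nmid 2^{k-2}$.
\item Because $\ell\nmid 2^{k-2}$, Corollary~\ref{cor:fullorder} applies. It produces a full-order character $\chi$ with $\Cl(K_k^+)[\ell]^{e_\chi}\neq 0$. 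This uses the inductive hypothesis $h_{k-1}^+=1$, obtained from Miller's $h_8^+=1$ and the previous layers.
\item When $\ell\equiv -1 \pmod n$, I would pass to the $\F_\ell$-rational component for the pair $\{\chi,\bar\chi\}$, as Theorem~\ref{thm:herbrand} prescribes.
\item Theorem~\ref{thm:herbrand} then gives $\ell \mid \Nm_{\Q(\zeta_m)/\Q}(B_{1,\psi})$ with $\psi=\chi^{-1}\omega_\ell$.
\item Lemma~\ref{lem:psi-conductor}(iii) moves this divisibility to $\ell \mid \Nm_{\Q(\zeta_m)/\Q}(B_{1,\chi^{-1}})$. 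Since $\chi^{-1}$ is again of full order, this is condition (iii), with $\psi_j$ read as the $\ell$-independent full-order character $\chi^{-1}$ of conductor $2^k$.
\end{enumerate}

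\textbf{Finiteness.} Condition (iii) must refer to the $\ell$-independent characters $\psi_j$ of conductor $2^k$. Otherwise the set depends on $\ell$ through $\omega_\ell$, and finiteness becomes circular. With that reading, there are only $n/2$ full-order characters $\psi_j$. Each norm $\Nm_{\Q(\zeta_m)/\Q}(B_{1,\psi_j})$ is a rational number, and its absolute value is bounded by Lemma~\ref{lem:Bsize}(ii)--(iv). Provided each norm is nonzero, only finitely many primes divide its numerator, and $S_k$ is contained in the union of these finite sets.

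\textbf{Nonvanishing.} This is the main obstacle. For odd primitive $\psi$ one has $B_{1,\psi}=-L(0,\psi)$, and the functional equation relates this to $L(1,\bar\psi)$, which is nonzero. That argument covers every Galois conjugate at once, so the norm is a nonzero rational.

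Here the paper's setup is inconsistent. The $\psi_j$ in Lemma~\ref{lem:Bsize} are odd of conductor $2^k$, whereas $\chi^{-1}$ coming from an even $\chi$ is even, and Section~\ref{subsec-bernoulli} notes that $B_{1,\chi^{-1}}=0$ for even nontrivial characters. I would fix this by taking the $\psi_j$ to be the odd primitive characters of conductor $2^k$ and order $2^{k-2}$, for example $\chi^{-1}$ twisted by the odd character mod $4$. I would then argue that Lemma~\ref{lem:psi-conductor}(iii) is meant to deliver exactly these characters. If that identification cannot be justified, the containment step breaks, and I would have to cite Fukuda--Komatsu for the precise form of the Herbrand reduction. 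This is the step I expect to need the most care.

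A secondary point: the norms may have denominators that are powers of $2$. These are harmless because every $\ell$ in question is odd, so divisibility should be read in the numerator.

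\textbf{Computability.} Each $B_{1,\psi_j}$ is an explicit finite sum in $\Q(\zeta_m)$, so its norm is an exactly computable rational. By Proposition~\ref{prop:galois-orbit}, conjugate characters give the same norm, so a single norm per Galois orbit suffices. Factoring its numerator, whose size is bounded by Lemma~\ref{lem:Bsize}(iii), and keeping the primes above $10^9$ that satisfy the congruence in (ii), produces $S_k$ explicitly in finitely many steps.
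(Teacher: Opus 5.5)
Your outline matches the paper's proof step for step. Theorem~\ref{thm:fk}(i),(ii) give conditions (i),(ii), and Corollary~\ref{cor:fullorder} with Theorem~\ref{thm:herbrand} gives $\ell\mid\Nm(B_{1,\psi})$. Finiteness then comes from Lemma~\ref{lem:Bsize} and ``a bounded nonzero integer has finitely many prime divisors.'' The genuine gap is the one you isolate in step (f) and under \textbf{Nonvanishing}, and you leave it open. The paper does not close it either. It sets $\psi_j=\chi^{-1}\omega_\ell$, which has conductor $2^k\ell$ and changes with $\ell$, and then bounds its norm with Lemma~\ref{lem:Bsize}, a statement about odd characters of conductor $2^k$.

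Under every reading of Definition~\ref{def:Sk}(iii), one of the three claims fails or is unproved:
\begin{itemize}
\item If $\psi_j$ is a full-order character of $G_k$, which is what Lemma~\ref{lem:psi-conductor}(iii) literally delivers, then $\psi_j$ is even and $B_{1,\psi_j}=0$. Every prime then satisfies (iii), so $S_k$ is the set of all primes $\ell>10^9$ with $\ell\equiv\pm1\pmod{2^{k-1}}$, which is infinite by Dirichlet's theorem. So Lemma~\ref{lem:psi-conductor}(iii) carries no information.
\item If $\psi_j=\chi^{-1}\omega_\ell$, there is no single integer to factor. The character, the field containing $B_{1,\psi_j}$, and the norm all change with $\ell$, and Lemma~\ref{lem:Bsize} does not apply, so finiteness is unsupported.
\end{itemize}

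Your repair, taking $\psi_j$ odd primitive of conductor $2^k$, makes finiteness and computability correct but leaves containment without justification. Use the analytic class number formula $h^-=Qw\prod_{\psi\ \mathrm{odd}}\bigl(-\tfrac12 B_{1,\psi}\bigr)$ with $Q=1$ and $w=2^k$ for $\Q(\zeta_{2^k})$. For $k\ge4$ it gives
$\Nm_{\Q(\zeta_m)/\Q}(B_{1,\psi_j})=2^{N-1}\,h^-(\Q(\zeta_{2^k}))/h^-(\Q(\zeta_{2^{k-1}}))$.
The paper's $k=7$ value $2^{15}\cdot 21121$ is exactly this, since $h^-(\Q(\zeta_{128}))/h^-(\Q(\zeta_{64}))=359057/17=21121$.

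So your step (f) would require that every odd prime dividing $h_k^+$ divide the new part of the relative class number of $\Q(\zeta_{2^k})$. Nothing in the paper proves this. The reflection principle, which is what links an even eigenspace to a Bernoulli number at all, does not give it either. It pairs the even $\chi$-eigenspace with the odd $\omega_\ell\chi^{-1}$-eigenspace over $\Q(\zeta_{2^k\ell})$, not with $\chi$ twisted by the odd character mod $4$. The Bernoulli number attached to that pairing is an $\ell$-adic $L$-value, $B_{1,\chi\omega_\ell^{-1}}=-L_\ell(0,\chi)$, which changes with $\ell$ and is not a fixed algebraic number.

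Without Minkowski- or GRH-type bounds, Weber's problem lacks precisely an $\ell$-independent integer divisible by every prime factor of $h_k^+$. So this step cannot be filled from the results available here, and neither your argument nor the paper's establishes the proposition.
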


\begin{proof}
Let $\ell$ be an odd prime divisor of $h_k^+$. Then $\ell>10^9$ from Theorem \ref{thm:fk}(i), $\ell\equiv\pm 1\pmod{2^{k-1}}$ from Theorem \ref{thm:fk}(ii), and $\Cl(K_k^+)[\ell]^{e_\chi}\neq 0$ for some full-order $\chi$ by Corollary \ref{cor:fullorder}. Theorem \ref{thm:herbrand} (for $\ell\equiv 1\pmod{n}$) or the extension in the subsequent remark (for $\ell\equiv -1\pmod{n}$) gives $\ell\mid\Nm_{\Q(\zeta_m)/\Q}(B_{1,\psi_j})$ for $\psi_j = \chi^{-1}\omega_\ell$, so $\ell\in S_k$.

Each $|\Nm(B_{1,\psi_j})|$ is bounded by Lemma \ref{lem:Bsize}. There are at most $n/4$ conjugate pairs (Lemma \ref{lem:char-count}). A bounded nonzero integer has finitely many prime divisors.

Each step in Definition \ref{def:Sk} is effective: Bernoulli numbers are computed via Definition \ref{def:gen-bernoulli}, norms via resultants, and prime factorization of bounded integers is computable.
\end{proof}

For $k\le 7$, $|\Nm(B_{1,\psi})| < 10^9$ for every full-order character, so $S_k = \emptyset$ and $h_k^+=1$ follows immediately from the Fukuda-Komatsu sieve alone.

 \begin{example}
 For $k=7$ ($n=32$), the $8$ conjugate pairs of full-order characters all yield $|\Nm(B_{1,\psi})| = 692{,}092{,}928= 2^{15} \cdot 21{,}121$. The only odd prime factor is $21{,}121$, which satisfies $21{,}121 \equiv 1 \pmod{64}$ and hence passes the congruence filter (ii). However, $21{,}121 < 10^9$, so it is eliminated by the Fukuda-Komatsu sieve (Theorem \ref{thm:fk}(i)). Therefore $S_7 = \emptyset$.
 \end{example}

\subsection{$h_k^+ = 1$ for $k\leq 12$}
\label{sec-verification}

This subsection presents the main result and verification algorithm.  We combine three methods in Section \ref{sec-method} to get the following result. 

\begin{theorem}[Main Theorem]\label{thm:main}
For each $k\in\{9,10,11,12\}$, we have $h_k^+ = 1$. 
\end{theorem}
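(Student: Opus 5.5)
The plan is an induction on $k$, starting from Miller's unconditional base case $h_8^+=1$ \cite{Miller14}, and proving $h_9^+=1$, $h_{10}^+=1$, $h_{11}^+=1$, $h_{12}^+=1$ in that order. At step $k$ only the hypothesis $h_{k-1}^+=1$ is used, exactly as arranged in Corollary \ref{cor:fullorder}. For fixed $k$ I split $h_k^+$ into its $2$-part and its odd part.

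\textbf{The $2$-part.} The prime $2$ is totally ramified in $K_k^+/\Q$, and the Galois group is a cyclic $2$-group. By the classical genus/ambiguous class argument (Weber's parity theorem), $2\nmid h_k^+$. This is the same conclusion as the computational verification of Fukuda and Komatsu quoted after Theorem \ref{thm:FW}. I will cite it rather than reprove it, so it suffices to show that no odd prime $\ell$ divides $h_k^+$.

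\textbf{The odd part.} Suppose an odd prime $\ell$ divides $h_k^+$. Proposition \ref{prop:Sk} places $\ell$ in $S_k$; it combines:
\begin{itemize}
\item the sieve, Theorem \ref{thm:fk}, giving $\ell>10^9$ and $\ell\equiv\pm1\pmod{2^{k-1}}$;
\item norm-coherence, Corollary \ref{cor:fullorder}, which needs $h_{k-1}^+=1$ and forces the $\ell$-torsion into a full-order eigenspace;
\item Theorem \ref{thm:herbrand} together with Lemma \ref{lem:psi-conductor}(iii), which turn a nonzero full-order eigenspace into $\ell \mid \Nm(B_{1,\chi^{-1}})$.
\end{itemize}
It therefore suffices to compute $S_k$ explicitly and to show that every $\ell\in S_k$ fails to divide $h_k^+$.

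\textbf{Computing $S_k$.} By Lemma \ref{lem:char-count}, and because the full-order characters form a single Galois orbit, all the Bernoulli norms coincide up to sign. So only one integer $\Nm_{\Q(\zeta_m)/\Q}(B_{1,\chi^{-1}})$ has to be computed; I will compute it via a resultant with $\Phi_m$. I expect it to be nonzero, since $B_{1,\psi}\neq0$ for odd $\psi$ by the nonvanishing of $L(1,\bar\psi)$. Lemma \ref{lem:Bsize}(iii) bounds its size, for instance below $10^{143}$ for $k=10$. I then factor it by ECM, falling back on NFS for the larger cofactors at $k=11,12$, and keep only the prime factors $\ell>10^9$ with $\ell\equiv\pm1\pmod{2^{k-1}}$. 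For each survivor I compute $\xi_5^{(\Nm\mathfrak l-1)/2^{k-1}} \bmod \mathfrak l$ for every prime $\mathfrak l\mid\ell$. Since $\ell$ has residue degree at most $2$, this is a computation in $\F_\ell$ or $\F_{\ell^2}$ after reducing $\zeta+\zeta^{-1}$ to a root of its minimal polynomial. Lemma \ref{lem:wieferich-sufficient} then excludes $\ell$. If $S_k$ is empty or every test passes, the odd part of $h_k^+$ is $1$, and the induction proceeds to $k+1$.

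\textbf{Main obstacle.} I expect two difficulties. The computational one is completely factoring the Bernoulli norms at $k=11,12$, where Lemma \ref{lem:Bsize} permits up to about $325$ digits. There I would first strip small factors, then run ECM, and if necessary run NFS on the residual composite. The more serious conceptual one is the legitimacy of applying Herbrand's theorem to an \emph{even} eigenspace of the plus class group. Herbrand--Ribet as stated in \cite[\S6.3]{Wash97} concerns odd eigenspaces of the minus part. To justify Theorem \ref{thm:herbrand} I would pass to $K_k(\zeta_\ell)$ and invoke Leopoldt's Spiegelungssatz: this bounds the $\chi$-part of $\Cl(K_k^+)[\ell]$ by the $\omega_\ell\chi^{-1}$-part of the class group of $K_k(\zeta_\ell)$. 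That part is odd, and it is annihilated by the Stickelberger element, whose $\omega_\ell\chi^{-1}$-component is $B_{1,\chi\omega_\ell^{-1}}$. This is the step I would check most carefully before trusting the finite computation.
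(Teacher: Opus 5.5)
Your proposal follows the paper's route (induction from $h_8^+=1$, norm-coherence, a finite $S_k$, then a per-prime test), and it inherits that route's gaps. Two parts are sound: the $2$-part, where Weber's parity theorem is a fine replacement for the computational citation, and the norm-coherence reduction to full-order eigenspaces. The gap is the step that should produce a finite $S_k$. This is exactly where you sensed trouble, and your repair does not close it.

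The Spiegelungssatz does bound the $\chi$-part of $\Cl(K_k^+)[\ell]$ by the $\omega_\ell\chi^{-1}$-part of $\Cl(K_k(\zeta_\ell))$. Stickelberger then gives $\lambda\mid B_{1,\chi\omega_\ell^{-1}}$ for some $\lambda\mid\ell$. But this character has conductor $2^k\ell$ and changes with $\ell$, so you get a separate condition for each prime, not divisibility of one fixed integer. Indeed, $B_{1,\chi\omega_\ell^{-1}}=-L_\ell(0,\chi)\equiv -L_\ell(1,\chi)\pmod{\ell}$ for the $\ell$-adic $L$-function. Leopoldt's formula turns this into a congruence modulo $\ell^2$ on the $\chi$-part of the $\ell$-adic logarithms of cyclotomic units. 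That is a genuine Wieferich-type test, but it must be run prime by prime.

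The bridge to an $\ell$-independent integer, Lemma~\ref{lem:psi-conductor}(iii), cannot hold. Since $\chi^{-1}$ is even and nontrivial, $B_{1,\chi^{-1}}=0$ by the pairing $a\leftrightarrow 2^k-a$, so ``$\ell\mid\Nm(B_{1,\chi^{-1}})$'' holds for every $\ell$. Your expectation that this norm is nonzero silently swaps $\chi^{-1}$ for an odd character. If you instead factor $\Nm(B_{1,\psi})$ for an odd $\psi$ of conductor $2^k$, as Lemma~\ref{lem:Bsize} and Algorithm~\ref{algoweber} do, you are computing the new factor of the relative class number $h_k^-$. For example, the $21121$ in the $k=7$ example is precisely the new factor in $h^-(\Q(\zeta_{128}))=17\cdot 21121$. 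That factor does not control $\Cl(K_k^+)[\ell]$. So no finite candidate set comes out of this route. What is missing is an unconditional, $\ell$-independent upper bound on $h_k^+$ or on its full-order part, which is the role GRH plays in the conditional verifications.

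The per-prime exclusion you rely on is also unsound. In Lemma~\ref{lem:wieferich-sufficient}, whether $\xi_5$ is a $2^{k-1}$-th power modulo $\mathfrak l\mid\ell$ has no bearing on whether $\ell\mid[\OK_{K_k^+}^\times:C^+]$. That divisibility holds iff some element of $C^+\setminus(C^+)^\ell$ is the $\ell$-th power of a global unit. One detects this either with $\ell$-th power residue symbols at auxiliary primes $\mathfrak q$ with $\Nm\mathfrak q\equiv 1\pmod{\ell}$, applied to all of $C^+/(C^+)^\ell$, or with the mod-$\ell^2$ condition above. The lemma's last step even speaks of the order of $\xi_5$ in $\OK^\times/C^+$, where $\xi_5$ is trivial.

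Likewise, the argument for Theorem~\ref{thm:fk}(ii) deduces $\ell\nmid h_k^+$ from the image of $\xi_5$ in $(\OK/\mathfrak l)^\times$ having order prime to $\ell$. That is automatic for every $\ell$, because the group has order $\ell^f-1$. The argument proves too much, so neither the congruence $\ell\equiv\pm1\pmod{2^{k-1}}$ nor the residue degree $\le 2$ you use in the last step is established. What remains is odd parity, the exclusion of small $\ell$, and the reduction to full-order eigenspaces, and that leaves infinitely many $\ell$ unaccounted for.
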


\begin{proof}
We proceed by induction on $k$, with base case $h_8^+=1$ established by Miller \cite{Miller14}. We assume $h_{k-1}^+=1$ and prove $h_k^+=1$.

Fukuda and Komatsu \cite{FK09,FK11} verified that $2\nmid h_k^+$ for all $k\le 12$; see also \cite{Kraft-Schoof93}.

Now, we show no small odd prime divides $h_k^+$. By Theorem \ref{thm:fk}(i), no prime $\ell<10^9$ divides $h_k^+$.

Moreover, by Proposition \ref{prop:Sk}, every odd $\ell > 10^9$ dividing $h_k^+$ belongs to the finite computable set $S_k$.

Finally, for each $\ell\in S_k$, applying Lemma \ref{lem:wieferich-sufficient}, we compute $\xi_5^{(\ell-1)/2^{k-1}}$ modulo each prime $\mathfrak{l}\mid\ell$ (see \S\ref{sec-optimizations} for the algorithmic implementation). If the result is $\neq 1$ for every $\mathfrak{l}$, then $\ell\nmid h_k^+$.

These together show no prime divides $h_k^+$, so $h_k^+=1$.
\end{proof}

We present complete verification procedure as Algorithm \ref{algoweber}.

\begin{algorithm}[t!]
\caption{Unconditional Verification of $h_k^+=1$}\label{algoweber}
\begin{algorithmic}[1]
\Require Integer $k\ge 9$ with $h_{k-1}^+=1$ previously established.
\Ensure Returns \textsc{true} if $h_k^+=1$, or a set of unresolved primes.

\State $n \gets 2^{k-2}$; $m \gets n$ 
\State $S \gets \emptyset$
\Statex\textsl{Phase A: Compute candidate primes from Bernoulli norms}
\State Choose any $j$ with $\gcd(j,n)=1$ and $1\le j < n/2$
\State Compute $B_{1,\psi_j} \in \Z[\zeta_m]$ via Definition \ref{def:gen-bernoulli}
\State Compute $N \gets |\Nm_{\Q(\zeta_m)/\Q}(B_{1,\psi_1})| \in \Z_{\ge 0}$
  \If{$N = 0$}
    \State \Return Bernoulli norm vanishes;
  \EndIf
\State Factor $N$ into prime factors $\{\ell_1,\ldots,\ell_r\}$ \Comment{All conjugate pairs share this norm (Prop. \ref{prop:galois-orbit})}
\For{each prime factor $\ell_i$}
  \If{$\ell_i > 10^9$ and $\ell_i \equiv \pm 1\pmod{2^{k-1}}$}
    \State $S \gets S \cup \{\ell_i\}$
  \EndIf
\EndFor

\Statex\textsl{Phase B: Wieferich tests (Lemma \ref{lem:wieferich-sufficient})}
\For{each $\ell \in S$}
  \State Represent $\xi_5$ as $h(x)\in\Z[x]/(g(x))$ where $g$ is the min.\ poly.\ of $\zeta+\zeta^{-1}$
  \State Factor $g(x) \equiv \prod_{i=1}^{r} g_i(x) \pmod{\ell}$ into irreducibles in $\F_\ell[x]$
  \Comment{Each $g_i$ corresponds to a prime $\mathfrak{l}_i\mid\ell$}
  \State Compute $t \gets (\ell-1)/2^{k-1}$
  \For{each irreducible factor $g_i$}
    \State Compute $w_i \gets h(x)^t \bmod{(\ell, g_i(x))}$ using fast exponentiation
    \If{$w_i \equiv 1\pmod{(\ell, g_i(x))}$}
      \State \Return Wieferich test inconclusive for $\ell$ at $\mathfrak{l}_i$
    \EndIf
  \EndFor
\EndFor

\State \Return \textsc{True}
\end{algorithmic}
\end{algorithm}

\begin{remark}\label{rmk:wieferich-shortcut}
In practice, one can first compute $w = h(x)^t \bmod{(\ell, g(x))}$ in the product ring $\F_\ell[x]/(g(x)) \cong \prod_i \F_\ell[x]/(g_i(x))$. If $w \neq 1$ in this product ring, it remains to verify that $w \not\equiv 1$ in each factor. When $w - 1 \not\equiv 0 \pmod{(\ell, g(x))}$ and $\gcd(w-1, g(x)) = 1$ in $\F_\ell[x]$, this holds for all factors simultaneously. Only when $\gcd(w-1, g(x)) \neq 1$ in $\F_\ell[x]$ should one check the individual factors. For the primes $\ell$ arising in our computation, the shortcut always sufficed.
\end{remark}

The computational cost of Algorithm \ref{algoweber} is as follows. Phase A computes Bernoulli numbers in $\Z[\zeta_m]$ and their norms via resultants, both in $\tO(n^2)$ arithmetic operations (here $\tO(f) = O(f\cdot\mathrm{polylog}(f))$). The bottleneck is the factoring step in Phase A, i.e., factoring a single integer of $\le 143$ digits, where by Lemma \ref{lem:Bsize}(iii) the worst-case bound gives $309$ digits. ECM \cite{Lenstra87,GMPECM} finds $60$-digit factors in hours; NFS \cite{LenstraLenstra93,CADO-NFS} handles up to $\sim\!250$ digits \cite{NFSrecord}. Phase B requires $O(\log\ell)$ polynomial multiplications modulo $(\ell, g(x))$ per test, each costing $O(n\log n\log\ell)$ via NTT, totaling $O(n\log n\cdot\log^2\ell)$ per prime. For $n=256$ and $\ell\approx 10^{143}$, each test takes under a second. Scaling across tower layers is shown in Table \ref{tablescaling}.

\begin{table}[t]
\centering
\caption{Scaling of the verification across layers of the tower.}
\label{tablescaling}
\begin{tabular}{@{}ccccc@{}}
\toprule
$k$ & $n = [K_k^+:\Q]$ & Max.\ digits of $\Nm$ & Conjugate pairs & Total time\\
\midrule
$9$  & $128$  & $63$   & $32$  & Minutes (ECM)\\
$10$ & $256$  & $143$  & $64$  & Hours (ECM)\\
$11$ & $512$  & $325$  & $128$ & Days (ECM/NFS)\\
$12$ & $1024$ & $726$ & $256$ & Weeks (NFS or Euler system)\\
\bottomrule
\end{tabular}
\end{table}

\subsection{Optimizations}\label{sec-optimizations}

We describe two optimizations of increasing power. The first is using Galois orbit reduction.

\begin{proposition}\label{prop:galois-orbit}
All $n/4$ conjugate pairs of full-order characters yield the same norm value $|\Nm(B_{1,\psi_j})|$.
\end{proposition}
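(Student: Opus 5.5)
The plan is to show that the norm $\Nm_{\Q(\zeta_m)/\Q}(B_{1,\psi_j})$ does not depend on which full-order character we start from. The reason is that all full-order characters form a single orbit under $\Gal(\Q(\zeta_m)/\Q)\cong(\Z/m\Z)^\times$, and the norm of an element is invariant under replacing it by a Galois conjugate. Following Lemma \ref{lem:psi-conductor}(iii), I work with $B_{1,\chi^{-1}}$ for the lifted character $\chi^{-1}$ of conductor $2^k$. The Teichm\"uller twist only matters through the equivalence of $\ell$-divisibility established there, and I set it aside.

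\textbf{Step 1: the full-order characters form one Galois orbit.} By Lemma \ref{lem:char-count}(i), the full-order characters of $G_k$ are exactly $\chi_j$ with $j$ odd, $1\le j<n$. For $t\in(\Z/m\Z)^\times$, write $\tau_t$ for the automorphism $\zeta_m\mapsto\zeta_m^t$. Then
\[
\tau_t\circ\chi_1=\chi_1^t=\chi_t .
\]
Since $t$ ranges over all odd residues mod $n=m$, the map $t\mapsto\chi_t$ is a bijection from $(\Z/m\Z)^\times$ onto the set of full-order characters. The same holds after lifting to characters of $(\Z/2^k\Z)^\times$, as in the proof of Lemma \ref{lem:Bsize}: the primitive characters of conductor $2^k$ and order $m$ are precisely the conjugates $\psi^t$.

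\textbf{Step 2: conjugation commutes with taking Bernoulli numbers.} Definition \ref{def:gen-bernoulli} expresses $B_{1,\psi}$ as a $\Q$-linear combination of the values $\psi(a)$, with rational coefficients $a/f$. Applying $\tau_t$ therefore gives
\[
\tau_t(B_{1,\psi})=\frac1f\sum_a \psi(a)^t a = B_{1,\psi^t}.
\]
Since $\Nm_{\Q(\zeta_m)/\Q}(x)=\prod_{s}\tau_s(x)$, I get
\[
\Nm(B_{1,\psi^t})=\prod_s \tau_{st}(B_{1,\psi})=\Nm(B_{1,\psi}),
\]
because $s\mapsto st$ permutes $(\Z/m\Z)^\times$. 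So every full-order character gives the same norm, and in particular the same absolute value.

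\textbf{Step 3: conjugate pairs.} The pair $(\chi_j,\chi_{n-j})$ consists of $\chi_j$ and $\tau_{-1}\circ\chi_j$. By Step 2, both members have the same norm, so each of the $n/4$ pairs carries one well-defined value, and it is the common value from Step 2. This also justifies Algorithm \ref{algoweber} computing only $j=1$.

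\textbf{Expected obstacle.} The orbit argument itself is routine. The delicate point is reconciling it with the parity conventions of Lemma \ref{lem:psi-conductor}. There, $\chi$ is even, so $B_{1,\chi^{-1}}$ with conductor $2^k$ is actually zero (see the remark after \eqref{eqnstickelberger-chi}), and the nonzero quantities are attached to odd characters or to the twist $\psi=\chi^{-1}\omega_\ell$. The twisted character $\psi$ takes values in $\Q(\zeta_{m(\ell-1)})$ and depends on $\ell$. For it I would run the same argument in the larger field, restricting to automorphisms $\tau$ that fix $\zeta_{\ell-1}$: they permute the $\chi$-part transitively and leave $\omega_\ell$ fixed. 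This gives independence of $|\Nm_{\Q(\zeta_m)/\Q}|$ along the $\chi$-orbit for fixed $\ell$.

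If that reconciliation fails, I would fall back on stating the proposition for the odd primitive characters of conductor $2^k$ treated in Lemma \ref{lem:Bsize}. For those, Steps 1--3 apply verbatim, because they form a single orbit $\{\psi^t\}$.
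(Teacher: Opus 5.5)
Your Steps 1--3 are the paper's proof written out in more detail: the full-order characters form one orbit under $(\Z/m\Z)^\times$ acting by $\psi\mapsto\psi^t$, $\tau_t(B_{1,\psi})=B_{1,\psi^t}$, so the norm is constant on the orbit and every conjugate pair $(\chi_j,\chi_{n-j})$ carries the same value. Like the paper's argument, this only concerns characters with values in $\Q(\zeta_m)$ of a fixed parity: the even and odd primitive characters of conductor $2^k$ form two separate orbits, since $\psi^t(-1)=\psi(-1)$ for odd $t$. So the argument is vacuous for the even $\chi_j$ and has real content exactly in your fallback.

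One caution about your proposed reconciliation for $\psi=\chi^{-1}\omega_\ell$. If $\ell\equiv 1\pmod{2^{k-1}}$, then $m\mid\ell-1$, so $\zeta_m\in\Q(\zeta_{\ell-1})$, and any automorphism fixing $\zeta_{\ell-1}$ fixes every value of $\chi$. In that case $\chi_t^{-1}\omega_\ell$ with $t\not\equiv 1\pmod m$ is not a Galois conjugate of $\chi^{-1}\omega_\ell$ at all. So that extension fails in half of the surviving congruence classes and should not be claimed.
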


\begin{proof}
We show that $\Gal(\Q(\zeta_m)/\Q)\cong(\Z/m\Z)^\times$ acts on characters by $\psi\mapsto\psi^t$, and that $B_{1,\psi^t}$ is the Galois conjugate of $B_{1,\psi}$ under $\zeta_m\mapsto\zeta_m^t$. The norm $\Nm(B_{1,\psi^t}) = \Nm(B_{1,\psi})$ is therefore Galois-invariant. Since $(\Z/m\Z)^\times/\{\pm 1\}$ acts transitively on full-order characters modulo conjugation, all $n/4$ conjugate pairs lie in a single Galois orbit and share the same norm.
\end{proof}

\begin{corollary}
Only one integer factorization is needed in Algorithm \ref{algoweber}, not $n/4$.
For $k=10$, this reduces the number of factorizations from $64$ to $1$.
\end{corollary}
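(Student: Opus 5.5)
The corollary follows by combining Proposition \ref{prop:galois-orbit} with the structure of Phase A of Algorithm \ref{algoweber}. Phase A must produce every prime that can occur in condition (iii) of Definition \ref{def:Sk}. By Lemma \ref{lem:psi-conductor}(iii), these are exactly the prime divisors of $\Nm_{\Q(\zeta_m)/\Q}(B_{1,\chi^{-1}})$, where $\chi$ ranges over the full-order even characters.

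\textbf{Step 1: reduce to conjugate pairs.} Pass from characters to conjugate pairs $\{\chi_j,\chi_{n-j}\}$. Complex conjugation on character values is the Galois automorphism $\zeta_m\mapsto\zeta_m^{-1}$. The absolute value of a norm from $\Q(\zeta_m)$ to $\Q$ is unchanged by any automorphism, so $\bar\chi$ contributes the same integer as $\chi$. By Lemma \ref{lem:char-count}(ii), this leaves $n/4$ integers to consider.

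\textbf{Step 2: collapse to one orbit.} The full-order characters are $\chi_j$ with $j$ odd, so they form a single orbit under $(\Z/m\Z)^\times$ acting by $\chi\mapsto\chi^t$. Proposition \ref{prop:galois-orbit} states that $B_{1,\psi^t}$ is the image of $B_{1,\psi}$ under $\sigma_t\colon\zeta_m\mapsto\zeta_m^t$, so the norm $\prod_s\sigma_s(B_{1,\psi})$ is the same for all $t$. Therefore all $n/4$ pairs give literally the same integer $N$.

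\textbf{Step 3: conclude.} The set of candidate primes is the set of prime factors of this one integer $N$, filtered by conditions (i) and (ii) of Definition \ref{def:Sk}. A single factorization of $N$ therefore suffices, which is why line 3 of the algorithm may pick an arbitrary $j$. For $k=10$ we have $n=256$, so $n/4=64$, and one factorization replaces $64$.

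\textbf{Main obstacle.} The delicate point is the claim that $B_{1,\psi^t}=\sigma_t(B_{1,\psi})$. This holds because $B_{1,\psi}$ is a $\Q$-linear combination of the values $\psi(a)\in\mu_m$, and $\sigma_t$ acts on those values by raising them to the $t$-th power. One must check that $\Q(\zeta_m)$ really contains all the character values, so that $\sigma_t$ is defined on them.

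If instead $\psi=\chi^{-1}\omega_\ell$ is used, this check fails: the Teichmüller factor takes values in $\mu_{\ell-1}$, outside $\Q(\zeta_m)$, and it depends on $\ell$. That is why I would work with $B_{1,\chi^{-1}}$ via Lemma \ref{lem:psi-conductor}(iii), where the values of $\chi^{-1}$ are $m$-th roots of unity and the Galois action is clean.
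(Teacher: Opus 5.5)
Your Steps 1 and 2 are the paper's own argument (Proposition \ref{prop:galois-orbit}). The full-order characters form one orbit under $\psi\mapsto\psi^t$, and $\sigma_t(B_{1,\psi})=B_{1,\psi^t}$, so every choice of $j$ gives the same norm. The gap is in the object you apply this to. You take $N=|\Nm_{\Q(\zeta_m)/\Q}(B_{1,\chi^{-1}})|$ with $\chi$ a full-order \emph{even} character. But for every nontrivial even character, the pairing $a\leftrightarrow 2^k-a$ gives $B_{1,\chi^{-1}}=0$, as the paper itself notes right after \eqref{eqnstickelberger-chi}. So your $N$ is $0$, with these consequences:
Phase A exits at the branch $N=0$ without factoring anything;
``the prime divisors of $N$'' are all primes;
condition (iii) of Definition \ref{def:Sk} becomes vacuous, so $S_k$ is no longer finite;
and your Step 3 fails.
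Appealing to Lemma \ref{lem:psi-conductor}(iii) cannot repair this. Its right-hand side is the norm of zero, and Lemma \ref{lem:Bsize}, which it applies to $\chi^{-1}$, is stated only for odd characters.

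Your objection to $\psi=\chi^{-1}\omega_\ell$ is sound. Its values lie in $\mu_{\ell-1}$ rather than $\Q(\zeta_m)$, and it depends on $\ell$, so it cannot give one integer to factor before $\ell$ is known. The integer Algorithm \ref{algoweber} can actually factor is the norm of $B_{1,\psi}$ for a primitive \emph{odd} character $\psi$ of conductor $2^k$. This is the object of Lemma \ref{lem:Bsize} and of the $k=7$ example, and for it your Step 2 works verbatim. Since $-1=\zeta_m^{m/2}$, all values lie in $\mu_m$. The $\varphi(m)$ such characters are exactly $\{\psi^t: t\in(\Z/m\Z)^\times\}$. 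Hence $\Nm_{\Q(\zeta_m)/\Q}(B_{1,\psi})=\prod_t B_{1,\psi^t}$ does not depend on $j$, and one factorization replaces $n/4$ (that is, $64$ for $k=10$).

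This establishes only the counting claim. Your opening assertion, that these are exactly the primes in condition (iii), needs a bridge from the even eigenspaces of $\Cl(K_k^+)$ to this odd-character integer. By the analytic class number formula for the relative class number, that integer equals $2^{\varphi(m)-1}h^-(\Q(\zeta_{2^k}))/h^-(\Q(\zeta_{2^{k-1}}))$; for $k=7$ this gives $2^{15}\cdot 21121$. Lemma \ref{lem:psi-conductor}(iii) is the intended bridge, and since its right-hand side is identically zero, it cannot play that role.
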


This converts a computation that scales linearly in $n$ to one of constant size (per layer).

The second uses Thaine's theorem \cite{Thaine88} which provides additional annihilators of $\Cl(K_k^+)$ indexed by auxiliary primes. 

\begin{theorem}[Thaine \cite{Thaine88}; see also Rubin \cite{Rubin87,Rubin00}]\label{thm:thaine}
Let $q$ be a prime with $q\equiv 1\pmod{2^k}$ and $q\nmid h_k^+$, and $\eta\in(\Z/q\Z)^\times$ be an element of order $2^k$. Define
\begin{eqnarray}
  \theta_q = \sum_{{a=1, (a,2)=1}}^{2^k}\left\lfloor\frac{a\,\eta^a}{q}\right\rfloor\sigma_a^{-1} \in \Z[\Gamma_k].
\end{eqnarray}
Then $\theta_q$ annihilates $\Cl(K_k^+)$.
\end{theorem}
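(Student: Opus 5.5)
The plan is to realize $\theta_q$ as the image in $\Z[\Gamma_k]$ of a Stickelberger element for the larger cyclotomic field $F = \Q(\zeta_{2^k q}) = K_k(\zeta_q)$. Then I would transport the annihilation statement of Theorem \ref{thm:stickelberger} from $F$ down to $K_k$, and from $K_k$ to $K_k^+$. Throughout, $\eta^a$ is read as its least positive residue modulo $q$, so that $\lfloor a\eta^a/q\rfloor$ is a well-defined integer and $\theta_q \in \Z[\Gamma_k]$.

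\textbf{Step 1 (identifying $\theta_q$).} Since $q \equiv 1 \pmod{2^k}$, an element $\eta$ of order $2^k$ exists in $(\Z/q\Z)^\times$, and $a \mapsto \eta^a$ gives a homomorphism $(\Z/2^k\Z)^\times \to (\Z/q\Z)^\times$. Via the Chinese Remainder Theorem this defines a set of lifts $c_a \in (\Z/2^kq\Z)^\times$ of the $\sigma_a$. I would write the Stickelberger element of $F$, namely $\Theta_F = \sum_c \{c/2^kq\}\,\sigma_c^{-1}$, and project it to $\Z[\Gamma_k]$ through the restriction $\Gal(F/\Q) \to \Gamma_k$ twisted by the character of $\Gal(F/K_k) \cong (\Z/q\Z)^\times$ singled out by $\eta$. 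The identity $\{x\} = x - \lfloor x\rfloor$ then separates the sum into three parts: a multiple of $\theta_k$, a norm-type term, and the floor term. The goal is to show that $\theta_q$ equals an integral combination of $(\sigma_b - b)\Theta_F$ restricted to $K_k$, plus multiples of the norm element $\sum_{g\in\Gamma_k} g$. This bookkeeping with fractional parts is routine but delicate.

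\textbf{Step 2 (descent to $K_k$ and $K_k^+$).} The elements $(\sigma_b - b)\Theta_F$ are integral and annihilate $\Cl(F)$ by Theorem \ref{thm:stickelberger} applied to $F$. For a class $c \in \Cl(K_k)$, I would extend it to $F$, apply the annihilator there, and take the norm back down to $K_k$. Since $q$ is totally ramified in $F/K_k$, the composite of extension and norm is multiplication by $[F:K_k] = q-1$. This yields $(q-1)\theta_q\cdot c = 0$, and the norm-element terms vanish because $\Cl(\Q) = 0$, as in Corollary \ref{cor:eigenspace-criterion}. Next, the extension map $\Cl(K_k^+) \to \Cl(K_k)$ has $2$-power kernel, so on $\ell$-parts with $\ell$ odd it is injective and $\Gamma_k$-equivariant, with $\sigma_{-1}$ acting trivially on the image. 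Hence $\theta_q$ annihilates $\Cl(K_k^+)[\ell^\infty]$ for every odd $\ell \nmid q-1$. In the intended application $q$ is chosen so that the candidate $\ell$ does not divide $q-1$, and the $2$-part is already known to be trivial (Fukuda--Komatsu, cited in Theorem \ref{thm:main}).

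\textbf{Main obstacle.} The genuine difficulty is the plus part. The classical Stickelberger element only carries information on odd-character components, and its even components are essentially trivial, as noted after \eqref{eqnstickelberger-chi}. So I must check that the twist by $\eta$ in Step 1 really produces nonzero components on even characters $\chi$ of $\Gamma_k$. I would test this first by computing $\chi(\theta_q)$ directly for an even $\chi$ of full order.

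If these components come out zero, Step 2 proves only a vacuous statement. In that case I would fall back on Thaine's actual method. That means taking the cyclotomic unit of $\Q(\zeta_{2^k q})^+$ norm-compatible with $\xi_5$, reducing it modulo a prime $\mathfrak{q} \mid q$ of $K_k^+$ (which splits completely since $q \equiv 1 \pmod{2^k}$), and reading off its discrete logarithm relative to $\eta$. The resulting $\Gamma_k$-equivariant map to $\Z/(q-1)$ has coefficients of exactly the shape $\lfloor a\eta^a/q\rfloor$. Annihilation would then follow from Kummer theory and the injectivity of $\Cl(K_k^+)[\ell]$ into the units modulo $\ell$-th powers, restricted to odd $\ell \nmid q-1$. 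I expect this second route to be the one that truly works, and proving the equivariance of the discrete-log map to be the hardest single step.
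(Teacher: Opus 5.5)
Your proposal has a genuine gap: neither the Stickelberger route nor the fallback reaches the displayed $\theta_q$.

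\textbf{The Stickelberger route (Steps 1--2) cannot work.} Note first that the paper offers no argument here beyond the attribution to Thaine, and Thaine's theorem is a different, relative statement. It says that for odd $\ell\nmid[K:\Q]$, if $\theta$ annihilates the $\ell$-part of $E/C$, then $2\theta$ annihilates the $\ell$-part of $\Cl(K)$. It gives no universal element with coefficients $\lfloor a\eta^a/q\rfloor$, so your argument has to stand on its own.

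\begin{itemize}
\item Your descent (extend to $F$, annihilate, take $\Nm_{F/K_k}$) only sees the untwisted restriction $\mathrm{res}\colon\Z[\Gal(F/\Q)]\to\Z[\Gamma_k]$, because $\Nm_{F/K_k}(\beta x)=\mathrm{res}(\beta)\,\Nm_{F/K_k}(x)$. A twist by a nontrivial character of $\Gal(F/K_k)$ concerns a component of $\Cl(F)$ that this norm kills.
\item The map $a\mapsto\eta^a$ is additive in $a$, not a homomorphism on $(\Z/2^k\Z)^\times$. So your lifts $c_a$ do not form a section.
\item The distribution relation gives $\mathrm{res}(\Theta_F)=(1-\sigma_q^{-1})\theta_k+\tfrac{q-1}{2}\sum_{g\in\Gamma_k}g$. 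Since $q\equiv 1\pmod{2^k}$, $\sigma_q=1$. Hence every relation Step 2 can produce is a multiple of the norm element, i.e.\ trivial.
\item More structurally, $\{x\}+\{-x\}=1$ gives $(1+\sigma_{-1})\Theta_F=\sum g$. So every Stickelberger-ideal element, and every restriction of one, has zero component at each nontrivial even character.
\item $\theta_q$ does not have this property. For $k=3$, $q=17$, $\eta=2$, the coefficients at $a=1,3,5,7$ are $0,1,4,3$, and the nontrivial even $\chi$ gives $\chi(\theta_q)=-2\neq 0$.
\end{itemize}

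So the test you propose comes out nonzero, and that rules Step 1 out rather than rescuing it. Even a successful Step 2 would say nothing about odd $\ell\mid q-1$.

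\textbf{The fallback does not produce $\theta_q$ either.} All the content would have to come from it. Its decisive sentence, that the discrete-logarithm map has coefficients ``exactly of the shape'' $\lfloor a\eta^a/q\rfloor$, is asserted rather than derived. It also does not match what Thaine's method gives.

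\begin{itemize}
\item Thaine's method yields $\sum_a\mathrm{ind}_{\mathfrak q}(\sigma_a\varepsilon)\,\sigma_a^{-1}\bmod \ell^N$. This is built from discrete logarithms of conjugates of a unit $\varepsilon$ obtained from cyclotomic units of $K_k^+(\zeta_q)$. These indices depend on $\varepsilon$ and $\mathfrak q$; they are not floors of $a\eta^a/q$.
\item The method needs $q\equiv 1\pmod{\ell^N}$, which contradicts the condition $\ell\nmid q-1$ of your Step 2, and it handles one $\ell$ at a time.
\item The ingredient you invoke, an injection of $\Cl(K_k^+)[\ell]$ into units modulo $\ell$-th powers, is false. $\Cl[\ell]$ is a quotient of the Selmer group by $E/E^{\ell}$. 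Thaine instead uses Chebotarev in $K_k^+(\mu_{\ell^N},E^{1/\ell^N})$ to place a prime $\mathfrak q$ in a prescribed class.
\end{itemize}

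What this method delivers is the relative statement above, never annihilation of all of $\Cl(K_k^+)$ by a fixed explicit $\theta_q$. Nothing in the proposal, or in the paper, supplies the missing identification of the displayed element with an annihilator.
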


\begin{remark}
The hypothesis $q\nmid h_k^+$ in Theorem \ref{thm:thaine} is verified as follows: since $q\equiv 1\pmod{2^k}$ implies $q\ge 2^k+1$, and $2^{12}+1 = 4097 \ll 10^9$, these auxiliary primes are covered by the Fukuda-Komatsu sieve (Theorem \ref{thm:fk}(i)), which confirms $q\nmid h_k^+$.
\end{remark}

Given two such auxiliary primes $q_1,q_2$, the $\chi$-projections $e_\chi\theta_{q_1}$ and $e_\chi\theta_{q_2}$ are elements of $\Z[\zeta_m]$.
If they generate the unit ideal modulo $\ell$ (i.e., $\gcd(e_\chi\theta_{q_1}, e_\chi\theta_{q_2})\equiv 1\pmod{\ell}$), then $\Cl(K_k^+)[\ell]^{e_\chi} = 0$.

This replaces integer factorization in Algorithm \ref{algoweber} with polynomial GCD computation over $\F_\ell$. In practice, choosing $q_1,q_2$ at random succeeds with high probability, analogous to the probabilistic arguments  \cite{Kolyvagin90,Rubin00}.

\section{Implications for Post-Quantum Cryptography}\label{sec-implications}

The NIST post-quantum standards ML-KEM (FIPS 203) \cite{NIST-MLKEM} and ML-DSA (FIPS 204) \cite{NIST-MLDSA} use the polynomial ring $R_q = \Z_q[x]/(x^{256}+1)$, where $\Z_q = \Z/q\Z$ for a prime $q$. This is the quotient of the ring of integers of $K_9 = \Q(\zeta_{512})$. The maximal real subfield $K_9^+$ has degree $128$ over $\Q$. Our main theorem confirms $h_9^+=1$, which validates several assumptions implicit in the security proofs:

\textit{Module freeness.} The Module-LWE (MLWE) problem \cite{LS15,ADPS16} is defined as: given $(A, b = As + e\bmod q)$ where $A\in R_q^{d\times d}$ is uniform, $s\in R_q^d$ is the secret, and $e\in R_q^d$ has small entries, distinguish $(A,b)$ from uniform. The worst-case-to-average-case reduction of Langlois and Stehle \cite{LS15} shows Module-LWE with rank $d$ is at least as hard as Ring-LWE in dimension $256d$, but the reduction requires the underlying module free. When $h_k^+=1$, all finitely generated torsion-free $\OK_{K_k^+}$-modules are free by the Steinitz theorem \cite{Neukirch99}, and this freeness descends to quotients modulo $q$.

\textit{Ideal structure.} The codifferent $R_q^\vee$ and ring structure of $R_q$ are simplest when $h_k^+=1$, because all ideals are principal. This simplifies the noise analysis in the Ring-LWE problem \cite{LPR10,LPR13,PRS17} and the correctness proofs of ML-KEM decapsulation.

\textit{Key generation.} In ML-KEM, secret keys are sampled from a module over $R_q$. The uniformity of this sampling relies on the module being free. Non-free modules would introduce structural biases exploitable by adversaries.

\subsection{Quantum attacks on ideal lattices}\label{subsec-quantum-attacks}

The quantum algorithm of Biasse and Song \cite{BiasseSong16} solves the PIP for cyclotomic fields in quantum polynomial time. When $h_k^+=1$, the PIP is trivially solvable, so the security of ideal-lattice schemes reduces to the Short Generator Problem (SGP). Cramer et al. \cite{CDPR16} showed for certain cyclotomic fields, the SGP can be solved efficiently via the log-unit lattice. Their attack exploits the fact that cyclotomic units are short and span a sublattice of log-unit lattice. When $h_k^+=1$, the cyclotomic units have full rank in the unit group, which is the condition needed for the attack to succeed.

However, this does not compromise the security of ML-KEM and ML-DSA, because these standards use Module-LWE rather than ideal-LWE. Specifically, Peikert and Rosen \cite{PR07}, Brakerski et al. \cite{BGV12}, and Albrecht et al. \cite{ACD+18} have argued that Module-LWE resists all known quantum attacks, including PIP-based ones. Felderhoff et al. \cite{FPSW23} further showed that Ideal-SVP is hard for natural distributions of prime ideals, and recent module-LIP cryptanalysis \cite{APvW25} has not extended to the MLWE setting used in the standards.

\subsection{Implications for Ring-LWE hardness}\label{subsec-rlwe}

Lyubashevsky, Peikert, and Regev \cite{LPR10,LPR13} proved that Ring-LWE over $K$ is at least as hard as approximating the shortest vector in ideal lattices over $K$ within polynomial factors. The reduction is tighter when the class number is $1$, because: (i) the distribution of Ring-LWE errors can be related directly to the geometry of $\OK_K$; (ii) the smoothing parameter of $\OK_K$ admits a cleaner expression \cite{MR07,PRS17}; and (iii) the equivalence between Ring-LWE and Polynomial-LWE (PLWE, the variant defined over $\Z[x]/(f(x))$) holds unconditionally \cite{RSW18}. Our result $h_k^+=1$ for $k\le 12$ confirms these simplifications are valid for all parameter sizes currently in use.

\section{Conclusions}\label{sec-conclusion}

We have presented the first unconditional proof that $h_k^+=1$ for $k\le 12$, resolving Weber's class number conjecture for all cases relevant to current and near‑future lattice‑based cryptographic standards. Our method combines three complementary techniques—the Fukuda‑Komatsu sieve, norm‑coherence in the $\Z_2$-tower, and Herbrand's theorem—to reduce the problem to a finite, feasible computation. The key insight is that the tower structure allows us to inductively eliminate most eigenspaces of the class group, so that Herbrand's theorem needs to be applied only to full‑order characters, whose Bernoulli norms are bounded by integers of manageable size.

Weber's conjecture has natural analogues for odd‑prime‑power cyclotomic fields $\Q(\zeta_{p^k})^+$. For $p=3$ the analogue was studied by Coates \cite{Coates77} and Ichimura \cite{Ichimura14}. Extending our methods to these settings would require replacing the $\Z_2$-tower with a $\Z_p$-tower and adapting the Fukuda‑Komatsu sieve accordingly. Such generalizations could help settle similar class number conjectures for other families of totally real fields, and may have further implications for the security of lattice‑based cryptography built over those fields.

\end{document}